\newif\ifSubmission
 \Submissiontrue

\newif\ifComments
\Commentsfalse

\newif\ifAnonymous
\ifSubmission
    \Commentsfalse
\else
    \Commentstrue
    \Anonymousfalse
\fi

\documentclass[12pt]{article}
\usepackage{geometry}
\usepackage[utf8]{inputenc}

\ifAnonymous
    \author{}
\else
    \author{
    Andrea Coladangelo\thanks{University of Washington. Email: \texttt{coladan@cs.washington.edu}} 
    \quad Dakshita Khurana\thanks{University of Illinois Urbana-Champaign and NTT Research. Email: \texttt{dakshita@illinois.edu}} 
    \quad Saachi Mutreja\thanks{Columbia University. Email: \texttt{sm5540@columbia.edu}}\\
    \quad Bhaskar Roberts\thanks{UC Berkeley and NTT Research. Email: \texttt{bhaskarroberts@gmail.com}}
    \quad Joseph Slote\thanks{University of Washington. Email: \texttt{jslote@cs.washington.edu}}
    \quad Avishay Tal\thanks{UC Berkeley. Email: \texttt{atal@berkeley.edu}}}
\fi
\date{}

\usepackage{libertine}

\usepackage{amsthm}

\usepackage[T1]{fontenc}

\usepackage{enumitem}
\usepackage{graphicx} 
\usepackage{physics}
\usepackage{breakcites}
\usepackage{amsmath,thmtools,bm}
\usepackage{amssymb}
\usepackage[libertine]{newtxmath}
\usepackage{bm}
\usepackage[colorlinks=true, allcolors=blue]{hyperref}
\usepackage[dvipsnames]{xcolor}
\usepackage{bbm}
\usepackage{cleveref}
\usepackage{mathtools}
\usepackage{algorithm, algpseudocode, float, algorithmicx}
\usepackage{mdframed}
\usepackage{caption}

\usepackage{multirow}
\usepackage{comment}

\usepackage[english]{babel}

\newtheorem{theorem}{Theorem}[section]
\crefname{theorem}{theorem}{theorems}
\Crefname{theorem}{Theorem}{Theorems}

\newtheorem{question}{Question}

\newtheorem*{remark*}{Remark}

\newtheorem{construction}[theorem]{Construction}
\crefname{construction}{construction}{constructions}
\Crefname{construction}{Construction}{Constructions}

\newtheorem{definition}[theorem]{Definition}
\crefname{definition}{definition}{definitions}
\Crefname{definition}{Definition}{Definitions}

\newtheorem{lemma}[theorem]{Lemma}
\crefname{lemma}{lemma}{lemmas}
\Crefname{lemma}{Lemma}{Lemmas}

\newtheorem{corollary}[theorem]{Corollary}
\crefname{corollary}{corollary}{corollaries}
\Crefname{corollary}{Corollary}{Corollaries}

\newtheorem{claim}[theorem]{Claim}
\crefname{claim}{claim}{claims}
\Crefname{claim}{Claim}{Claims}

\newtheorem{specialclaim}{Claim}

\crefname{specialclaim}{claim}{claims}
\Crefname{specialclaim}{Claim}{Claims}
\newtheorem{speciallemma}[specialclaim]{Lemma}
\crefname{speciallemma}{lemma}{lemmas}
\Crefname{speciallemma}{Lemma}{Lemmas}

\newenvironment{proofofclaim}
  {\pushQED{\qed}%
   \begin{proof}[Proof of claim]}
  {\end{proof}}

\newcommand{\Prove}{\mathsf{Prove}}
\newcommand{\Verify}{\mathsf{Verify}}
\newcommand{\poly}{\mathsf{poly}}

\newcommand{\RO}{H}
\newcommand{\ro}{h}
\newcommand{\minent}{{h_\infty}}
\newcommand{\Minent}{{H_\infty}}

\newcommand{\secp}{\lambda}
\newcommand{\negl}{\mathsf{negl}}

\newcommand{\bit}{\{0,1\}}
\newcommand{\hw}{\operatorname{hw}}
\newcommand{\zero}{\mathbf 0}
\newcommand{\Decode}{\operatorname{Decode}}

\newcommand{\GRSUniqueDecode}{\operatorname{GRSUniqueDecode}}
\newcommand{\FRS}{\operatorname{FRS}}
\newcommand{\GRS}{\operatorname{GRS}}
\newcommand{\GoodErrors}{\mathsf{GoodErrors}}
\newcommand{\bbC}{\mathbb{C}}
\newcommand{\bbE}{\mathbb{E}}
\newcommand{\bbF}{\mathbb{F}}

\newcommand{\bbN}{\mathbb{N}}
\newcommand{\bbR}{\mathbb{R}}

\newcommand{\bfa}{\mathbf{a}}

\newcommand{\bfe}{\mathbf{e}}

\newcommand{\bfh}{\mathbf{h}}

\newcommand{\bfX}{\mathbf{X}}
\newcommand{\bfx}{\mathbf{x}}

\newcommand{\bfz}{\mathbf{z}}

\newcommand{\bfu}{\mathbf{u}}

\newcommand{\bfv}{\mathbf{v}}
\newcommand{\bfw}{\mathbf{w}}
\newcommand{\cE}{\mathcal E}
\newcommand{\Good}{\mathsf{Good}}
\newcommand{\Bad}{\mathsf{Bad}}

\newcommand{\BadErrors}{\mathsf{BadErrors}}
\newcommand{\cA}{\mathcal{A}}

\newcommand{\cD}{\mathcal{D}}

\newcommand{\cH}{\mathcal{H}}

\allowdisplaybreaks

\renewcommand{\L}{\mathsf{Light}}
\renewcommand{\H}{\mathsf{Heavy}}
\newcommand{\F}{\mathcal{F}}
\newcommand{\bias}{p}

\DeclareSymbolFont{yhlargesymbols}{OMX}{yhex}{m}{n} \DeclareMathAccent{\reallywidehat}{\mathord}{yhlargesymbols}{"62}

\ifComments
    \newcommand{\authnote}[3]{\textcolor{#3}{[{\footnotesize {\bf #1:} { {#2}}}]}}
\else
    \newcommand{\authnote}[3]{}
\fi

\title{Unconditional Certified Randomness without Structure}

\begin{document}

\setcounter{tocdepth}{2}
\maketitle

\begin{abstract}
    We obtain a certified randomness protocol in the quantum random oracle model. The protocol is non-interactive and publicly verifiable with a classical verifier, and is based on Yamakawa and Zhandry's proof of quantumness [JACM'24]. We prove unconditional security of this protocol against adversaries making subexponentially-many adaptive quantum queries to the random oracle.
    
    Prior work on certified randomness relative to a random oracle additionally assumed the Aaronson--Ambainis conjecture or proved security only against low query-depth adversaries.
\end{abstract}
\thispagestyle{empty}
\newpage
\tableofcontents
\thispagestyle{empty}

\newpage

\setcounter{page}{1}

\section{Introduction}\label[section]{sec:intro}
Randomness is an inherent feature of quantum mechanics: even complete knowledge of a quantum state does not, in general, determine the outcome of a measurement. Certified randomness takes this idea a step further: it allows a classical user to verify the randomness of a measurement produced by an untrusted quantum device. 

Protocols for certified randomness have a long tradition in quantum information theory. They were first studied in the so-called \emph{device-independent} setting, where a classical user interacts with two or more non-communicating quantum devices and certifies randomness through the violation of a Bell inequality~\cite{Col06, PAM+10, VV12, miller2016robust, miller2017universal, arnon2018practical}. More recently, a line of work initiated by Brakerski et al.~\cite{BCM+21} has shown that randomness can also be certified using a \emph{single} quantum device, replacing the physical assumption of non-communication by an assumption on the computational power of the device. The first protocols were based on the quantum hardness of the Learning With Errors problem (LWE) \cite{mahadev2022efficient}. In a later work, Aaronson and Hung~\cite{AH23} showed that sampling-based quantum advantage experiments can also be turned into certified randomness protocols, under a novel, non-standard complexity-theoretic assumption.

Both the device-independent and computational approaches have limitations and leave important desiderata unmet. 
In the device-independent setting, while security is information-theoretic, it relies on physically enforcing a non-communication condition between spatially separated devices, which is generally impractical. The existing LWE-based protocols~\cite{BCM+21, mahadev2022efficient} require some rounds of interaction with the quantum device, in between which the device should maintain a coherent state. Moreover, they produce randomness that is only privately verifiable (using a trapdoor held only by the verifier). The sampling-based protocol of \cite{AH23} has the advantage of being likely within reach of the capabilities of current devices, but verification requires exponential time. 

\paragraph{Certifiable randomness in the quantum random oracle model (QROM).} Recently, Yamakawa and Zhandry~\cite{YZ24} (henceforth YZ, for short) proposed an alternative approach to certified randomness. This came as a (conditional) corollary of their breakthrough complexity theory result, which showed an unconditional quantum advantage for an NP-search problem in the quantum random oracle model (QROM). Informally, the NP-search problem is the following: given a suitable code $C \subset \Sigma^n$, find a codeword $x \in C$ such that $H_i(x_i) = 0$ for all coordinates $i \in [n]$, where $H = (H_1, \ldots, H_n):\Sigma^n\to\{0,1\}^n$ is a random oracle mapping alphabet symbols to bits. They then showed that there exists an efficient quantum algorithm solving the problem with polynomially many queries to $H$, whereas exponentially many queries are necessary for any classical algorithm. 
 
Now, the YZ quantum algorithm has the following curious property: it does not return a fixed correct solution, but rather an (essentially) uniformly random one. As pointed out by Yamakawa and Zhandry, the Aaronson--Ambainis conjecture~\cite{AA14} implies that this is not a coincidence: for \emph{any} quantum advantage relative to a random oracle alone, the quantum algorithm that solves the problem must generate some entropy in its output. Applied to the YZ problem, this yields a certified randomness protocol assuming the Aaronson--Ambainis conjecture is true. 

Moreover, this protocol has very desirable features. It involves a \emph{single prover} (so no communication assumption is needed). It is \emph{non-interactive}: the quantum device simply produces a candidate solution, with no further interaction with the verifier.
It is \emph{publicly verifiable}, since anyone with access to the random oracle can efficiently check whether the reported output is a valid solution. So reliance on the Aaronson--Ambainis conjecture appears to be the main shortcoming of the YZ certified randomness protocol. While this conjecture is widely believed to be true, it has resisted proof despite concerted study; \textit{e.g.}~\cite{montanaro2012some,BansalSinhaDeWolf,LovettZhang,GutierrezAA, bhattacharya2025random}. Motivated by the above discussion, we focus here on the following question:
\begin{center}
\emph{Does certified randomness exist unconditionally in the QROM?\\} 
\end{center}
In particular, we would like such a protocol to retain the desirable features mentioned above: single-prover, non-interactivity, and public verifiability. An affirmative answer to this question would also provide further evidence for the veracity of the Aaronson--Ambainis conjecture, as it would settle one of its important consequences.

\subsection{Our result}
In this work, we answer the above question affirmatively. We show that a slight variant of the YZ problem unconditionally certifies randomness in the QROM.
Moreover, the resulting protocol is non-interactive and publicly verifiable.

The variant differs from the original YZ problem only in the choice of code $C$ and the distribution of the random oracle $H = (H_1, \ldots, H_n)$ (which in turn can be simulated in the standard QROM without loss).
In the original problem, each $H_i : \Sigma \rightarrow \{0,1\}$ maps every symbol independently to a uniformly random bit.
In our variant, the bits are instead \emph{biased}: independently for every input, $H_i$ outputs $0$ with probability $1-p$, and $1$ with probability $p$, where $p = 1/\Theta(\sqrt{n})$.
We note that this biased random oracle can be constructed straightforwardly in the QROM itself.
We discuss the new choice of code $C$ and its properties in the technical overview (Section~\ref{sec:overview}). We refer to this YZ variant as the \emph{biased YZ problem}. We show the following:
\begin{theorem}[Informal]
\label{thm:main-informal}
The biased YZ problem in the QROM is a certified min-entropy protocol (as formally defined in Definition~\ref{def:certifiable-min-entropy}) with the following guarantee: for every $c\in(0,1/2)$, the protocol yields min-entropy at least $\Omega(n^c)$ with security against adversaries making at most $2^{o(n^c)}$ adaptive queries.
The protocol is single-prover, non-interactive, and publicly verifiable.

\end{theorem}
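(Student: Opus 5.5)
The plan has two halves: completeness (an honest quantum prover solves the biased YZ problem), and soundness in the certified-min-entropy sense (no adversary making at most $2^{o(n^c)}$ queries can output a valid solution with non-negligible probability while having low min-entropy). Completeness should follow by adapting the YZ algorithm to the biased oracle. We prepare a superposition over codewords weighted by the product of single-coordinate amplitudes concentrated on $\{\sigma : H_i(\sigma)=0\}$, and decode via the dual code. Because a $1-p$ fraction of symbols are preimages of $0$, the per-coordinate state is close to the uniform superposition. The resulting error pattern therefore has Hamming weight about $pn = \Theta(\sqrt n)$, which must lie within the list-recovery or unique-decoding radius of the chosen code. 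This fixes the code parameters, e.g.\ a folded/generalized Reed--Solomon code with $\GRSUniqueDecode$ handling $\GoodErrors$. I would verify that the fraction of $\BadErrors$ is small and that the output is close to uniform over valid solutions.

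For soundness, I would reduce min-entropy to a guessing statement. Suppose the adversary outputs a valid solution $x$ with probability $\epsilon$, and with probability $\ge \epsilon/2$ outputs one particular $x^*$, which is allowed to depend on $H$ through the adversary's queries. Then a modified adversary can predict a fixed-in-advance, oracle-dependent solution. The key structural point is that with bias $p = 1/\Theta(\sqrt n)$, any fixed codeword $x$ satisfies $H_i(x_i)=0$ for all $i$ with probability $(1-p)^n = 2^{-\Theta(\sqrt n)}$. This is only subexponentially small, which is exactly what produces the $n^c$, $c<1/2$ tradeoff.

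I would then argue that a $T$-query quantum algorithm whose output is highly concentrated behaves like a low-degree, nearly deterministic function of $H$. I would use a compressed-oracle or polynomial-method argument: the probability of outputting a specific $x$ is a polynomial of degree $2T$ in the oracle bits. Restricting to the $2^{O(n^c)}$ heavy outputs and union-bounding, each output must be certified valid using only the oracle values actually queried. I would make this rigorous with a "query-then-verify" lemma: an output $x$ with $H(x)=0$ on all coordinates must, except with probability about $(1-p)^{n-k}$, have had $n-k$ of its coordinates queried in a way that correlates with the answers. A classical-style hardness bound then shows that finding a codeword whose coordinates all hash to $0$ is hard, since code distance forces many fresh coordinates. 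Combining these gives $\Pr[\text{output } x^*] \le 2^{-\Omega(n^c)}$ whenever $T = 2^{o(n^c)}$, hence min-entropy $\Omega(n^c)$.

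The main obstacle is this last step: ruling out deterministic quantum solvers unconditionally, which is exactly where Aaronson--Ambainis would otherwise be invoked. The first approach I would try is to bound the influence of the acceptance polynomial. A highly concentrated output yields a low-degree bounded polynomial that is nearly Boolean, and for nearly Boolean low-degree polynomials an Aaronson--Ambainis-type influence bound is known unconditionally (via decision-tree simulation of near-$0/1$ polynomials). The bias $p$ keeps the relevant restricted Fourier mass within reach of that bound. I would expect the fiddliest part to be combining this with the classical hardness bound over the $2^{o(n^c)}$ query budget.
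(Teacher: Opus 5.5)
Your completeness sketch is broadly in line with the paper. Your ``query-then-verify'' step matches the paper's first reprogramming lemma: it flips any subset of $n/10$ lightly queried symbols of a predictable correct $\bfx$ from $0$ to $1$, uses the swapping lemma to keep $\Pr[\cA=\bfx]$ within a factor $2$, and applies weighted counting to get probability at most $2(1-p)^{n/10}$.

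The gap is in the other half of soundness. The step ``a classical-style hardness bound shows that finding a codeword whose coordinates all hash to $0$ is hard'' is false against quantum adversaries: the honest prover does exactly this with polynomially many queries. What you need instead is that, with high probability over $H$, \emph{no} correct codeword receives noticeable query norm on $0.9n$ of its symbols.

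A union bound cannot give this. $|C|=2^{\Theta(n^{1.5}\log n)}$, while a fixed codeword is correct only with probability $(1-p)^n=2^{-\Theta(\sqrt n)}$. Also, the ``heavy outputs'' you want to union-bound over depend on $H$. The swapping lemma does not help directly either, because the symbols to be reprogrammed are heavily queried and an adaptive adversary may notice the change.

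The missing idea is the paper's heavy-coordinate reprogramming lemma. Local stability gives $|\widetilde{w}_u(h)-\widetilde{w}_u(h\oplus X)|\le 2Q\sum_{v\in X}\widetilde{w}_v(h)$. An induction that protects $\binom{k}{2}$ inputs then shows that every reprogramming of the remaining $\ge 4n/5$ symbols still leaves $k=\Theta(\sqrt n)$ symbols of $\bfx$ with query norm $\ge \beta/(8Q|M|)^2$. Next, list recovery from agreement $\sqrt n/5$ is applied, with lists built from the at most $Q^2/\eta^2\le 2^{n^c}$ noticeably queried inputs. This caps at $L=2^{O(n^c\log n)}$ the number of bad oracles associated with any reprogrammed oracle, yielding $\Pr\le L(1-p)^{4n/5}$.

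This is also the real reason for the parameters. Soundness forces list recovery at $\Theta(\sqrt n)$ agreement, hence rate $\Theta(1/\sqrt n)$, and completeness then forces the bias. Your sketch derives the code from the bias and never uses list recovery.

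Your proposed route around the obstacle does not work either. Min-entropy $\minent=o(n^c)$ only yields some $\bfx^*$ with $\Pr[\cA^H=\bfx^*]\ge \epsilon\cdot 2^{-\minent}$ (not $\epsilon/2$). That is far from $1$, so $H\mapsto\Pr[\cA^H=\bfx^*]$ is not close to Boolean. Moreover, $\bfx^*$ depends on $H$, so there is no single polynomial to analyze. The unconditional results you allude to concern polynomials pointwise close to a total Boolean function, where approximate degree and decision-tree depth are polynomially related. The general bounded low-degree case is precisely the open Aaronson--Ambainis conjecture, which the paper's query-norm-polytope argument is designed to bypass.
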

We elaborate further on the notion of a certified min-entropy protocol (introduced in \cite{YZ24}). Informally, a certified min-entropy protocol in the QROM provides the following information-theoretic guarantee: subject only to a bound on the number of quantum queries made by the prover, with high probability over sampling a random oracle $H$ and conditioned on acceptance in the protocol, the verifier's output must have high min-entropy. Our certified min-entropy protocol has a particularly simple form: the prover returns a solution to the biased YZ problem (which is defined with respect to $H$), and, upon checking that it is valid, the verifier outputs the solution itself.

Therefore, put more thoroughly, the guarantee of \Cref{thm:main-informal} is as follows: for any $c\in(0,1/2)$ and any (even computationally unbounded) adversary $\mathcal{A}$ making \(2^{o(n^c)}\) quantum queries to the random oracle, with overwhelming probability over $H$, the following holds: if $\mathcal{A}^H$ is accepted with any noticeable probability, the min-entropy of its output, conditioned on acceptance, must be at least $\Omega(n^c)$.

Our starting point is a recent result of Khurana, Roberts, and Tal~\cite{KRT},
showing that the original YZ problem certifies min-entropy against provers who make their $\poly(n)$ queries in at most $o(\log n)$ adaptive layers. 
In the present work, we are able to lift this restriction on adaptivity entirely, albeit for a slight variant of the YZ problem. As a result, we obtain a protocol to certify min-entropy (in the QROM) even against provers that make $2^{o(n^c)}$ \textit{adaptive} queries (a doubly exponential improvement).

Our proof is cast in the formalism of hybrid arguments and oracle reprogramming.
The main innovation is a relaxed reprogramming guarantee that holds even when modifying the oracle at heavily-queried coordinates: we show that any string $\bfx\in\Sigma^n$ with heavy query weight on many of its symbols retains significant query weight under many different reprogrammings.
From a technical standpoint, a key perspective shift is to work with the square root of query weight, which we term the ``query norm.''
We then reason about the ``query norm polytope,'' defined by a system of linear constraints that limit how query norms can change between pairs of oracles.
We provide an overview of these proof ideas in the next section.

Finally, we remark that the YZ quantum advantage is remarkable for another reason: it remains the only known example of a genuinely ``structure-less'' quantum advantage, in the sense that it arises relative to a random oracle without leveraging any additional algebraic or computational structure. Since its introduction, the YZ construction has spurred a broader effort to understand this new source of quantum advantage. Most recently, it has inspired ``Decoded Quantum Interferometry'' (DQI)~\cite{jordan2025optimization}, a new framework for quantum algorithms for optimization (which recovers the YZ algorithm as a special case). Thus, beyond providing a certified randomness protocol with several desirable properties, our result makes progress in understanding a key structural property---inherent randomness---of this new source of quantum advantage (albeit for a slight variant of the original YZ problem).

\section{Overview}
\label{sec:overview}

\subsection{The Yamakawa--Zhandry search problem}
Let us recall the general structure of the YZ search problem~\cite{YZ24}.
Let $C\subseteq \Sigma^n$
be a code over a large alphabet $\Sigma$, and let
\[
H=(H_1,\ldots,H_n),\qquad H_i:\Sigma\rightarrow\{0,1\},
\]
be a random oracle.
We say a string $\bfx=(x_1,\ldots,x_n)\in \Sigma^n$ is \emph{correct} if
\[
\bfx\in C
\qquad\text{and}\qquad
H_i(x_i)=0\quad\text{for every }i\in[n].
\]
The \emph{YZ search problem} is to find any correct $\bfx$, given quantum query access to $H$.

For an appropriate choice of code $C$, Yamakawa--Zhandry gave an algorithm (henceforth the \emph{YZ algorithm}) which finds a correct $\bfx$ using a single layer of parallel quantum queries.
An important property of $C$, which will become crucial later on, is that it is \textit{list-recoverable}: given short lists $S_i$ of possible symbols at each coordinate $i\in[n]$, there are very few codewords that are compatible with most of the $S_i$'s; \textit{i.e.}, there are few $\bfx =(x_1,\ldots, x_n)\in C$ for which $x_i\in S_i$ for many $i$.

The honest prover prepares the uniform superposition over \(C\) and, independently for each coordinate, the uniform superposition over symbols hashing to zero. After Fourier transforming both registers, the first register is supported on \(C^\perp\), while the second behaves like a sparse error vector because of the oracle bias. The prover uses error-correcting properties of the dual code \(C^\perp\) to perform convolution, effectively computing the unitary
\[\sum_{c \in C^{\perp}} \alpha_c \ket{c} \otimes \sum_{e} \beta_e \ket{e} \mapsto \sum_{c \in C^{\perp}, e} \alpha_c \beta_e \ket{c+e} \]
By the convolution theorem, performing a Fourier transform of the above state yields the {\em intersection} of the primal values on both registers. Thus, measuring the resulting state outputs \(\bfx \in C\) such that $\forall i \in [n], H_i (x_i) = 0$.

\subsection{The min-entropy proof strategy at a high level}
For each oracle $h$, the YZ algorithm returns a sample from a distribution that is essentially uniform over the set of correct $\bfx$'s.
In recent work, Khurana, Roberts, and Tal~\cite{KRT} proved that this randomness is inherent in the output of any algorithm from a certain restricted class that correctly solves the YZ problem.
More specifically, \cite{KRT} proved unconditionally that the output of any successful, query-efficient prover has high min-entropy, provided the prover's $\poly(n)$-many queries are made in at most $o(\log n)$ adaptive layers.
In a nutshell, the main difficulty with many layers of adaptive queries is that the prover's query weight may depend on the oracle itself. Reprogramming the oracle may therefore change not only the prover's final output, but also the locations queried in all subsequent layers. The proof in~\cite{KRT} handled these dependencies via a layer-by-layer bootstrapping argument, which limited them to $o(\log n)$ adaptive layers. 
Our argument shares the same high-level template as \cite{KRT}, but develops a new reprogramming argument in order to obtain the full query-bound.

As is typical with reprogramming arguments, we begin by defining a  ``currency'' that the prover must pay to query any oracle entry $H_i(x)$, and whose total amount is controlled by the total number of queries $Q$ made by the prover.
We then derive a contradiction by analyzing how this currency would have to be distributed across the coordinates $i$ and symbols $x_i$ from any correct, high-probability codeword $\bfx$.

In the present work, this currency is the \emph{query norm} $\widetilde{w}_{i,x}^j$, which is charged at each query $j\in[Q]$ according to overlap of the pre-query state $\ket{\psi_{j-1}}$ with each of the oracle indices $(i,x)\in [n]\times \Sigma$:
\[\widetilde{w}_{i,x}^{\,j}:=\left\|\Pi_{i,x}\cdot\ket{\psi_{j-1}}\right\|_2\,.\]
Here $\Pi_{i,x}:=\ketbra{i,x}\otimes \mathbb{I}$, where the projection $\ketbra{i,x}$ acts on the oracle register and $\mathbb{I}$ acts on the algorithm's work register.
The query norm appears as an intermediate quantity in many hybrid arguments and is the square root of the familiar \emph{query weight}, $w_{i,x}^j:=(\widetilde{w}_{i,x}^{\,j})^2$.
It is elementary to see that the query norm across all queries and oracle coordinates is controlled by the number of queries as follows:
\[\textstyle\sum_{j,i,x}(\widetilde{w}_{i,x}^{\,j})^2\leq Q\,.\]

To discuss how query norm is distributed across oracle indices, it is useful to denote by \emph{total query norm} the cumulative quantity $\widetilde{w}_{i,x}:=\sum_{j=1}^Q\widetilde{w}_{i,x}^{\,j}$ paid to each oracle coordinate $(i,x)$.
This allows us to designate certain strings $\bfx\in \Sigma^n$ as being ``heavily queried'' in the following sense.
For any string $\bfx = (x_1,\ldots, x_n) $, we say $\bfx$ is \emph{$(m, t)$-heavily queried} by the algorithm $\mathcal{A}^h$ if $m$ coordinates of $\bfx$ each receive total query norm at least $t$ under $h$:
\[\bfx \text{ is }(m,t)\text{-heavily queried}\quad\iff\quad\# \big\{i\in[n]: \widetilde{w}_{i,x_i}\geq t\big\}\geq m\,.\]
We will mostly be interested in $\big(0.9n,\beta\big)$-heavily queried $\bfx$'s for a fixed, noticeable $\beta\in 1/\poly(n,Q)$ to be chosen later.
For this canonical parameter choice we will just say $\bfx$ is \emph{heavily queried}.

In the vocabulary just established, our argument splits into two main claims:
\begin{quote}
\vspace{-4\parskip}
\begin{specialclaim}[Low-entropy provers heavily query correct answers]
\label{clm:claim1-intro}
    With high probability over the random oracle $h\gets H$, if $\mathcal{A}^h$ returns a correct $\bfx$ with noticeable probability (\textit{i.e.}, $\mathcal{A}$ has low min-entropy under $h$), then $\bfx$ must be heavily queried.
\end{specialclaim}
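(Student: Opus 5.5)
I plan to prove the claim by contradiction via a hybrid (oracle-reprogramming) argument combined with the list-recoverability of $C$. Fix an oracle $h$ and a string $\bfx\in C$ that is correct under $h$ and output by $\mathcal{A}^h$ with probability at least $\varepsilon$. Suppose $\bfx$ is \emph{not} heavily queried. Then the set $L=\{i:\widetilde{w}_{i,x_i}<\beta\}$ has size greater than $0.1n$.

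\textbf{Reprogramming step.} Let $h'$ agree with $h$ except that $h'_i(x_i)=1$ for some chosen set $S\subseteq L$ of light coordinates. The standard hybrid bound gives
\[
\bigl\|\ket{\psi^h_Q}-\ket{\psi^{h'}_Q}\bigr\|\le 2\sum_j\Bigl(\sum_{i\in S}(\widetilde{w}^{\,j}_{i,x_i})^2\Bigr)^{1/2}\le 2\sum_{i\in S}\widetilde{w}_{i,x_i}<2|S|\beta .
\]
Choosing $|S|$ constant (or a small polynomial) and $\beta\ll\varepsilon/|S|$, the algorithm $\mathcal{A}^{h'}$ still outputs $\bfx$ with probability at least $\varepsilon/2$. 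But $\bfx$ is incorrect under $h'$. So $\mathcal{A}^{h'}$ outputs an incorrect string with noticeable probability.

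\textbf{From one oracle to high probability over $H$.} The single contradiction above is not enough, because $\mathcal{A}$ is allowed to fail on some oracles. I would therefore argue by counting. Call an oracle \emph{bad} if it admits a correct, high-probability, non-heavily-queried output. Each bad $h$ maps to many oracles $h'$, namely one for each choice of $S\subseteq L$ with $|L|>0.1n$. Each such $h'$ has $\mathcal{A}$ outputting a string that lies in $C$, is $h'$-incorrect in exactly the positions $S$, and has probability at least $\varepsilon/2$.

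I then invert the map. Given $h'$, I need to recover $h$. There are at most $2/\varepsilon$ high-probability outputs $\bfx$ of $\mathcal{A}^{h'}$, and for each of them the flipped set $S$ is determined as the set of positions where $h'_i(x_i)=1$. Hence the map has small preimages. Meanwhile the biased measure changes by a factor of roughly $(p/(1-p))^{|S|}$, and this is compensated by the $\binom{|L|}{|S|}$ choices of $S$; this is where $p=1/\Theta(\sqrt n)$ and the choice $|S|\approx n^c$ enter. Combining, $\Pr[\text{bad}]$ is bounded by the probability that $\mathcal{A}$ has a noticeable-probability output which is a codeword violating the oracle at few positions.

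That last event is then bounded using list-recoverability together with the query bound. This step uses that $\sum_{j,i,x}(\widetilde{w}^{\,j}_{i,x})^2\le Q$, so only few symbols per coordinate carry substantial weight. It also requires a standard argument that an algorithm making few queries cannot find codewords "almost correct" under a fresh random oracle, as in the YZ classical-hardness analysis adapted to quantum via list recovery.

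\textbf{Main obstacle.} I expect the main obstacle to be the adaptivity issue flagged earlier. Reprogramming $h\to h'$ can change the query norms themselves, so the quantities $\widetilde{w}$ under $h'$ are not those under $h$, and the inversion and counting must control query norms across both oracles simultaneously. I would first try to measure everything under $h$ only, using the hybrid bound as above, so that the heaviness condition never has to be transferred to $h'$ in this claim.
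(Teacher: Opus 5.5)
There is a genuine gap, and it is in how you close the argument. Your reprogramming step is sound. So is your inversion: you recover $h$ from $h'$ by zeroing $h'$ on one of the at most $2/\varepsilon$ likely outputs $\bfx$. But you then treat the mass gained from the many choices of $S$ as merely ``compensating'' the bias factor. From there you reduce to bounding the probability that $\mathcal{A}^H$ outputs a specific near-correct codeword with noticeable probability, which you plan to do via list recovery, the query bound, and a ``standard'' hardness argument. That step is not available:
\begin{itemize}
\item YZ's hardness holds only against classical queries; their quantum algorithm finds correct codewords with polynomially many queries.
\item The statement you need (no $Q$-query quantum algorithm outputs one fixed near-correct codeword with noticeable probability) is essentially the min-entropy theorem itself with a relaxed acceptance predicate. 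In this paper it is obtained only by combining the present claim with Claim~2, so your last step is circular.
\item List recovery together with $\sum_{j,i,x}(\widetilde{w}^{\,j}_{i,x})^2\le Q$ only limits which codewords can be \emph{heavily queried}. Without the present claim, it says nothing about which codewords are \emph{output}.
\end{itemize}

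The missing observation is that the weighted double counting already finishes the proof, with no list recovery and no query bound at all. The paper sets $\beta=\alpha/(4n)$ and reprograms on \emph{every} subset $J$ of a light set $I$ with $|I|=s=\lceil n/10\rceil$, not just constant-size subsets. This is affordable because the swapping bound is additive in total query norm: flipping all of $I$ moves the final state by at most $2s\beta\le\alpha/2$.
\begin{itemize}
\item The pairs $(h',\bfx)$ produced from distinct bad pairs $(h,\bfx)$ are disjoint, since $h$ is recovered by zeroing $h'$ on $\bfx$.
\item Each bad pair contributes relative mass $\sum_{J\subseteq I}(p/(1-p))^{|J|}=(1-p)^{-s}$, weighted by at least $\tfrac12\Pr[\mathcal{A}^h=\bfx]$.
\item The total mass is at most $1$, so the joint probability of a bad pair is at most $2(1-p)^s$.
\end{itemize}
This gives $\Pr[\text{bad}]\le\tfrac{2}{\alpha}(1-p)^{n/10}=2^{o(n^c)}e^{-\Omega(\sqrt n)}$.

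Even your restricted version would have sufficed had you tracked the gain rather than calling it compensation. With $|S|=k\approx n^c$, the mass gain is $\binom{0.1n}{k}(p/(1-p))^k\ge(\sqrt n/(800k))^k=2^{\Omega(n^c\log n)}$, since $c<1/2$. This beats your $2/\varepsilon=2^{o(n^c)}$ preimage loss, so you could have stopped right there. With constant $|S|$, however, the gain is only polynomial and the argument would fail.

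Finally, adaptivity is not an obstacle for this claim, since everything is measured under $h$. It only bites in Claim~2.
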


\begin{specialclaim}[Query-bounded provers cannot heavily query a correct answer]
\label{clm:claim2-intro}
Suppose $\mathcal{A}$ makes no more than $2^{o(n^c)}$ queries.
Then with high probability over $h\gets H$, no correct $\bfx$ is heavily queried by $\mathcal{A}^h$.
\end{specialclaim}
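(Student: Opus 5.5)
The plan is a double-counting (reprogramming) argument over oracles, set against the list-recoverability of $C$. First I would fix, for each oracle $h$, the lists
\[
S_i(h):=\{x\in\Sigma:\ \widetilde{w}_{i,x}\geq \beta/2\}.
\]
By Cauchy--Schwarz, $\widetilde{w}_{i,x}^2=(\sum_j \widetilde{w}^{\,j}_{i,x})^2\leq Q\sum_j (\widetilde{w}^{\,j}_{i,x})^2$. Summing over all $(i,x)$ gives $\sum_{i,x}\widetilde{w}_{i,x}^2\leq Q^2$, so $\sum_i|S_i(h)|\leq 4Q^2/\beta^2$. Hence, for most coordinates $i$, the list $S_i(h)$ has size $2^{o(n^c)}$. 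By list-recoverability of $C$, the number of codewords $\bfx$ that are even $(0.8n,\beta/2)$-heavily queried under $h$ is at most some $L=2^{o(n^c)}$, for \emph{every} $h$.

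This count alone is not enough. The lists depend adaptively on $h$, so we cannot simply union bound over the event "$H_i(x_i)=0$ for all $i$", which has probability $(1-p)^n$. Instead I would bound the probability of the bad set
\[
B=\{h:\ \text{some correct }\bfx \text{ is }(0.9n,\beta)\text{-heavily queried under } h\}
\]
by a map $h\mapsto h'$. For $h\in B$ with witness $\bfx$, choose a set $T\subseteq[n]$ of size $k=\Theta(n^c)$ and flip $H_i(x_i)$ from $0$ to $1$ for $i\in T$. Because the oracle is biased, each $h'$ obtained this way has probability $\Pr[h']=\Pr[h]\,(p/(1-p))^k$. There are $\binom{n}{k}$ choices of $T$, so the forward mass is multiplied by $\binom{n}{k}(p/(1-p))^k\approx(np/k)^k=(\sqrt n/\Theta(n^c))^k$, which is superpolynomially large since $c<1/2$.

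For the backward direction, I would show that $\bfx$ stays $(0.8n,\beta/2)$-heavily queried under $h'$ for most choices of $T$. Then $h'$ determines $\bfx$ up to $L$ choices. Given $\bfx$, the original $h$ is recovered by choosing which $k$ of the $1$-positions $\{i:h'_i(x_i)=1\}$ to unflip. That set has size about $k+pn$ with high probability, giving at most $\binom{k+pn}{k}\lesssim(e(1+\sqrt n/k))^k$ preimages. Comparing the two directions,
\[
\Pr[B]\lesssim \frac{L\binom{k+pn}{k}}{\binom nk (p/(1-p))^k},
\]
and the parameters should make this $2^{o(n^c)}\cdot 2^{-\Omega(n^c)}$-small once constants are balanced. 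If this balance fails, I would tune $k$ and $p$, and use the fact that the flips land only on heavy coordinates.

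The main obstacle is the step asserting that $\bfx$ remains heavily queried after flipping $k$ \emph{heavily-queried} points. Standard hybrid bounds only control how query weight changes when the reprogrammed points carry small weight, which is exactly the opposite of our situation. My approach is the linear-constraint ("query norm polytope") route. The standard hybrid argument gives, for each $(i,x)$ and each query $j$,
\[
|\widetilde{w}^{\,j}_{i,x}(h)-\widetilde{w}^{\,j}_{i,x}(h')|\leq \|\ket{\psi_{j-1}(h)}-\ket{\psi_{j-1}(h')}\|\leq 2\sum_{j'<j}\sum_{t\in T}\widetilde{w}^{\,j'}_{t,x_t}.
\]
I would treat the vectors of total query norms under $h$ and $h'$ as points satisfying these constraints and argue by averaging over random $T$. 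A random $T$ hits heavy coordinates proportionally to their norm, so the expected disturbance to a fixed coordinate is about $(k/n)\sum_t\widetilde{w}_{t,x_t}$. Then Markov's inequality shows that for most $T$, all but $0.1n$ coordinates lose at most half their norm. Making this rigorous is the hardest part, since the disturbance compounds over $Q$ adaptive queries. I would first try symmetrizing: compare $h$ and $h'$ with the intermediate oracles in which the flips are applied one at a time.
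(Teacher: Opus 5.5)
Your double-counting skeleton matches the paper's. You reprogram a bad $h$ on coordinates of its witness $\bfx$, weight the resulting family by the bias, and bound how many bad $h$ map to the same $h'$ using list recovery. However, the step you flag as hardest does not go through, and as stated (``most $T$'') it is false.

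\textbf{Why the averaging argument is vacuous.} Local stability, $|\widetilde{w}_u(h)-\widetilde{w}_u(h\oplus T)|\le 2Q\sum_{t\in T}\widetilde{w}_{t,x_t}(h)$, is applied from $h$, where the flipped points have norm at least $\beta$. So the right-hand side is at least $2Q\beta$ as soon as $T$ touches one heavy coordinate. Its average over a random $k$-set is at least $1.8Qk\beta$, far above the $\beta/2$ slack you need. Markov therefore has nothing to work with. Flipping one point at a time hits the same obstruction at every step, since each flip is still of a heavy point.

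\textbf{Why the claim itself fails.} Take an adversary that reads $h(1,x_1),h(2,x_2),\ldots$ in order, e.g.\ one query on $\tfrac{1}{\sqrt2}(\ket{i,x_i}+\ket{i+1,x_{i+1}})$ per step. It switches to a fixed dummy input the first time it detects a $1$. If $h(\bfx)=\mathbf 0$, every $(i,x_i)$ receives norm at least $1/\sqrt2$. Under $h\oplus T$, essentially only coordinates $i\le\min T$ receive any norm. For a random $k$-set $T$, $\min T\ge 0.8n$ holds only with probability about $0.2^k$. Keeping linearly many heavy survivors under a large family of reprogrammings is essentially the paper's open Question~\ref{q:heavy-reprogramming}. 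So this is a missing idea, not a missing calculation.

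\textbf{How the paper gets around it.} It uses local stability only in the direction where the reprogrammed points are light. In the base case, it assumes some $h\oplus X$ makes all of $M$ light and goes back to $h$, which gives a contradiction. In the induction step, it splits $M$ into $\mathsf{Heavy}$ and $\mathsf{Light}$ according to a reprogramming with fewer than $k$ heavy points. It then passes through a hybrid oracle so that each use of local stability starts from an oracle where the changed points are light. This yields Lemma~\ref{lemma:subcube with square root heavy coordinates}: there is a protected set $A_k$ with $|A_k|\le\binom k2$ such that every reprogramming of $M\setminus A_k$ leaves at least $k$ inputs with norm at least $\beta/(8Q|M|)^2$.

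\textbf{What this costs.} With $k=\lceil\sqrt n/5\rceil$, at least $4n/5$ coordinates remain freely reprogrammable. All their subsets are used, giving mass ratio $(1-p)^{-4n/5}$. But only $\sqrt n/5$ coordinates survive, and the threshold drops by a $\poly(Q,n)$ factor rather than a factor $2$. That is why the paper needs a code list-recoverable from agreement only $\sqrt n/5$: the rate-$1/(10\sqrt n)$ folded Reed--Solomon code. It also needs the biased oracle to restore completeness. The resulting bound is $L\cdot(1-p)^{4n/5}=e^{-\Omega(\sqrt n)}$.

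\textbf{Minor points.} List recovery here gives $L=2^{O(n^c\log n)}$, not $2^{o(n^c)}$ as you assert. Your $\binom{k+pn}{k}$ preimage factor is unnecessary: since $h(\bfx)=\mathbf 0$, the $1$-positions of $h'$ on $\bfx$ are exactly $T$, so $h$ is recovered from $(h',\bfx)$ by zeroing $\bfx$.
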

\end{quote}
\noindent Instantiated appropriately, these statements together yield the quoted lower bound in \Cref{thm:main-informal}.

We remark that an important difference between the approach used here and the one used in~\cite{KRT} is that in the two claims above, as well as their proof, we stick with query \emph{norm}, rather than passing to query \emph{weight} $w_{i,x}^{\,j}:=(\widetilde{w}_{i,x}^{\,j})^2$, as is typical with hybrid arguments.
While relaxing hybrid argument bounds to query weight is often sufficient (and indeed is sometimes tight---\textit{e.g.}, as in the original \cite{BBBV97} bound for unstructured search), an analysis in terms of query weight seems to be too loose to obtain effective reprogramming bounds in our setting.

\Cref{clm:claim1-intro,clm:claim2-intro} are both proved via reprogramming arguments, as we outline next.
We will need the following elementary \emph{swapping lemma}, which is standard in hybrid arguments (see \textit{e.g.} \cite[Lemma 2]{Vaz98}, \cite[Theorem 3.3]{BBBV97}).
If oracles $h$ and $h'$ differ on a set
$X\subseteq[n]\times\Sigma$, then the Euclidean distance between the
adversary's states after \(j\) queries satisfies
\[
\left\|\,
\ket{\psi_{\smash{j}}^{\smash{h}}}-\ket{\psi_{\smash{j}}^{\smash{h'}}}\mkern1.5mu
\right\|_2
\leq
2\sum_{j'=1}^{j}
\sum_{(i,x)\in X}
\widetilde{w}_{i,x}^{\,j'}(h) \leq 2\sum_{(i,x)\in X}\widetilde{w}_{i,x}(h).
\]
Since this Euclidean distance upper-bounds the trace distance, it also
upper-bounds the TV distance of any measurement's outputs for the two states.

We first explain how to carry out the proofs of \Cref{clm:claim1-intro,clm:claim2-intro}  over the \emph{unbiased oracle distribution} that assigns each oracle bit independently and uniformly at random.
These analyses will eventually require slight modifications to apply to a biased random oracle, as we explain in \Cref{sec:overview-biased-security}.

\subsubsection{Proving \Cref{clm:claim1-intro}: Low-entropy provers heavily query correct answers}
\label{sec:claim1-intro}
\paragraph{Setup and bad\textsubscript{1} oracles.}
Fix a noticeable threshold $\alpha:=4n\beta$ and call an oracle $h$ ``bad\textsubscript{1}'' if there exists an $\bfx$ such that $\bfx$ is correct for $h$, $\Pr[\mathcal{A}^h=\bfx]\geq \alpha$, yet $\bfx$ is \emph{not} heavily queried by $\mathcal{A}^h$. (We consider this case bad since the output $\mathcal{A}^h$ does not have high min-entropy.)
Our goal is to establish an upper bound on $\Pr[H\text{ is bad\textsubscript{1}}]$.

For each bad\textsubscript{1} $h$, fix an $\bfx_h$ that is correct for $h$, returned with probability at least $\alpha$ by $\mathcal{A}^h$, and not heavily queried.
We will argue there is a large set of oracles $\mathcal{F}(h)$ such that $\bfx_h$ is still returned with noticeable probability: $\Pr[\mathcal{A}^{g}=\bfx_h]\geq \alpha/2$ for all $g\in\mathcal{F}(h)$.
Moreover, we will argue that for bad\textsubscript{1} oracles $h$ and $h'$, all the pairs $(g,\bfx_h)$, $g\in \mathcal{F}(h)$ and $(g',\bfx_{h'})$,  $g'\in\mathcal{F}(h')$ are \emph{distinct}.

Supposing we can establish a lower bound $|\mathcal{F}(h)|\geq K$ for all bad\textsubscript{1} $h$, we thus conclude:
\begin{align*}
    1&\geq \sum_{h \text{ bad}}\sum_{g\in \mathcal{F}(h)} \Pr[H=g\text{ and }\mathcal{A}^H=\bfx_h] \tag{Disjointness}\\
    &= \sum_{h \text{ bad}}\sum_{g\in \mathcal{F}(h)} \Pr[H=g]\Pr[\mathcal{A}^g=\bfx_h]\\
    &\geq \sum_{h\text{ bad}}\sum_{g\in \mathcal{F}(h)}\frac{\alpha}{2}\Pr[H=h]\tag{$H$ is unbiased, so $\Pr[H=h] = \Pr[H=g]$}\\
    &\geq \frac{K\alpha}{2}\Pr[H \text{ is bad\textsubscript{1}}],
\end{align*}
from which we obtain the desired upper bound on the fraction of bad oracles:
\[\Pr[H \text{ is bad\textsubscript{1}}] \leq \frac{2}{K\alpha}\,.\]
For the right value of $K$, this gives the desired bound on the fraction of bad oracles. We will argue that $K$ will be large enough so that this value is small.

\paragraph{Reprogramming.}
The large family $\mathcal{F}(h)$ is derived as follows.
Because $\bfx_h$ is not heavily queried, there is a set of coordinates $I\subseteq[n]$, $|I|=0.1n$, such that each coordinate-symbol of $\bfx_h$ in $I$ is \textit{lightly queried}:
\[
\widetilde{w}_{i,x_i}(h)
<
\beta=\frac{\alpha}{4n}
\qquad\text{for every }i\in I.
\]
We use $I$ to produce a large number of oracles on which the adversary still outputs $\bfx_h$ with high probability, while making sure to retain the disjointness property.
This is done by reprogramming: for every subset of low total query norm coordinates, 
$S\subseteq\{(i,x_i):i\in I\}$,
let $h^S$ be the oracle obtained by changing the oracle entries at all $(i,x)\in S$ from zero to one.
The swapping lemma gives
\[
\begin{aligned}
\left\|\,
\ket{\psi_Q^h}-\ket{\psi_Q^{h^S}}
\,\right\|_2
&\overset{\text{(Swapping Lemma)}}{\leq}\;
2\sum_{(i,x_i)\in S}
\widetilde{w}_{i,x_i}(h)\;\;<\;\;
2\cdot 0.1n\cdot\frac{\alpha}{4n}
\;\;\leq\;\;
\frac{\alpha}{2}.
\end{aligned}
\]
Consequently, the adversary continues to output $\bfx_h$ with probability
at least $\alpha/2$ under every oracle $h^S$, and with $\mathcal{F}(h):=\{h^S\}_{S}$, we have $|\mathcal{F}(h)|\geq 2^{0.1n}=:K$.
Finally, distinctness of $(g,\bfx)$ pairs follows from the fact that for any $(g,\bfx)$ in any family, the original oracle $h$ for which $g\in\mathcal{F}(h)$ may be uniquely recovered by resetting the values of $g$ to zero at the indices $(i,x_i)$ from $\bfx = (x_1,\ldots, x_n)$.

We thus conclude that except for a negligible fraction of all oracles $h$, all correct $\bfx$ output by $\mathcal{A}^h$ with probability at least $\alpha$ must be heavily queried.



\subsubsection{Proving \Cref{clm:claim2-intro}: Query-bounded provers cannot heavily query a correct answer}
\label{sec:claim2-intro}
We now provide an overview of the proof of \Cref{clm:claim2-intro}, which will be the most substantial part of our argument.
\paragraph{Setup and bad\textsubscript{2} oracles.} Our goal is to prove that for almost all oracles there is no heavily queried correct $\bfx$.
At a high level, this is achieved by the following counting argument.
Call an oracle $h$ ``bad\textsubscript{2}'' if $\mathcal{A}^h$ does manage to heavily query a correct $\bfx$ for $h$, and set up a bipartite graph with these bad\textsubscript{2} oracles as the left vertices and all oracles as the right vertices.
We will define edges via reprogramming, and argue that the degrees of the left vertices are uniformly large, say at least $D_\mathsf{L}$, while the degrees of the right vertices are uniformly small, say at most $d_\mathsf{R}$.
Counting the number of edges in two ways---based on the left- and right-degrees respectively---we find
\[D_\mathsf{L}\cdot|\text{bad\textsubscript{2}}|\leq \text{\# of edges}\leq d_\mathsf{R}\cdot|\text{all}|,\]
from which we conclude that the fraction of bad\textsubscript{2} oracles is at most $d_\mathsf{R}/D_\mathsf{L}$.

\paragraph{Lower-bounding the left-degree.}
Let $h$ be a bad\textsubscript{2} oracle.
We would like to associate $h$ with many alternative oracles $g$ while ensuring that each $g$ isn't associated with too many bad\textsubscript{2} $h$'s.
However, unlike in the first claim, we are not guaranteed a supply of ``lightly queried'' coordinates in $\bfx$---in fact there may be none at all.
Thus we cannot use the swapping lemma, which essentially relies on $\mathcal{A}$ ``not noticing'' the reprogramming, and no such guarantee is available for heavy coordinates.
This is especially true in adaptive algorithms: because the adversary is making $Q$ adaptive queries, \textit{a priori} there is nothing preventing $\mathcal{A}$ from using information gained from a heavy query on one entry of $h$ to detect that $h$ has been modified, and then changing its behavior completely.

Despite these challenges, we are able to obtain the following key result:
\begin{quote}
    \vspace{-4\parskip}
    \begin{speciallemma}[Heavy-coordinate reprogramming, informal version of \Cref{lemma:subcube with square root heavy coordinates}]
    \label{lem:heavy-repro}
    For any oracle $h$ and any $\bfx$ that is $(0.9n,\beta)$-heavily queried, there are $2^{\Omega(n)}$-many oracles $g$, each obtained by reprogramming $h$ at a subset of the coordinate-symbol pairs $(i,x_i)$ of $\bfx$, such that
    \[\bfx \text{ is }\left(\Omega(\sqrt{n}),\Omega\!\left(\tfrac{\beta}{Q^2n^2}\right)\right)\!\text{-heavily queried under $g$.}\]
    \end{speciallemma}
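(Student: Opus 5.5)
We will prove the lemma by combining the swapping lemma with a linear-constraint ("query norm polytope") argument. Fix $h$ and a $(0.9n,\beta)$-heavily queried $\bfx$, and let $I\subseteq[n]$ with $|I|\ge 0.9n$ be the heavy coordinates. For each subset $S\subseteq I$ let $h^S$ be $h$ with the entries $(i,x_i)$, $i\in S$, flipped, and write $\widetilde{w}_i(S):=\widetilde{w}_{i,x_i}(h^S)$.

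The first step is a \emph{per-query refinement of the swapping lemma} that controls how query norm itself moves under reprogramming. Since $\widetilde{w}^{\,j}_{i,x}(g)=\|\Pi_{i,x}\ket{\psi^g_{j-1}}\|_2$ and $\Pi_{i,x}$ is a contraction, the triangle inequality together with the swapping lemma gives
\[
\bigl|\widetilde{w}^{\,j}_{i,x_i}(h^S)-\widetilde{w}^{\,j}_{i,x_i}(h^{S'})\bigr|\le 2\sum_{j'<j}\sum_{k\in S\triangle S'}\widetilde{w}^{\,j'}_{k,x_k}(h^{S}).
\]
Summing over $j\le Q$ yields the linear constraint
\[
\widetilde{w}_i(S')\ \ge\ \widetilde{w}_i(S)-2Q\sum_{k\in S\triangle S'}\widetilde{w}_k(S).
\]
We would also record the symmetric version and the normalization $\widetilde{w}_i(S)\le Q$, which follows from Cauchy--Schwarz and $\sum_j(\widetilde{w}^{\,j})^2\le Q$. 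These inequalities for all pairs $S,S'$ define the polytope in which the vector $(\widetilde{w}_i(S))_{i,S}$ lives, and the rest of the argument uses only these inequalities.

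The second step, and the core of the proof, is a potential argument along chains. Consider a random ordering of $I$ and the chain $\emptyset=S_0\subset S_1\subset\cdots$ that adds one coordinate at a time. Flipping a single coordinate $k$ lowers every other coordinate's norm by at most $2Q\,\widetilde{w}_k(S_t)$. A coordinate can therefore only be ``destroyed'' (driven below the threshold $\beta/(Q^2n^2)$) if the flipped coordinates pay a lot of norm. But the flipped coordinates that pay a lot are themselves heavy, and heavy coordinates that remain heavy are exactly what we want to count.

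We would make this precise with a dichotomy over subsets $T\subseteq I$ of size $\Theta(n)$. Either many $S\subseteq T$ have at least $\Omega(\sqrt n)$ coordinates of norm $\ge\beta/(Q^2n^2)$, in which case we are done, or for most $S$ the total norm $\sum_{i}\widetilde{w}_i(S)$ over non-destroyed coordinates is small. In the second case we apply the linear constraint from $S$ back to $\emptyset$: since $S\triangle\emptyset=S$, the quantity $\sum_{k\in S}\widetilde{w}_k(S)$ must be at least roughly $\beta/(2Q)$, otherwise all of $I$ would still be heavy. The threshold loses a factor $Q$ from each direction of the constraint and a factor $n$ from each summation over coordinates, which accounts for the $Q^2n^2$.

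The $\sqrt n$ should come from balancing two quantities. If fewer than $\sqrt n$ coordinates of $S$ are above the threshold, the light coordinates contribute at most $n\cdot\beta/(Q^2n^2)$, which is too small. So the flip set must contain moderately heavy coordinates, and we would iterate this observation along a chain of length $\sqrt n$ to build $2^{\Omega(n)}$ good $S$, for instance by fixing a greedy backbone and allowing free choice on the remaining $\Omega(n)$ coordinates.

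The main obstacle is this counting step: turning the pairwise linear constraints into a lower bound of $2^{\Omega(n)}$ on the number of good subsets rather than just one good chain. Our first attempt would be a double-counting or LP-duality argument, averaging the constraint over uniformly random pairs $S\subseteq S'$ with $|S'\setminus S|\approx\sqrt n$. If that proves too lossy, we would fall back on a greedy construction that builds a ``core'' of $\sqrt n$ heavy coordinates kept out of all flip sets, and then choose the flips arbitrarily among the remaining coordinates whose norm is already below $\beta/(Q^2 n^2)$, so that flipping them costs nothing.
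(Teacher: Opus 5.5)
Your first step is correct and matches the paper. The per-query inequality and its summed form are exactly local stability (\Cref{lemma:local-stability-lemma} and \eqref{eq:local_stability}), and the paper also uses nothing else. The gap is in the second step, which is the whole content of the lemma, and the proposal does not actually supply it.

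The one observation you make precise is this: if every coordinate is light under $h^S$, applying the constraint from $h^S$ back to $h$ forces $\sum_{k\in S}\widetilde{w}_k(S)\ge(\beta-t)/(2Q)$. That only yields \emph{one} surviving coordinate per reprogramming, which is the paper's base case $k=1$. It cannot be ``balanced'' into $\sqrt n$ coordinates, because a single flipped coordinate may carry all of that norm.

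The chain-potential, double-counting and LP-duality routes are left undeveloped, and the concrete fallback fails for two reasons:
\begin{itemize}
\item Under $h$ every coordinate of $I$ has norm at least $\beta$. So there are no coordinates ``already below threshold'' until you name a reprogrammed base oracle $h'$ and show it has both a heavy core of size $\sqrt n$ and $\Omega(n)$ light coordinates.
\item Even granting such an $h'$, you use a single threshold $t=\beta/(Q^2n^2)$. Local stability then only bounds the damage from flipping $\Theta(n)$ coordinates of norm just below $t$ by $2Q\cdot\Theta(n)\cdot t\gg t$. So it cannot certify that core coordinates sitting near $t$ survive.
\end{itemize}
Cheap flipping needs a multiplicative gap of order $Qn$ between the light threshold and the norms being protected, and nothing in the proposal produces one. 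Nor should you expect a fixed core that stays heavy under every reprogramming; in the paper the $k$ surviving inputs may differ from one reprogramming to the next.

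The missing idea is the paper's induction on $k$. There the protected set is not a core that stays heavy, but a set from which at least one element is guaranteed to survive. Suppose some $h'\in\mathcal{F}_h(M)$ has fewer than $k$ inputs of norm at least $\tau=\beta/(8Q|M|)^2$. Split $M$ into $\mathsf{Heavy}$ (at most $k-1$ inputs with norm at least $\tau$ under $h'$) and $\mathsf{Light}$.

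The base case, proved with the stronger bound $\beta/(4Q|M|)$, supplies the gap. The heaviest $u_0\in\mathsf{Heavy}$ has norm at least $\beta/(4Q|M|)$, which is $8$ times the worst-case loss $2Q|M|\tau=\beta/(32Q|M|)$ from flipping any subset of $\mathsf{Light}$.

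Now take any $h_1\in\mathcal{F}_h(\mathsf{Light})$, and let the hybrid $h_2$ agree with $h_1$ on $\mathsf{Light}$ and with $h'$ on $\mathsf{Heavy}$.
\begin{itemize}
\item $h_2=h'\oplus X_1$ with $X_1\subseteq\mathsf{Light}$, so $u_0$ keeps norm at least $\beta/(8Q|M|)$ under $h_2$.
\item $h_2=h_1\oplus X_2$ with $X_2\subseteq\mathsf{Heavy}$. If all of $\mathsf{Heavy}$ were below $\tau$ under $h_1$, then $u_0$ would have norm below $3Q|M|\tau<\beta/(8Q|M|)$ under $h_2$, a contradiction.
\end{itemize}
Protecting $\mathsf{Heavy}$ and recursing on $\mathsf{Light}$ with $k-1$ gives $|A_k|\le(k-1)+\binom{k-1}{2}=\binom{k}{2}$. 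Taking $k=\lceil\sqrt n/5\rceil$ leaves at least $4n/5$ freely reprogrammable inputs. This quadratic protection cost is where both the $\sqrt n$ and the $2^{\Omega(n)}$ come from.
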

\end{quote}

The heavy-coordinate reprogramming lemma defines the edges in the bipartite graph under discussion and thus directly yields a $2^{\Omega(n)}$ lower bound on $D_\mathsf{L}$.
One notices that in $g$, the number of heavy coordinates is $\Omega(\sqrt{n})$, which is quadratically worse than in the original $h$.
Looking ahead, this will cause some difficulty with bounding $d_\mathsf{R}$.

This lemma is the technical heart of our argument, and we outline its proof in the next subsection.

\paragraph{Upper-bounding the right-degree.}
Note that while $\bfx_h$ is generally no longer correct for the oracles $g$ returned by the heavy-coordinate reprogramming lemma, it is of course still in the code $C$.
Therefore to bound the right-degree, it suffices to argue:
\begin{quote}
	\textbf{Claim.} For any oracle $g$, there are very few $\bfx$’s that (\textit{i}) are $\big(\Omega(\sqrt{n}),1/\poly(n,Q)\big)$-heavily queried by $\mathcal{A}^g$ and (\textit{ii}) are in the code $C$.
\end{quote}
This is proved as follows.
By an elementary calculation, the fact that $\mathcal{A}^g$ makes at most $Q$ queries means there are few oracle coordinates $(i,x)$ with noticeable total query norm.\footnote{Indeed, for every query $j$,
we have $\sum_{(i,x)\in[n]\times\Sigma}
(\widetilde{w}^{\,j}_{i,x})^2
=
1$, and so for every $\eta>0$, there are at most $Q^2/\eta^2$
oracle inputs whose total query norm is at least $\eta$.}
Now form the lists  $\{S_i\}_{i=1}^n$, where $S_i$ contains those $x\in \Sigma$ such that $(i,x)$ has noticeable total query norm.
By definition, any $\big(\Omega(\sqrt{n}),1/\poly(n,Q)\big)$-heavily queried $\bfx=(x_1,\ldots, x_n)$ has $\Omega(\sqrt{n})$ coordinates such that $x_i\in S_i$.
It remains only to invoke the list-recovery property of our code $C$, which precisely stipulates that there are few $\bfx$ in $C$.

\bigskip

At this point a reader familiar with the details of the code $C$ employed in the original YZ problem may notice an issue: the code in the original YZ problem has inadequate list recovery parameters to be applied in the argument above; in particular, it can only limit the number of codewords that are consistent with a constant fraction of lists $S_i$.
On the other hand, the parameters of the heavy coordinate reprogramming lemma require us to control the number of codewords consistent with a vanishing $O(1/\sqrt{n})$ fraction of $S_i$'s,
and it is not clear how to strengthen the lemma to loosen our requirements.

Our solution will be to circumvent the difficulty by modifying $C$ to improve its list recovery parameters, which allows the above argument to go through.
However, the modification also creates a new issue: the \emph{completeness} of the protocol is no longer guaranteed, because the honest YZ algorithm no longer succeeds with high probability.
Our solution to this is to replace the uniformly random oracle with a \textit{biased} random oracle, which recovers the YZ algorithm's high probability of success.
A consequence of these complications is that the counting arguments just described, both for \Cref{clm:claim1-intro} and for \Cref{clm:claim2-intro}, must become \textit{weighted} counting arguments to match the biasedness of the oracles---but this final adjustment is fairly straightforward.

We now turn to explaining the ideas behind the heavy-coordinate reprogramming lemma.
Then we detail the necessary code modifications and oracle biasing.

\begin{remark*}The use of biased oracles and codes with better list-recovery seems to be an emerging theme for the YZ problem.
Related modifications to the YZ problem have appeared in recent works, both in communication complexity \cite{goosQTFNP} and for separating QCMA from QMA \cite{BHV26}.
\end{remark*}

\subsection{Reprogramming heavy coordinates and the query-norm polytope}
Here we explain our approach to reprogramming oracles on coordinates receiving large total query norm (\Cref{lem:heavy-repro}), beginning with a new geometric perspective afforded by the query norm.

\paragraph{Local stability of query norms.}
As remarked above, the swapping lemma is inadequate for understanding the distribution of total query norm on coordinates of a given $\bfx$ after reprogramming the oracle at locations with large query norm.
We obtain additional control by applying the swapping lemma to the intermediate states of the computation as follows.
Let $h$ and $h'$ differ
on a set $X\subseteq[n]\times\Sigma$.
For any oracle input $u=(i,x)$, the reverse triangle inequality gives
\[
\begin{aligned}
\left|
\widetilde{w}^{\,j}_{u}(h)
-
\widetilde{w}^{\,j}_{u}(h')
\right|
&=
\left|\,
\left\|\Pi_u\ket{\psi^{\smash{h}}_{\smash{j-1}}}\right\|_2
-
\left\|\Pi_u\ket{\psi^{\smash{h'}}_{\smash{j-1}}}\right\|_2\,
\right|\\
&\leq
\left\|\,
\ket{\psi^{\smash{h}}_{\smash{j-1}}}
-
\ket{\psi^{\smash{h'}}_{\smash{j-1}}}
\,\right\|_2\\
&\leq
2\sum_{t<j}\sum_{v\in X}
\widetilde{w}^{\,t}_{v}(h),
\end{aligned}
\]
where the last inequality is the swapping lemma applied immediately
before the \(j\)-th query.  Summing over \(j\in[Q]\) and applying another triangle inequality yields
\[
\left|
\widetilde{w}_{u}(h)
-
\widetilde{w}_{u}(h')
\right|
\leq
2Q\sum_{v\in X}\widetilde{w}_{v}(h).
\]
We call this inequality the \emph{local stability of query norms}.  It says that
if the inputs on which two oracles differ have small query norm under
$h$, then the total query norm assigned to any fixed input, $u$, cannot
change substantially when passing from $h$ to $h'$.

\paragraph{The query-norm polytope.}
Simultaneously applying local stability across all oracles yields a set of linear constraints whose feasible region we refer to as the \emph{query-norm polytope}.
With $\mathcal{H}$ denoting the set of all oracles and writing $\widetilde{w}_{i,x}(h)$ to index the $(i,x,h)$\textsuperscript{th} entry of a vector $\widetilde{w}\in\mathbb{R}^{n\cdot|\Sigma|\cdot|\mathcal{H}|}$, the query-norm polytope is given by:
\[\mathbf{P}_{Q}= \left\{\;\widetilde{w}\in \mathbb{R}^{n\cdot|\Sigma|\cdot|\mathcal{H}|}\;\;\middle|\;\;
\begin{aligned}&\widetilde{w}_{i,x}(h) \in [0,Q]\quad\text{for all}\quad i\in[n],x\in\Sigma,h\in\mathcal{H},\\[0.5em]
&|\widetilde{w}_{i,x}(h)-\widetilde{w}_{i,x}(h^S)|\leq 2Q\textstyle\sum_{(i',x')\in S}\widetilde{w}_{i',x'}(h)\\
&\text{\hspace{7.3em}for all}\quad i\in [n],x\in\Sigma,h\in\mathcal{H}, S\subseteq [n]\times \Sigma
\end{aligned}\;\right\}\,.\]
As before, $h^{S}$ denotes the oracle obtained from $h$ by flipping its output on every input in $S$.

We will be interested in the constraints imposed by the query-norm polytope on a family of reprogrammings.
To that end, fix an oracle $h$ and consider a set $M\subseteq[n]\times\Sigma$
such that
\[\widetilde{w}_{u}(h)\geq\beta \quad\text{for every}\quad u\in M\,.\]

The main consequence of the query-norm polytope is that query norm cannot
disappear from all of \(M\) under a large family of reprogrammings.
More precisely, let \(k\) be any integer satisfying
\[
1+\binom{k}{2}\leq |M|.
\]
The main claim internal to the proof of the heavy-coordinate reprogramming lemma is that there exists a set of ``protected'' inputs
\[
A_k\subseteq M,
\qquad
|A_k|\leq \binom{k}{2},
\]
such that the following holds.  For every subset
$S\subseteq M\setminus A_k$,
the adversary interacting with the reprogrammed oracle \(h^S\) assigns
query norm at least
\[
\tau
:=
\frac{\beta}{(8Q|M|)^2}
\]
to at least \(k\) inputs in \(M\).  In other words, after protecting
only \(\binom{k}{2}\) inputs, we may freely reprogram an arbitrary
subset of all remaining inputs in $M$ while guaranteeing that at least \(k\)
inputs in \(M\) continue to have noticeable query norm after reprogramming
(though the identities of these $k$ still-noticeable coordinates may differ across reprogrammed oracles $h^S$).
Next, we provide some intuition for why this is true.

\paragraph{Intuition for the polytope argument.}
Consider first the case \(k=1\).  Suppose that some reprogramming made
every input in \(M\) have very small query norm.  Starting from this
reprogrammed oracle \(h'\), reverse the reprogramming to go back to \(h\).
Because all the inputs being changed are light under the reprogrammed
oracle, local stability says that this reversal cannot substantially
increase the query norm at any particular input.  But this contradicts the
assumption that every input in \(M\) has query norm at least \(\beta\)
under \(h\).  Thus, every reprogramming leaves at least one input in
\(M\) noticeably queried.

The general statement iterates this observation.  If every
reprogramming already leaves \(k\) heavy inputs, there is nothing to
prove.  Otherwise, fix an oracle \(h^\star\) having fewer than \(k\)
heavy inputs and partition \(M\) into the sets
\[
\mathsf{Heavy}
:=
\left\{
u\in M:
\widetilde{w}_u(h^\star)\geq\tau
\right\}
\qquad\text{and}\qquad
\mathsf{Light}
:=
M\setminus\mathsf{Heavy}.
\]

\(\mathsf{Heavy}\) contains at least \(1\) input and at most
\(k-1\) inputs.  
A two-step hybrid argument based on local stability shows that any reprogramming of $H$ on arbitrary set of inputs in \(\mathsf{Light}\) cannot make every input in \(\mathsf{Heavy}\) light, i.e., 
at least one of the heavy inputs of \(h^\star\) must survive. 
We then
apply the argument recursively to \(\mathsf{Light}\), obtaining another
\(k-1\) heavy inputs there at the cost of protecting $\binom{k-1}{2}$ inputs.

To make the conclusion uniform over all reprogrammings, we protect the
at most \(k-1\) inputs in \(\mathsf{Heavy}\), together with the inputs
protected by the recursive application.  Thus, the number of protected
inputs satisfies
\[
|A_k|
\leq
(k-1)+\binom{k-1}{2}
=
\binom{k}{2}.
\]
Note here that guaranteeing \(k\) surviving heavy inputs costs
\(O(k^2)\) protected inputs.

\paragraph{The remaining bottleneck.}
Suppose that \(M\) contains a linear number of symbols of a codeword.
To leave a linear number of symbols freely reprogrammable, the
polytope argument above allows us to set
\[
k=\Theta(\sqrt n).
\]
Consequently, we obtain exponentially many reprogrammed oracles, under every one
of which the adversary continues to place noticeable query norm on
\(\Theta(\sqrt n)\) symbols of the same codeword.
This is precisely what was promised by the heavy-coordinate reprogramming lemma, \Cref{lem:heavy-repro}.

Unfortunately, as mentioned above, this falls short of what is needed to use the code from the prior work.
The original Yamakawa--Zhandry code is list recoverable when a codeword agrees with the supplied symbol lists on a linear number of coordinates.
In contrast, our polytope argument only preserves \(\Theta(\sqrt n)\) heavily queried symbols.
One could hope to prove a stronger reprogramming statement that preserves a linear number of heavy inputs while still leaving a linear number of inputs freely
reprogrammable.
We record this as an open question:
\begin{question}
    \label{q:heavy-reprogramming}
    Can the heavy-coordinate reprogramming lemma (\Cref{lem:heavy-repro}) be improved to the following?
    Let $h$ be any oracle and let $\bfx \in \Sigma^n$ that is $(\Omega(n),\beta)$-heavily queried by a $Q$-query algorithm $\mathcal{A}^h$.
    Then there are $2^{\Omega(n)}$-many oracles $g$ for which $\bfx$ is $\big(\Omega(n),\Omega({\beta}/\poly(Q,n))\big)$-heavily queried.
\end{question}
\noindent Note that it is possible that \Cref{q:heavy-reprogramming} could be resolved using nothing more than the query-norm polytope framework established above, which is convenient because it reduces \Cref{q:heavy-reprogramming} to a question purely about high-dimensional geometry.

As we explain next, we circumvent \Cref{q:heavy-reprogramming} by modifying the code so that list
recovery requires only $\Theta(\sqrt n)$ approximately known
symbols.

\subsection{Modifying the code and biasing the oracle}
We use a folded Reed--Solomon code $C\subseteq\Sigma^n$
of length $n$ and rate
$
R=\frac{1}{10\sqrt n}.
$
For any constant $c\in(0,1/2)$, define
$
\ell=\left\lceil 2^{n^c}\right\rceil.
$
The code satisfies the following low-agreement list-recovery property.
For every sequence of lists
\[
S_1,\ldots,S_n\subseteq\Sigma,
\qquad
|S_i|\leq\ell,
\]
there are at most
$
L=2^{O(n^c\log n)}
$
codewords \(x\in C\) such that $
x_i\in S_i$
on at least 
$\frac{\sqrt n}{5}$
coordinates.  Thus, the agreement required by list recovery exactly
matches the number of surviving heavy symbols supplied by the
query-norm polytope.

This lower-rate code solves the security mismatch, but it is not
immediately compatible with the Yamakawa--Zhandry quantum algorithm.
That algorithm moves to the Fourier basis and decodes an error relative
to the dual code \(C^\perp\).  For our choice of parameters, 
the dual code decoder tolerates errors supported on only \(O(\sqrt n)\) symbols.
Unfortunately, the error in the dual basis is nonzero in at least a constant fraction of coordinates, which makes decoding in the dual basis impossible and destroys completeness.

\paragraph{Using a biased random oracle.}
We repair correctness by biasing the random oracle\footnote{Note that our eventual certified randomness protocol will be in the ``standard'' quantum random oracle model.}.  Let \(p\) be the
unique power of \(1/2\) satisfying
\[
\frac{1}{80\sqrt n}
<
p
\leq
\frac{1}{40\sqrt n}.
\]
For every input \((i,x)\in[n]\times\Sigma\), independently sample
\[
H_i(x)=
\begin{cases}
1, & \text{with probability }p,\\
0, & \text{with probability }1-p.
\end{cases}
\]
The verifier continues to accept a codeword \(\bfx=(x_1,\ldots, x_n)\) only when $H_i(x_i)=0$ for every $i\in[n]$.
Since \(p\) is a negative power of two, this biased oracle can be
implemented from an ordinary random oracle: truncate the ordinary
oracle output to \(d\) bits, where \(p=2^{-d}\), and output \(1\) if
and only if all \(d\) bits are equal to \(1\).

As a result of this bias, the uniform superposition over the
zero-preimages of \(H_i\) is now highly concentrated on the zero
frequency after applying the Fourier transform.  The Fourier error is
nonzero in any particular coordinate with probability only
\(O(p)=O(1/\sqrt n)\).  Hence the total number of nonzero
coordinates is \(O(\sqrt n)\) with overwhelming probability, placing
the error within the unique-decoding radius of \(C^\perp\) which fixes the completeness problem described above.

\subsection{Security of the updated construction}
\label{sec:overview-biased-security}

\paragraph{Weighted counting under the biased oracle.}
The query-norm bounds, the swapping lemma, local stability, and the
query-norm polytope are pointwise statements about fixed oracles, so
they are unaffected by the bias.
Only the final counting steps in the reprogramming arguments must change.
Recall that if $h^S$ is obtained from $h$ by changing
$|S|$ oracle values from zero to one, then
\[
\frac{\Pr[H=h^S]}
     {\Pr[H=h]}
=
\left(\frac{p}{1-p}\right)^{|S|}.
\]
More generally, if \(T\) is a set of zero-valued inputs under $h$,
then the total probability mass of all reprogrammings of \(T\) compared to $\Pr[H=h]$ is
\begin{equation}
\label{eq:mass}
\frac{\sum_{S\subseteq T}\Pr[H=h^S]}{\Pr[H=h]} = 
\sum_{S\subseteq T}
\left(\frac{p}{1-p}\right)^{|S|}=
\left(\frac{1}{1-p}\right)^{|T|}.
\end{equation}

In the first reprogramming argument (\Cref{clm:claim1-intro}, outlined in \Cref{sec:claim1-intro}), the families of oracle-output
pairs are disjoint, and the adversary's probability of outputting
\(x\) decreases by at most a factor of two.
Taking $|T|=0.1n$, weighted counting therefore shows that
\[\Pr[H\text{ is bad\textsubscript{1}}]\leq \frac{2}{\alpha}(1-p)^{0.1n}\,.\]

In the second reprogramming argument (\Cref{clm:claim2-intro}, outlined in \Cref{sec:claim2-intro}), recall that we defined a bipartite graph where the left vertices are bad\textsubscript{2} oracles and the right vertices are all oracles. We then defined edges via the heavy-coordinate reprogramming lemma, and aimed to obtain a lower bound $D_\mathsf{L}$ on the degree of the left vertices and an upper bound $d_\mathsf{R}$ on the degree of the right. In the standard random oracle case, counting the number of edges in two ways then gives that the fraction of bad\textsubscript{2} oracles is at most $\frac{d_{\mathsf{R}}}{D_L}$, which, for a standard random oracle, coincides with $\Pr[H\text{ is bad\textsubscript{2}}]$. For our biased oracle, we should work directly with the quantity of interest, namely the probability mass on bad\textsubscript{2} oracles. 


Note that the total probability mass of all reprogrammings is larger than the probability mass of the original oracle by a factor of at least $
1/(1-p)^{4n/5}$, by \eqref{eq:mass}.
Since each resulting oracle belongs to at most \(L\) families, we have
\[
\begin{aligned}
\Pr[H\text{ is bad\textsubscript{2}}]\leq L(1-p)^{4n/5}
&=
\exp\bigl(-\Omega(\sqrt n)\bigr)
\end{aligned}
\]

\paragraph{From reprogramming to min-entropy.}
We can now put the two reprogramming arguments together. Suppose that the verifier accepts with probability at least \(\delta\), where \(\delta\) is inverse polynomial, but that the accepting output distribution has min-entropy at most \(h_\infty=o(n^c)\). Then some correct codeword \(\bfx\) has conditional probability at least \(2^{-h_\infty}\) among the accepting outputs, and hence unconditional output probability at least
$
\alpha
=
\delta \cdot 2^{-h_\infty}$.
In particular, $\alpha^{-1}
=
2^{o(n^c)}$.
Then, the first reprogramming argument (\Cref{clm:claim1-intro}) says that, except for an exceptional set of probability mass at most
$$
2\alpha^{-1}(1-p)^{0.1n},
$$
such a predictable correct answer must be heavily queried on almost all of its symbols. Since \(p=\Theta(1/\sqrt n)\) and \(c<1/2\), the factor \(\alpha^{-1}=2^{o(n^c)}\) is dominated by the \(\exp(-\Omega(\sqrt n))\) decay of \((1-p)^{0.1n}\), and hence this exceptional probability is negligible. 

The second reprogramming argument (\Cref{clm:claim2-intro}) says that the probability that any correct codeword is heavily queried on almost all of its symbols is at most
$$
L(1-p)^{4n/5}
=
\operatorname{negl}(n),
$$
where \(L=2^{O(n^c\log n)}\). Thus, except with negligible probability over the oracle, the two conclusions are incompatible. We conclude that every adversary making \(Q=2^{o(n^c)}\) queries and causing the verifier to accept with noticeable probability must produce an accepting output distribution with min-entropy at least \(h_\infty=o(n^c)\).

\paragraph{Returning to the standard QROM.}
Finally, although we have described the construction using a biased random
oracle \(H\), it can be implemented using an ordinary uniform random
oracle.  Write \(p=2^{-d}\), and let \(F\) be a uniformly random
\(d\)-bit-output oracle.  We may now define the biased oracle $H$ via
\[
H(u)=1
\quad\Longleftrightarrow\quad
F(u)=1^d,
\]
so that \(H(u)=1\) with probability exactly \(2^{-d}=p\), independently
for every input \(u\).

There is a final, minor subtlety: one might worry that the full oracle \(F\) reveals more
information than the single biased bit \(H\).  To handle this, we may
equivalently generate \(F\) by first sampling \(H\), then independently
sampling a residual oracle \(G\), and setting
\[
F(u)
=
\begin{cases}
1^d, & H(u)=1,\\
G(u), & H(u)=0,
\end{cases}
\]
where \(G(u)\) is uniform over
\(\{0,1\}^d\setminus\{1^d\}\).  For every fixed choice \(G=g\), an
adversary querying the full oracle \(F_{H,g}\) can be viewed as an adversary querying \(H\), with $g$ incorporated into its computation between queries.  Our security bounds hold for every such
fixing of \(g\). Applying the biased-oracle result after fixing \(g\), and
then averaging over \(G\), proves the min-entropy guarantee conditioned
on the entire uniform oracle \(F\).  Hence the construction is secure
in the standard QROM. We provide more details in Appendix~\ref{sec:app}.

\section{Preliminaries}\label[section]{sec:prelims}

The following lemma says that the Euclidean distance between two states upper-bounds the trace distance.
    \begin{lemma}\label[lemma]{thm:trace-dist-euclidian-dist}
        For any quantum pure states $\ket{\psi}$ and $\ket{\phi}$, \[\mathsf{TraceDist}\left(\ket{\psi}, \ket{\phi}\right) \leq \left\|\ket{\psi} - \ket{\phi}\right\|_2\,.\]
    \end{lemma}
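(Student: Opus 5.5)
The plan is to reduce to a two-dimensional computation and then compare the two quantities explicitly. Both $\ket{\psi}$ and $\ket{\phi}$ lie in the span of at most two vectors, so we may work in that two-dimensional subspace. Since the trace distance between the pure states $\ketbra{\psi}$ and $\ketbra{\phi}$ does not depend on global phases, I will first invoke the standard closed form for pure states,
\[
\mathsf{TraceDist}\left(\ket{\psi},\ket{\phi}\right)=\sqrt{1-|\langle\psi|\phi\rangle|^2}.
\]
This follows by noting that $\ketbra{\psi}-\ketbra{\phi}$ is traceless, Hermitian and of rank at most two. Its eigenvalues are therefore $\pm\mu$, and computing $\mathrm{tr}\big((\ketbra{\psi}-\ketbra{\phi})^2\big)=2-2|\langle\psi|\phi\rangle|^2=2\mu^2$ gives $\mu=\sqrt{1-|\langle\psi|\phi\rangle|^2}$. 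The trace distance is half the sum of the absolute values of the eigenvalues, which is exactly $\mu$.

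Next, I will expand the Euclidean side: $\|\ket{\psi}-\ket{\phi}\|_2^2=2-2\,\mathrm{Re}\langle\psi|\phi\rangle$. Write $c=|\langle\psi|\phi\rangle|\in[0,1]$. Since $\mathrm{Re}\langle\psi|\phi\rangle\leq c$, we get $\|\ket{\psi}-\ket{\phi}\|_2^2\geq 2-2c$. It then suffices to show $1-c^2\leq 2-2c$. This is equivalent to $(1-c)(1+c)\leq 2(1-c)$, which holds because $1+c\leq 2$ and $1-c\geq 0$.

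There is no real obstacle here. The only step that needs care is deriving the pure-state trace distance formula; the eigenvalue argument above handles it cleanly. One could instead cite Fuchs--van de Graaf, but the direct computation is self-contained. The bound also ignores phase, so it holds for the given representatives $\ket{\psi},\ket{\phi}$ without any phase alignment.
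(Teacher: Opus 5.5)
Your proof is correct and matches the paper's argument. The paper obtains $\mathsf{TraceDist}\left(\ket{\psi},\ket{\phi}\right)\le\sqrt{1-|\langle\psi|\phi\rangle|^2}$ by citing Fuchs--van de Graaf, then uses the same chain $\sqrt{1-c^2}\le\sqrt{2(1-c)}\le\sqrt{2(1-\mathrm{Re}\langle\psi|\phi\rangle)}=\|\ket{\psi}-\ket{\phi}\|_2$ that you use. Your only departure is proving the first step as an exact pure-state identity via the eigenvalue computation, which makes the argument self-contained.
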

    \begin{proof}
   This follows from the Fuchs--van de Graaf inequality, \textit{e.g.}, \cite[Ch. 9]{Nielsen2012}:
    \begin{align*}\mathsf{TraceDist}\left(\ket{\psi}, \ket{\phi}\right)\leq \sqrt{1-\abs{\braket{\psi}{\phi}}^2}
    \leq \sqrt{2\left(1-\abs{\braket{\psi}{\phi}}\right)}\leq \sqrt{2\left(1-\Re\braket{\psi}{\phi}\right)}=\left\|\ket{\psi}-\ket{\phi}\right\|_2.\tag*{$\qedhere$}\end{align*}
\end{proof}

\subsection{Quantum Random Oracle Model}\label[section]{sec:QROM}

Bellare and Rogaway \cite{BR93} defined the random oracle as a black-box oracle providing access to a uniformly random function whose inputs and outputs are arbitrarily long. The oracle $F$ takes as input a binary string $x \in \bit^*$ of any finite length, and outputs a uniformly random binary string $F(x)$ of any desired length. Furthermore, quantum algorithms and adversaries can query the random oracle on a quantum superposition of inputs, and the oracle responds coherently. This is known as the quantum random oracle model (QROM) \cite{BDFLSZ11}.

\paragraph{Biased Random Oracle.} 

Our construction uses a biased random oracle, which can in turn be constructed in the QROM. 
The biased random oracle $H$ is parametrized by a finite set $S$ and a (power of two) bias $p = 2^{-d}$ for some $d \in \bbN$. $H$ is a randomly chosen function mapping $S \to \bit$ that is sampled from the following distribution. For each input $x \in S$, independently sample $H(x) \in \bit$ such that $\Pr[H(x) = 1] = p$ and $\Pr[H(x) = 0] = 1-p$.

The biased random oracle $H$ can be implemented with query access to the regular random oracle $F$ by appropriately encoding any inputs and outputs of $F$. For instance, to query $H$ on some input $x \in S$, we represent $x$ as a bitstring, query $F(x)$, truncate the output to $d$ bits, and finally set $H(x) = 1$ if and only if $F(x) = 1^d$. This way, $\Pr[H(x) = 1] = \Pr[F(x) = 1^d] = 2^{-d} = p$.

In subsequent sections, this encoding is implicit, so we will treat the biased oracle $H$ as the random oracle without mentioning $F$.

\subsection{Query-Bounded Algorithms}\label[section]{sec:prelim-query-bounded-algorithms}
Here we define a quantum algorithm $\cA$ that makes $Q$ quantum queries to an oracle.

Let $\cA^\RO$ denote a quantum algorithm $\cA$ with quantum query access to oracle $\RO$. $\cA$ is parametrized by a number $Q \geq 1$, which is the maximum number of queries that $\cA$ can make. 

Without loss of generality, let $\cA^\RO$ operate as follows: 
\begin{itemize}
    \item $\cA$ starts with an initial pure state $\ket{\psi_0} = \ket{\psi^H_0}$ on registers $R = (R_Q, R_A)$
    
    \item For each query $j \in [Q]$:
    \begin{itemize}
        \item $\cA$ submits the query register $R_Q$ of $\ket{\psi_{j-1}^\RO}$ to the oracle $\RO$. Then $\RO$ acts as a phase oracle on $R_Q$, and returns $R_Q$ to $\cA$.
        \item $\cA$ applies $U_j$ to its state to obtain a new state $\ket{\psi_{j}^\RO}$.
    \end{itemize}
    \item Finally, $\cA$ applies measurement $M$ to $\ket{\psi_{Q}^\RO}$ and outputs the measurement outcome.
\end{itemize}

Next, we define the \textit{query norm} $\widetilde{w}_{i,x}^{j}(\RO)$ to upper-bound the amplitude magnitude given to query $(i,x)$ during the $j$-th query. In the literature on quantum query complexity, it is common to deal with the related quantity of \textit{query weight}, which is the probability of obtaining a query $(i,x)$ if we measure the $j$-th query. Query norm is the square root of query weight.

\begin{definition}[Query Norm]
For each possible classical query $(i,x) \in [n] \times \Sigma$ and each query index $j \in [Q]$, let $\widetilde{w}_{i,x}^{j}(\RO)$ be the \textbf{query norm} that $\cA^\RO$ gives to $(i,x)$ on the $j$-th query:
\[\widetilde{w}_{i,x}^{j}(\RO) = \left\|\Pi_{i,x} \ket{\psi_{j\smash{-1}}^\RO}\right\|_2\]
where $\Pi_{i,x}$ is the projector to the space querying the $(i,x)$ location of the oracle, i.e., \[\Pi_{i,x} = \ketbra{i,x}_{R_Q} \otimes \mathbb{I}_{R_A}.\] 
We define the \textbf{total query norm} that $\cA^{\RO}$ gives to $(i,x)$ as 
\[\widetilde{w}_{i,x}(\RO) := \sum_{j\in [Q]}\widetilde{w}^j_{i,x}(\RO).\]
\end{definition}

\begin{lemma}
\label{lemma:number_of_heavily_queried_coordinates}
The following holds with respect to any quantum query algorithm $\cA$ that makes $Q$ queries to an oracle $\RO$:
\begin{itemize}
    \item   For any $j \in [Q]$,
    \[\sum_{(i,x) \in [n] \times \Sigma} (\widetilde{w}^{j}_{i,x}(\RO))^2 = 1\;.\]
\item 
For any $\alpha>0$, there are at most $Q^2/\alpha^2$ inputs $(i,x)\in [n]\times \Sigma$ such that $\widetilde{w}_{i,x}(\RO)\ge \alpha$.
\end{itemize}
\end{lemma}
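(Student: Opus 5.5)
The plan is to prove the two items in order, with the second derived from the first by Cauchy--Schwarz and a counting argument.

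For the first item, I fix $j\in[Q]$ and use that the projectors $\Pi_{i,x}=\ketbra{i,x}_{R_Q}\otimes\mathbb{I}_{R_A}$, over all $(i,x)\in[n]\times\Sigma$, are mutually orthogonal and sum to the identity on $R=(R_Q,R_A)$. Here I take the query register $R_Q$ to range exactly over the oracle inputs $[n]\times\Sigma$; this is without loss of generality in the model of \Cref{sec:prelim-query-bounded-algorithms}. Because $\ket{\psi^{\RO}_{j-1}}$ is a unit vector, Pythagoras gives
\[\sum_{(i,x)}\big(\widetilde w^{j}_{i,x}(\RO)\big)^2=\sum_{(i,x)}\big\|\Pi_{i,x}\ket{\psi^{\RO}_{j-1}}\big\|_2^2=\big\|\ket{\psi^{\RO}_{j-1}}\big\|_2^2=1.\]
If one allows extra "no-query" basis states in $R_Q$, the same computation yields $\le 1$, which is all that is used later.

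For the second item, I first bound the sum of squared total query norms. By Cauchy--Schwarz over $j$,
\[\widetilde w_{i,x}(\RO)^2=\Big(\sum_{j=1}^{Q}\widetilde w^j_{i,x}(\RO)\Big)^2\le Q\sum_{j=1}^{Q}\big(\widetilde w^j_{i,x}(\RO)\big)^2.\]
Summing this over all $(i,x)$ and applying the first item for each $j$ gives
\[\sum_{(i,x)}\widetilde w_{i,x}(\RO)^2\le Q\cdot Q=Q^2.\]

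I then finish with a Markov/counting step. Each input with $\widetilde w_{i,x}(\RO)\ge\alpha$ contributes at least $\alpha^2$ to this sum, so there are at most $Q^2/\alpha^2$ such inputs.

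I do not expect a real obstacle. The only point needing care is the Cauchy--Schwarz step: it loses a factor of $Q$, which is exactly what produces the stated bound $Q^2/\alpha^2$, as opposed to $Q/\alpha^2$ for query weight.
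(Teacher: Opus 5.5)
Your proposal is correct and follows essentially the same route as the paper: the projectors $\Pi_{i,x}$ resolve the identity for the first item, and Cauchy--Schwarz over $j$ followed by comparison with the first item gives the second. The only difference is bookkeeping: you bound $\sum_{(i,x)}\widetilde{w}_{i,x}(\RO)^2\le Q^2$ over all inputs and then count, whereas the paper lower-bounds the contribution of each heavy input by $\alpha^2/Q$ and compares with the total $Q$.
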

\begin{proof}
For the first item,
    \begin{align*}
        \sum_{(i,x) \in [n] \times \Sigma} (\widetilde{w}^{j}_{i,x}(\RO))^2 &= \sum_{(i,x) \in [n] \times \Sigma} \bra{\psi_{j-1}^{\RO}} \Pi_{(i,x)}  \ket{\psi_{j-1}^{\RO}} \\
        &= \bra{\psi_{j-1}^{\RO}} \cdot \left(\sum_{(i,x) \in [n] \times \Sigma} \Pi_{(i,x)} \right) \cdot \ket{\psi_{j-1}^{\RO}}\\
        &= \bra{\psi_{j-1}^{\RO}} \ket{\psi_{j-1}^{\RO}}\\
        &= 1\;.
    \end{align*}

For the second item, let 
    \[X = \{(i,x)\in [n]\times \Sigma : \widetilde{w}_{i,x}(\RO) \ge \alpha\}.\]
Then, for every $(i,x) \in X$, by Cauchy--Schwarz,
\[
\sum_{j=1}^{Q} \left(\widetilde{w}^{j}_{i,x}(\RO)\right)^2 
\geq \frac{1}{Q}\cdot \left(\sum_{j=1}^Q \widetilde{w}^{j}_{i,x}(\RO)\right)^2 \ge \frac{\alpha^2}{Q}
\]
Thus, summing over all $(i,x)\in X$ we get 
\[\sum_{(i,x)\in X} \sum_{j=1}^{Q} (\widetilde{w}^{j}_{i,x}(\RO))^2 \ge \frac{\alpha^2}{Q}\cdot |X|\;.\]
On the other hand, from the first item, we get
\begin{align*}
    \sum_{j=1}^{Q}\sum_{(i,x)\in X} (\widetilde{w}^{j}_{i,x}(\RO))^2 \le \sum_{j=1}^Q \sum_{(i,x) \in [n] \times \Sigma} (\widetilde{w}^{j}_{i,x}(\RO))^2 = Q\;.
\end{align*}
By comparing the two, we get $\frac{\alpha^2}{Q} \cdot |X|\le Q$, or equivalently, $|X|\le Q^2/\alpha^2$, which completes the proof.
\end{proof}
\Cref{thm:swapping-lemma} says that the adversary's states on two oracles $\RO$ and $\RO'$ will be close in Euclidean distance if the adversary gives small query norm to the positions where the oracles differ. Our version of this lemma is adapted from, though different from, \cite[Lemma 2]{Vaz98} and \cite[Theorem 3.3]{BBBV97}.
\begin{lemma}[Swapping Lemma, Adapted from 
    \cite{Vaz98} Lemma 2,
    \cite{BBBV97} Theorem 3.3]\label[lemma]{thm:swapping-lemma}
        Given two oracles $\RO, \RO'$, let $X \subseteq [n] \times \Sigma$ be the subset of inputs on which $\RO$ and $\RO'$ differ. Then, for any $j \in [Q]$,
        \[\left\|\ket{\psi^\RO_{j}} -  \ket{\psi^{\RO\smash{'}}_{j}}\right\|_2 \leq 
            2\sum_{j'=1}^{j} \sum_{(i,x) \in X} \widetilde{w}_{i,x}^{j'}(\RO)\,.\]
    \end{lemma}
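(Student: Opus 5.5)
The plan is the standard hybrid argument, written with phase oracles: telescope the difference between the two runs query by query. Let $O_\RO$ denote the phase oracle unitary acting on $R_Q$. Since $\RO$ is Boolean, $O_\RO\ket{i,x} = (-1)^{\RO_i(x)}\ket{i,x}$, and $O_\RO$ acts as the identity on $R_A$. After query $j$ the algorithm's state is $\ket{\psi^\RO_j} = U_j O_\RO \ket{\psi^\RO_{j-1}}$, and both runs start from the same state $\ket{\psi_0}$.

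For a single step, add and subtract $U_j O_{\RO'}\ket{\psi^\RO_{j-1}}$ and use that $U_j$ and $O_{\RO'}$ are unitary. This gives
\[
\left\|\ket{\psi^\RO_j}-\ket{\psi^{\RO'}_j}\right\|_2 \le \left\|(O_\RO-O_{\RO'})\ket{\psi^\RO_{j-1}}\right\|_2 + \left\|\ket{\psi^\RO_{j-1}}-\ket{\psi^{\RO'}_{j-1}}\right\|_2 .
\]
Unrolling this recursion from $j$ down to $0$, where the difference vanishes, bounds the left-hand side of the lemma by
\[
\sum_{j'=1}^{j}\left\|(O_\RO-O_{\RO'})\ket{\psi^\RO_{j'-1}}\right\|_2 .
\]
Note that every term is evaluated on the $\RO$-run only, which is exactly why the lemma's bound involves only the query norms $\widetilde{w}^{j'}_{i,x}(\RO)$.

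It remains to bound each term. The operator $O_\RO - O_{\RO'}$ is diagonal in the query basis. Its entry is $0$ on every $(i,x)\notin X$ and $\pm 2$ on every $(i,x)\in X$, so
\[
O_\RO - O_{\RO'} = \sum_{(i,x)\in X} c_{i,x}\,\Pi_{i,x}, \qquad |c_{i,x}| = 2 .
\]
By the triangle inequality,
\[
\left\|(O_\RO-O_{\RO'})\ket{\psi^\RO_{j'-1}}\right\|_2 \le 2\sum_{(i,x)\in X}\left\|\Pi_{i,x}\ket{\psi^\RO_{j'-1}}\right\|_2 = 2\sum_{(i,x)\in X}\widetilde{w}^{j'}_{i,x}(\RO).
\]
Summing over $j'$ gives the claim.

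This triangle-inequality step is deliberately lossy. The tighter bound would be the orthogonal sum $2\bigl(\sum_{(i,x)\in X} (\widetilde{w}^{j'}_{i,x})^2\bigr)^{1/2}$, but the norm form matches the stated lemma.

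There is no real obstacle here. The only care needed is to keep the hybrid anchored on the $\RO$-run, so that the query norms appear for $\RO$ rather than for $\RO'$, and to index correctly: the $j'$-th query acts on $\ket{\psi_{j'-1}}$, which is why the definition of $\widetilde{w}^{j'}_{i,x}(\RO)$ uses that pre-query state.
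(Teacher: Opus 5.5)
Your proposal is correct and follows essentially the same argument as the paper. Like the paper, you write $O_\RO - O_{\RO'}$ as a $\pm 2$-weighted sum of the projectors $\Pi_{i,x}$ over $(i,x)\in X$, bound its action on $\ket{\psi^\RO_{j'-1}}$ by the triangle inequality, and unroll the one-step recursion obtained by inserting $O_{\RO'}\ket{\psi^\RO_{j-1}}$ and using unitarity.
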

\begin{proof}
    Write $\ket{\psi_j} := \ket{\psi_j^\RO}$ and $\ket{\phi_j} := \ket{\psi_j^{\RO'}}$ for brevity, and write $O$ and $O'$ for the phase oracles implementing $\RO$ and $\RO'$, respectively. Thus $O\ket{i,x}\ket{a} = (-1)^{\RO(i,x)}\ket{i,x}\ket{a}$, and likewise for $O'$. In particular, $O$ and $O'$ agree on every basis state whose query register lies outside $X$, while on every $(i,x)\in X$ they differ by a relative phase of $-1$, so
    \[
    (O-O') = \sum_{(i,x)\in X}\Bigl((-1)^{\RO(i,x)}-(-1)^{\RO'(i,x)}\Bigr)\Pi_{i,x}
    = \sum_{(i,x)\in X}\pm 2\,\Pi_{i,x}\,.
    \]
    Consequently, for any state $\ket{\psi}$,
    \begin{align*}
        \bigl\|(O-O')\ket{\psi}\bigr\|_2
        &= \Bigl\|\sum_{(i,x)\in X}\pm 2\,\Pi_{i,x}\ket{\psi}\Bigr\|_2
        \le 2\sum_{(i,x)\in X}\bigl\|\Pi_{i,x}\ket{\psi}\bigr\|_2\,,
    \end{align*}
    where 
    the last step is the triangle inequality. 

    The algorithm begins in the same state under both oracles, so $\ket{\psi_0}=\ket{\phi_0}$. For each subsequent query index $j\in[Q]$,
    \[
    \ket{\psi_j} = U_j O\ket{\psi_{j-1}}\qquad\text{and}\qquad
    \ket{\phi_j} = U_j O'\ket{\phi_{j-1}}\,.
    \]
    Therefore
    \begin{align*}
        \bigl\|\ket{\psi_j}-\ket{\phi_j}\bigr\|_2
        &= \bigl\|U_j O\ket{\psi_{j-1}} - U_j O'\ket{\phi_{j-1}}\bigr\|_2
        \\&= \bigl\|O\ket{\psi_{j-1}} - O'\ket{\phi_{j-1}}\bigr\|_2\\
        &\le \bigl\|O\ket{\psi_{j-1}} - O'\ket{\psi_{j-1}}\bigr\|_2
          + \bigl\|O'\ket{\psi_{j-1}} - O'\ket{\phi_{j-1}}\bigr\|_2\\
        &= 
          \bigl\|(O-O')\ket{\psi_{j-1}}\bigr\|_2
          +\bigl\|\ket{\psi_{j-1}}-\ket{\phi_{j-1}}\bigr\|_2\\
        &\le 
          \left(2\sum_{(i,x)\in X}\bigl\|\Pi_{i,x}\ket{\psi_{j-1}}\bigr\|_2 \right) + \bigl\|\ket{\psi_{j-1}}-\ket{\phi_{j-1}}\bigr\|_2\\
        &= 
          \left(2\sum_{(i,x)\in X}\widetilde{w}_{i,x}^{j}(\RO)\right) + \bigl\|\ket{\psi_{j-1}}-\ket{\phi_{j-1}}\bigr\|_2\,.
    \end{align*}
    Unrolling the recurrence and using $\ket{\psi_0}=\ket{\phi_0}$ gives the claimed bound.
\end{proof}

\section{Reprogramming Lemmas}\label[section]{sec:reprogramming-lemma}

In this section, we present a core technique for reprogramming a random oracle (\Cref{lemma:subcube with square root heavy coordinates}), which will be used in our proof of the min-entropy property (\Cref{sec:main}). \Cref{lemma:subcube with square root heavy coordinates} considers an oracle $H$ on which the adversary gives high $(\geq \beta)$ query norm to each input in a given set $M$. Essentially, the adversary spreads its query norm broadly over $M$. Next, we will reprogram $H$ on a subset of $M$ such that the adversary still spreads its norm somewhat broadly over $M$. We show that there are $2^{\Omega(|M|)}$ ways to reprogram $\RO$ such that the adversary gives somewhat high $\left(\geq \frac{\beta}{(8Q|M|)^2}\right)$ query norm to $\sqrt{|M|}$ inputs in $M$.

\subsection{Local Stability}


We start by a straightforward corollary to the swapping lemma (\Cref{thm:swapping-lemma}), which we call the ``local stability of query norms''. The idea is to view query norms themselves as a measurement on the state. Thus, changing input coordinates with small query norms barely changes the query norms of all other coordinates.
\begin{lemma}[Local Stability of Query Norms]\label[lemma]{lemma:local-stability-lemma}
Given two oracles $\RO, \RO'$ over $[n]\times \Sigma$, let $X \subseteq [n] \times \Sigma$ be the subset of inputs on which $\RO$ and $\RO'$ differ.
Then, for any $j \in [Q]$, and any $(i,x)\in [n]\times \Sigma$,
         \[\left|\widetilde{w}^j_{i,x}(\RO) - \widetilde{w}^j_{i,x}(\RO')\right| \le 2\sum_{j'<j} \sum_{(i',x') \in X} \widetilde{w}_{i',x'}^{j'}(\RO)\]
\end{lemma}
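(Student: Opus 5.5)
The plan is to treat the query norm at the $j$-th query as the norm of a fixed projection applied to the pre-query state, and then transfer closeness of states to closeness of these norms. By definition, $\widetilde{w}^j_{i,x}(\RO) = \|\Pi_{i,x}\ket{\psi^{\RO}_{j-1}}\|_2$, and likewise for $\RO'$ with the same projector $\Pi_{i,x}$, since the projector does not depend on the oracle. So the quantity to bound is the difference of the norms of $\Pi_{i,x}$ applied to two states.

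First, I would apply the reverse triangle inequality:
\[
\left|\widetilde{w}^j_{i,x}(\RO) - \widetilde{w}^j_{i,x}(\RO')\right| \le \left\|\Pi_{i,x}\bigl(\ket{\psi^{\RO}_{j-1}} - \ket{\psi^{\RO'}_{j-1}}\bigr)\right\|_2 .
\]
Second, since $\Pi_{i,x}$ is an orthogonal projector, its operator norm is at most $1$. This bounds the right-hand side by $\|\ket{\psi^{\RO}_{j-1}} - \ket{\psi^{\RO'}_{j-1}}\|_2$.

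Third, I would invoke the Swapping Lemma (\Cref{thm:swapping-lemma}) with index $j-1$. This gives the bound $2\sum_{j'=1}^{j-1}\sum_{(i',x')\in X}\widetilde{w}^{j'}_{i',x'}(\RO)$, which is exactly the claimed sum over $j'<j$.

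The only edge case is $j=1$. There both pre-query states equal $\ket{\psi_0}$, so the difference is $0$, matching the empty sum. The Swapping Lemma is stated for $j\in[Q]$, so this case is handled directly rather than through it. I do not expect any real obstacle; the one point to check is that the Swapping Lemma charges query norms under $\RO$, not $\RO'$, and that orientation is exactly what the statement needs.
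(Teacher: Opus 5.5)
Your proposal is correct and matches the paper's proof essentially step for step: the reverse triangle inequality, the fact that $\Pi_{i,x}$ has operator norm at most $1$, and then the Swapping Lemma applied to the pre-query states $\ket{\psi^{\RO}_{j-1}}$ and $\ket{\psi^{\RO'}_{j-1}}$. Your separate treatment of $j=1$, where both states equal $\ket{\psi_0}$ and the sum is empty, is a small point of care that the paper leaves implicit.
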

\begin{proof}
    From \Cref{thm:swapping-lemma}, we have that the $\ell_2$-distance between   $\ket{\psi^{\RO}_{j-1}}$ and $\ket{\psi^{\RO'}_{j-1}}$ is at most \[2\sum_{j'<j} \sum_{(i',x') \in X} \widetilde{w}_{i',x'}^{j'}(\RO),\] so it suffices to prove that 
    \[
    \left|\widetilde{w}^j_{i,x}(\RO) - \widetilde{w}^j_{i,x}(\RO')\right| \leq \left\|\ket{\psi^\RO_{j-1}} - \ket{\psi^{\RO'}_{j-1}}\right\|_2\;.
    \]
    Indeed, by the definition of query  norm, 
 \begin{align*}
 \left|\widetilde{w}^j_{i,x}(\RO) - \widetilde{w}^j_{i,x}(\RO')\right| &= \left|\big\|\Pi_{i,x}\ket{\psi^\RO_{j-1}}\big\|_2 - \big\|\Pi_{i,x}\ket{\psi^{\RO'}_{j-1}}\big\|_2\right|\\
 &\le 
 \big\|\Pi_{i,x}\ket{\psi^\RO_{j-1}} - \Pi_{i,x}\ket{\psi^{\RO'}_{j-1}}\big\|_2\tag{Reverse Triangle Inequality}\\
  &\le 
 \big\|\ket{\psi^\RO_{j-1}} - \ket{\psi^{\RO'}_{j-1}}\big\|_2\;.\tag*{\qedhere}
 \end{align*}
\end{proof}

Given an oracle $\RO$ and a subset of inputs $X \subseteq [n] \times \Sigma$, 
we denote by $\RO \oplus X$ the oracle that is obtained from $\RO$ by flipping the coordinates in $X$, i.e., for each $(i,x) \in [n]\times \Sigma$,
$(\RO \oplus X)(i,x) = \RO(i,x)$ if $(i,x) \notin X$ and 
$(\RO \oplus X)(i,x) = 1-\RO(i,x)$ if $(i,x) \in X$.

\begin{corollary}
Given any oracle $\RO$, any subset $X \subseteq [n] \times \Sigma$, and any input $(i,x)\in [n]\times \Sigma$,
we have
\begin{equation}\label{eq:local_stability}
|\widetilde{w}_{i,x}(\RO)- \widetilde{w}_{i,x}(\RO \oplus X)| \le 2Q
\cdot \sum_{(i',x')\in X} \widetilde{w}_{i',x'}(\RO) 
\end{equation}
\end{corollary}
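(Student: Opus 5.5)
We sum the per-query bound of \Cref{lemma:local-stability-lemma} over all query indices $j\in[Q]$, applying it with $\RO' := \RO\oplus X$. By construction, $\RO\oplus X$ differs from $\RO$ exactly on $X$, so the lemma applies verbatim.

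First, by the definition of total query norm and the triangle inequality,
\[
\left|\widetilde{w}_{i,x}(\RO)-\widetilde{w}_{i,x}(\RO\oplus X)\right|
=\Bigl|\sum_{j\in[Q]}\bigl(\widetilde{w}^j_{i,x}(\RO)-\widetilde{w}^j_{i,x}(\RO\oplus X)\bigr)\Bigr|
\le \sum_{j\in[Q]}\left|\widetilde{w}^j_{i,x}(\RO)-\widetilde{w}^j_{i,x}(\RO\oplus X)\right|.
\]

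Next, we bound each summand by \Cref{lemma:local-stability-lemma}, namely by $2\sum_{j'<j}\sum_{(i',x')\in X}\widetilde{w}^{j'}_{i',x'}(\RO)$. Since all query norms are nonnegative, we may enlarge the inner range $j'<j$ to all $j'\in[Q]$. This gives the $j$-independent bound
\[
2\sum_{(i',x')\in X}\widetilde{w}_{i',x'}(\RO).
\]

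Finally, summing this bound over the $Q$ values of $j$ yields the factor $2Q$ and hence \eqref{eq:local_stability}. There is no real obstacle here; the only point requiring care is that the right-hand side is expressed in terms of the query norms under $\RO$, not under $\RO\oplus X$. This is exactly the form in which \Cref{lemma:local-stability-lemma} (via \Cref{thm:swapping-lemma}) is stated, so it matches.
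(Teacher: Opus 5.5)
Your proposal is correct and follows the paper's proof essentially step for step. Like the paper, you apply the triangle inequality to the sum over $j$, bound each term by the per-query local stability lemma, extend $j'<j$ to all $j'\in[Q]$ using nonnegativity, and collect the factor $2Q$.
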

\begin{proof}
\leavevmode
\vspace{-\baselineskip}
\setlength{\abovedisplayskip}{0pt}
\setlength{\abovedisplayshortskip}{0pt}
\begin{align*}
|\widetilde{w}_{i,x}(\RO)- \widetilde{w}_{i,x}(\RO \oplus X)|
 &=  \left|\sum_{j=1}^Q\left(\widetilde{w}^j_{i,x}(\RO)- \widetilde{w}^j_{i,x}(\RO \oplus X)\right)\right|\\
 &\le   \sum_{j=1}^Q \left|\widetilde{w}^j_{i,x}(\RO)- \widetilde{w}^j_{i,x}(\RO \oplus X)\right|\\
  &\le \sum_{j=1}^Q 2\sum_{j'<j} \sum_{(i',x')\in X} \widetilde{w}_{i',x'}^{j'}(\RO)\tag{\Cref{lemma:local-stability-lemma}}\\
 &\le 2Q\cdot \sum_{j'=1}^Q \sum_{(i',x')\in X}  \widetilde{w}_{i',x'}^{j'}(\RO)\\
 &= 2Q\cdot \sum_{(i',x')\in X} \widetilde{w}_{i',x'}(\RO)\;.\tag*{\qedhere}
\end{align*}
\end{proof}

\subsection{The Query Norm Polytope}

With \Cref{eq:local_stability}, the set of query norms that a quantum query algorithm can obtain on different oracles satisfies a system of linear inequality constraints. This is similar to a linear program, but without any objective function, thus we call it the \emph{query norm polytope}. 

In \Cref{sec:main}, we will consider an oracle $\RO$ for which the query algorithm places significant weight on each input in a set $M\subseteq [n]\times \Sigma$ of size $\Omega(n)$. The main result in this section is a proof that there are $2^{\Omega(|M|)}$ ways to reprogram $\RO$ by modifying a subset of the inputs in $M$, such that at least $\sqrt{|M|}$ of the inputs in $M$ still have significant query norm, and hence significant query weight.

For a set of inputs $M\subseteq[n]\times \Sigma$ and an oracle $\RO$, denote by 
\[\mathcal{F}_{\RO}(M)
    =\{\RO\oplus X : X\subseteq M\}\]
    the set of oracles obtained by reprogramming $\RO$ on any subset of $M$.

\begin{lemma}\label[lemma]{lemma:subcube with square root heavy coordinates}
    Let $M \subseteq [n]\times \Sigma$ be a non-empty set of inputs, let $\RO$ be an oracle, and let $\beta > 0$, such that for each $(i,x)\in M$, $\widetilde{w}_{i,x}(\RO)\ge \beta$.
    Then, for any $k \in \bbN$ such that $1+\binom{k}{2}\le |M|$, there exists a
    subset $A_k \subseteq M$ of size at most $\binom{k}{2}$
    such that for any oracle $\RO'\in \mathcal{F}_{\RO}(M\setminus A_k)$, there are at least $k$ inputs in $M$ for which $\cA^{\RO'}$ puts query norm $\geq \frac{\beta}{(8Q|M|)^2}$.
\end{lemma}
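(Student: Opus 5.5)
The plan is to prove the statement by induction on $k$. The threshold stays $\tau:=\beta/(8Q|M|)^2$ throughout, and the key design choice is that every recursive call is anchored at the \emph{original} oracle $\RO$, where all inputs of $M$ have query norm at least $\beta$. Because of this anchoring, the threshold does not degrade across levels of the recursion. The main tool is the local stability bound \eqref{eq:local_stability}: if two oracles differ only on a set $X\subseteq M$ whose inputs each have query norm $<\tau$ under the first oracle, then every query norm changes by at most $2Q|M|\tau$.

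\textbf{Base case and trivial case.} For $k=1$ take $A_1=\emptyset$. Suppose some $\RO'=\RO\oplus Y$ with $Y\subseteq M$ had all inputs of $M$ below $\tau$. Flipping $Y$ back to reach $\RO$ changes only inputs that are light under $\RO'$. By \eqref{eq:local_stability}, every $u\in M$ then satisfies $\widetilde w_u(\RO)\le \tau+2Q|M|\tau<\beta$, which is a contradiction.

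For general $k$, if every oracle in $\mathcal F_\RO(M)$ already has at least $k$ inputs of $M$ with norm $\ge\tau$, set $A_k=\emptyset$. Otherwise fix $\RO^\star=\RO\oplus Y$, with $Y\subseteq M$, having fewer than $k$ such inputs. Split $M$ into $\mathsf{Heavy}=\{u\in M:\widetilde w_u(\RO^\star)\ge\tau\}$, which has at most $k-1$ elements, and $\mathsf{Light}=M\setminus\mathsf{Heavy}$.

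\textbf{Recursion.} Apply the induction hypothesis for $k-1$ to the set $\mathsf{Light}$ with the same oracle $\RO$ and the same $\beta$. The size condition holds because $|\mathsf{Light}|\ge |M|-(k-1)\ge 1+\binom{k}{2}-(k-1)=1+\binom{k-1}{2}$. One point needs care: the threshold produced by the recursion is $\beta/(8Q|\mathsf{Light}|)^2\ge\tau$, so I would state the inductive claim with an arbitrary ambient size $N\ge|M|$ replacing $|M|$ in the threshold. This yields $A_{k-1}\subseteq\mathsf{Light}$ with $|A_{k-1}|\le\binom{k-1}{2}$, such that every oracle $\RO\oplus X$ with $X\subseteq\mathsf{Light}\setminus A_{k-1}$ has at least $k-1$ inputs of $\mathsf{Light}$ with norm $\ge\tau$. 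Set $A_k=\mathsf{Heavy}\cup A_{k-1}$, so that $|A_k|\le (k-1)+\binom{k-1}{2}=\binom{k}{2}$. Any $\RO'\in\mathcal F_\RO(M\setminus A_k)$ has the form $\RO\oplus X$ with $X\subseteq \mathsf{Light}\setminus A_{k-1}$. It therefore has $k-1$ heavy inputs in $\mathsf{Light}$, and it remains to produce one more heavy input in $\mathsf{Heavy}$.

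\textbf{Two-step hybrid (the main step).} Suppose, for contradiction, that every $u\in\mathsf{Heavy}$ has $\widetilde w_u(\RO\oplus X)<\tau$. Introduce the intermediate oracle $g=\RO\oplus\bigl((Y\cap\mathsf{Heavy})\cup X\bigr)$.
\begin{itemize}
\item $g$ differs from $\RO^\star$ only on inputs of $\mathsf{Light}$, all of which are light under $\RO^\star$. Hence every $\mathsf{Light}$ input has norm at most $\tau+2Q|M|\tau$ under $g$.
\item $g$ differs from $\RO\oplus X$ only on $Y\cap\mathsf{Heavy}$, whose inputs are light under $\RO\oplus X$ by assumption. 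Hence every $\mathsf{Heavy}$ input also has norm at most $\tau+2Q|M|\tau$ under $g$.
\end{itemize}
So all of $M$ has norm at most $3Q|M|\tau$ under $g$. Finally, $g$ differs from $\RO$ only on inputs of $M$. Applying \eqref{eq:local_stability} once more gives, for every $u\in M$,
\[
\widetilde w_u(\RO)\le 3Q|M|\tau+6Q^2|M|^2\tau\le 9Q^2|M|^2\tau<\beta,
\]
which contradicts the hypothesis $\widetilde w_u(\RO)\ge\beta$. Hence some input of $\mathsf{Heavy}$ has norm at least $\tau$ under $\RO\oplus X$.

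This input is disjoint from the $k-1$ heavy inputs found in $\mathsf{Light}$, which gives $k$ in total. I expect the main obstacle to be choosing the intermediate oracle $g$ correctly, so that each local-stability step changes only inputs that are certified light under the oracle it starts from. The constant bookkeeping is routine by comparison.
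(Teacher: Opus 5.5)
Your proof is correct and follows the paper's argument: the same induction, the same $\mathsf{Heavy}/\mathsf{Light}$ split at a bad oracle, and the same recursion on $\mathsf{Light}$ with $A_k=\mathsf{Heavy}\cup A_{k-1}$. Your intermediate oracle $g$ is exactly the paper's hybrid $\RO_2$. The only difference is in how the contradiction is closed, and it is cosmetic. The paper tracks a single input of $\mathsf{Heavy}$ whose norm under $\RO'$ is at least $\beta/(4Q|M|)$, which comes from a strengthened base case. You instead show that all of $M$ is light under $g$ and then re-run the base-case argument from $g$ back to $\RO$. Your ambient-size generalization is harmless but unnecessary, since $\beta/(8Q|\mathsf{Light}|)^2\ge\tau$ already suffices.
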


\begin{proof}
The proof is by induction on $k$.

Fix $\RO, n, \Sigma, \beta$. We prove by induction on $k$, that for all $M\subseteq[n]\times \Sigma$ of size $\ge 1+ \binom{k}{2}$, such that for all $(i,x)\in M$, $\widetilde{w}_{i,x}(\RO)\ge \beta$, there exists a
    subset $A_k \subseteq M$ of size at most $\binom{k}{2}$
    such that for any oracle $\RO'\in \mathcal{F}_{\RO}(M\setminus A_k)$, there are at least $k$ inputs in $M$ for which $\cA^{\RO'}$ puts query norm $\geq \frac{\beta}{(8Q|M|)^2}$.

\paragraph{\boldmath Base Case: $k=1$. Any reprogramming has one heavily queried input:} 
We begin with the special case of $k=1$. In this case $A_k = \emptyset$, and we need to prove that for any reprogramming obtained by flipping a subset $X\subseteq M$, i.e.\ for any $\RO'\in \mathcal{F}_{\RO}(M)$, at least one input in $M$ has query norm $\ge \frac{\beta}{(8Q|M|)^2}$.
In fact, we will show that at least one input in $M$ has query norm $\ge \frac{\beta}{4Q|M|}$, and we'll use the stronger lower bound later in the induction step as well.

Assume by contradiction that this is not the case. Then, there exists a set $X\subseteq M$, such that for all $(i,x)\in M$ we have 
\[
\widetilde{w}_{i,x}(\RO\oplus X) < \frac{\beta}{4Q|M|}
\]
But then, applying \Cref{eq:local_stability} with the role of $\RO$ and $\RO\oplus X$ swapped, on any $(i,x)\in M$ (at least one of which exists as $|M|\ge 1+\binom{k}{2}$), yields a contradiction, since
\begin{align*}
   \frac{3\beta}{4} \le |\widetilde{w}_{i,x}(\RO\oplus X) - \widetilde{w}_{i,x}(\RO) |\le 2Q\cdot \sum_{(i',x')\in X} \widetilde{w}_{i',x'}(\RO\oplus X) 
   \leq 2Q \cdot |X| \cdot \frac{\beta}{4Q|M|} \leq \frac{\beta}{2}\;.
\end{align*}
where the lower bound is by the hypothesis that $\widetilde{w}_{i,x}(\RO)\ge \beta$ for each $(i,x)\in M$.

\paragraph{Induction Step:}
Assume $k\ge 2$ and that the theorem is known for $k-1$.
Let $\tau := \frac{\beta}{(8Q|M|)^2}$.
There are two cases.

\medskip
\noindent\textbf{\boldmath Case 1: Every oracle in $\F_{\RO}(M)$ has at least $k$ inputs in $M$ with query norm at least
$\tau$.}
Then we are done immediately, by taking $A_k = \emptyset$.

\medskip
\noindent\textbf{\boldmath Case 2: Some oracle  in $\F_{\RO}(M)$ has fewer than $k$ inputs in $M$ with query norm at least $\tau$.}
Fix $\RO'$ to be such an oracle.
Define
\[
    \H:=\{(i,x)\in M: \widetilde{w}_{i,x}(\RO')\ge \tau\},
    \qquad
    \L:=M\setminus \H
\]
Since every oracle in $\mathcal{F}_{\RO}(M)$ has at least one input in $M$ with query norm $\ge \frac{\beta}{4Q|M|}$ (as we argued in the base case), we see that $\mathsf{Heavy}$ is non-empty, and by assumption its size is at most $k-1$.
Next, we claim that every oracle in $\F_{\RO}(\L)$ has at least one 
input in $\H$ whose query norm is at least $\tau$. 

Let $\RO_1$ be any oracle in $\F_{\RO}(\L)$. Our goal is to show that there exists $(i,x)\in \H$ such that $\widetilde{w}_{i,x}(\RO_1)\ge \tau$.

\paragraph{Hybrid Oracle.} Consider the hybrid oracle $\RO_2$ that equals $\RO_1$ on $\L$, equals $\RO'$ on $\H$, and equals both on $([n]\times \Sigma)\setminus M$.
We have that (i) $\RO_2 = \RO' \oplus X_1$ for some $X_1\subseteq \L$ on the one hand, and (ii) $\RO_2 = \RO_1 \oplus X_2$ for some $X_2\subseteq \H$ on the other hand.

\paragraph{\boldmath $\RO_2$ has a heavily queried input in $\H$.} Let $(i_0,x_0)\in \H$ be the input in $\H$ with the highest query norm according to $\cA^{\RO'}$. By the above discussion, its query norm is at least $\frac{\beta}{4Q|M|}$ (which is significantly larger than the threshold $\tau$ used to define $\H$).
From the first identity, by local stability (\Cref{eq:local_stability}), we get that its query norm on $\RO_2$ is at least 
\[\widetilde{w}_{i_0,x_0}(\RO_2) \ge \widetilde{w}_{i_0,x_0}(\RO') - 2Q\cdot \sum_{(i',x')\in X_1} \widetilde{w}_{i',x'}(\RO')
\ge \frac{\beta}{4Q|M|} - 2Q |M|\tau 
\ge \frac{\beta}{8Q|M|}\;.
\]

\paragraph{\boldmath $\RO_1$ has a heavily queried input in $\H$.} 
Next, assume towards contradiction that for every $(i,x)\in \H$, the query norm of $\cA^{\RO_1}$ on $(i,x)$ is less than $\tau$.
Then, for the above $(i_0,x_0)\in \H$ we get a contradiction from local stability (\Cref{eq:local_stability}) (using $\RO_2 = \RO_1 \oplus X_2$ for $X_2\subseteq \H$), since
\[
\widetilde{w}_{i_0,x_0}(\RO_2) \le \widetilde{w}_{i_0,x_0}(\RO_1) + 2Q\cdot  \sum_{(i,x)\in X_2} \widetilde{w}_{i,x}(\RO_1)
\le \tau + 2Q\cdot |\H|\cdot \tau \le   3Q|M|\cdot \tau < \frac{\beta}{8Q|M|}
\]

\paragraph{Applying the Induction Hypothesis.}
Observe that
\begin{align*}
    |\L| &=|M|-|\H|\\
    &\ge 1+\binom{k}{2}-(k-1)
    = 1 + \frac{k \cdot (k-1)}{2} - \frac{2 \cdot (k-1)}{2}\\
    &= 1 + \frac{(k-2) \cdot (k-1)}{2} = 1+\binom{k-1}{2}
\end{align*}
By the induction hypothesis applied to $\L$ and $k-1$, there is a set
$A_{k-1} \subseteq \L$ with
\[
    |A_{k-1}| \le \binom{k-1}{2}
\]
such that every oracle in $\F_{\RO}(\L\setminus A_{k-1})$ has at least $k-1$ inputs in $\L$ whose query norm is at least $\frac{\beta}{(8Q|\L|)^2} \ge \tau$.

But $\F_{\RO}(\L\setminus A_{k-1}) \subseteq \F_{\RO}(\L)$, and every oracle in $\F_{\RO}(\L)$ has at
least one additional input in $\H$ (which is disjoint from $\L$) that has query norm at least $\tau$. Hence, 
every oracle in $\F_{\RO}(\L\setminus A_{k-1})$ has at least $k$ inputs in $M$ whose query norm is at least $\tau$.

We take $A_k = A_{k-1}\cup \H$.
The number of coordinates in $A_k$ is at most $(k-1)+\binom{k-1}{2}
    =\binom{k}{2}$ and every oracle in $\F_{\RO}(M\setminus A_k)$ has at least $k$ inputs in $M$ whose query norm is at least $\tau$.
This completes the induction.
\end{proof}

    We remark that the only property about query norms that we used in the proof was local stability (\Cref{eq:local_stability}).

\section{Error-Correcting Codes}
In this section, we construct a family of error-correcting codes that will be used in our certifiable min-entropy construction. Our code family is similar to the one in \cite[Section~4]{YZ24}, except we adjust the parameters to allow list-recovery when only $O(\sqrt{n})$ symbols are approximately known.

We first define folded Reed--Solomon codes (\Cref{sec:ECC-definitions}). Then we construct a family of such codes by choosing the parameters appropriately (\Cref{sec:ECC-construction}). Finally, we state and prove the properties that we need from this code family (\Cref{sec:ECC-properties}). This section is adapted from \cite[Section~4]{YZ24} and uses similar definitions, notation, and proof strategies.

\subsection{Definitions}\label[section]{sec:ECC-definitions}
This section comes almost verbatim from \cite[Section~4]{YZ24}.\\

\noindent A code of length $n \in \mathbb{N}$ over a finite alphabet $\Sigma$ is a subset $C \subseteq \Sigma^n.$
\paragraph{Linear Codes.} A code $C$ is said to be linear if its alphabet is $\Sigma= \mathbb{F}_q$ for some prime power $q$ and $C \subseteq \mathbb{F}_q^n$ is a linear subspace of $\mathbb{F}_q^n$.
\paragraph{Folded Linear Codes.} A code $C$ is said to be a folded linear code if its alphabet is $\Sigma = \bbF_q^m$ for some prime power $q$ and a positive integer $m$ and $C \subseteq \Sigma^n$ is a linear subspace of $\bbF_q^{n\cdot m}$ where $n$ is the length of $C$ and we embed $C$ into $\bbF_q^{n\cdot m}$ in the canonical way. 

\paragraph{Dual Codes.} For a folded linear code $C \subseteq \Sigma^n$ over the alphabet $\Sigma = \bbF_q^m$, its dual code $C^\bot$ is defined as
\[C^\bot = \{\bfz \in \Sigma^n : \bfx \cdot \bfz = 0 \quad \forall \bfx \in C\}\]

\begin{definition}[List Recovery]\label[definition]{def:list-recovery}
Let \(\Sigma\) be a finite alphabet, let \(n\in\mathbb{N}\), and let
\(C\subseteq\Sigma^n\) be a code of  length $n$.  Let
\(\zeta\in[0,1]\), and let \(\ell,L\in\mathbb{N}\).
The code \(C\) is \textbf{\((\zeta,\ell,L)\)-list recoverable} if the
following holds:

For every sequence of sets
\[
  S_1,\ldots,S_{n}\subseteq\Sigma
  \qquad\text{such that}\qquad |S_i|\leq\ell
\qquad\forall i \in [n],
\]
there are at most \(L\) codewords
\(\bfx=(x_1,\ldots,x_{n})\in C\) satisfying
\[
  \left|\{\,i\in[n]:x_i\in S_i\,\}\right|
  \geq (1-\zeta)\cdot n.
\]
\end{definition}

\paragraph{Generalized Reed--Solomon Codes.} A generalized Reed--Solomon code $\mathsf{GRS}_{_{\mathbb{F}_q, \gamma,k, \bf{v}}}$ over $\mathbb{F}_q$ w.r.t. a generator $\gamma$ of $\mathbb{F}_q^*$, the degree parameter $0 \leq k\leq N$, and $\bfv = (v_1, \dots, v_N)\in {\mathbb{F}_q^*}^N$, where $N=q-1$ is defined as follows:
\[
\mathsf{GRS}_{_{\mathbb{F}_q, \gamma,k, \bf{v}}}:= \left\{\left[(v_1 f(\gamma), v_2f(\gamma^2),\dots , v_N f(\gamma^N)\right]:f \in \mathbb{F}_q[x]_{\text{deg}\leq k}\right\}
\]
where $\mathbb{F}_q[x]_{\text{deg}\leq k}$ denotes the set of polynomials over $\mathbb{F}_q$ of degree at most $k$. We remark that $\mathsf{GRS}_{_{\mathbb{F}_q, \gamma,k, \bf{v}}}$ is a linear code over $\mathbb{F}_q$ with length $N=q-1$ and dimension $k+1$. A Reed--Solomon
code is a special case of a generalized Reed--Solomon code where $\mathbf{v} = (1, \dots, 1)$.

There is a classical deterministic unique decoding algorithm $\mathsf{GRSUniqueDecode}_{\mathbb{F}_q, \gamma, k, \bf{v}}$ for $\mathsf{GRS}_{_{\mathbb{F}_q, \gamma,k, \bf{v}}}$ (e.g., the Berlekamp--Welch algorithm) that corrects up to $\lfloor (d_{\min}-1)/2 \rfloor = \lfloor (N-k-1)/2 \rfloor$ errors in time $\mathrm{poly}(N,\log q)$ \cite{Berlekamp68,BW86}.%
\footnote{One may clear the nonzero multipliers $v_i$ and apply any unique decoder for (ordinary) Reed--Solomon codes.} More precisely, for any $\textbf{z}\in \mathbb{F}_q^N$, if there exists $\textbf{x} \in \mathsf{GRS}_{_{\mathbb{F}_q, \gamma,k, \bf{v}}}$ with $\hw(\textbf{z}-\textbf{x})\le \lfloor (N-k-1)/2 \rfloor$, then $\mathsf{GRSUniqueDecode}_{\mathbb{F}_q, \gamma, k, \bf{v}}(\textbf{z})$ returns this (necessarily unique) codeword; otherwise it returns $\bot$.

\paragraph{Folded Reed--Solomon Codes.} A folded Reed--Solomon Code $\mathsf{FRS}$ is parametrized by a prime power $q$, a generator $\gamma$ of $\bbF_q^*$, the degree parameter $0 \leq k < N$, where $N= q-1$, and the folding parameter $m \in \bbN$, where $n = \tfrac{N}{m} \in \bbN$. Then the code $\mathsf{FRS}_{\mathbb{F}_q, \gamma, m,k}$ is defined as follows:
\[
\mathsf{FRS}_{\mathbb{F}_q, \gamma, m,k}= \left\{\left[f(\gamma^{jm+1}), f(\gamma^{jm+2}), \dots, f(\gamma^{jm+m})\right]_{j=0}^{n-1}: f \in \mathbb{F}_q[x]_{\mathsf{deg}\leq k}\right\}
\]

Note that $\Sigma= \mathbb{F}_q^m$ and $\mathsf{FRS}_{\mathbb{F}_q, \gamma, m,k} \subseteq \Sigma^n$,  $|\mathsf{FRS}_{\mathbb{F}_q, \gamma, m,k}|= q^{k+1}$, and $\mathsf{Rate}(\mathsf{FRS}_{\mathbb{F}_q, \gamma, m,k})= \frac{k+1}{N}$.

\subsection{Construction}\label[section]{sec:ECC-construction}

Here we construct a family of folded Reed--Solomon codes that will satisfy the requirements of \Cref{lem:suitablecodes}. Our construction simply entails choosing the parameters $(q, \gamma, m, k)$.

We construct the family of codes $\{C_\secp\}_{\secp \in \bbN}$ as follows. For all sufficiently large $\secp \in \bbN$, choose the following parameters:
\begin{itemize}
    \item Choose $c$ to be a constant $\in \left(0,\frac{1}{2}\right)$.
    \item $q(\secp) =2^{2\lfloor\log \lambda \rfloor}$
    \item $\gamma(\secp)$ is an arbitrary generator of $\mathbb{F}_q^*$. 
    \item $m(\secp) =2^{\lfloor\log \lambda \rfloor}+1$
    \item $N(\secp) = q(\secp)-1$
    \item $n(\secp) = \frac{N(\secp)}{m(\secp)} =2^{\lfloor\log \lambda \rfloor}-1$
    \item $R(\secp)= \frac{1}{10\sqrt{n(\lambda)}}$
    \item $k(\secp) =\lfloor R(\lambda)N(\lambda)\rfloor$
\end{itemize}
Note that $m(\lambda)=n(\lambda)+2, N(\lambda)=m(\lambda)n(\lambda)=n(\lambda)(n(\lambda)+2), q=(n(\lambda)+1)^2, \Sigma= \mathbb{F}_q^m$.\\

\noindent Finally, set $C_{\lambda}= \mathsf{FRS}_{\mathbb{F}_{q(\secp)}, \gamma(\secp), m(\secp), k(\secp)}$.

\subsection{Properties}\label[section]{sec:ECC-properties}
The following lemma gives several properties of our code that will be useful in the certifiable min-entropy protocol.
\begin{lemma}[Adapted from Lemma 4.2 in \cite{YZ24}]\label[lemma]{lem:suitablecodes}
    The family $\{C_{\lambda}\}_{\lambda \in \mathbb{N}}$ from \Cref{sec:ECC-construction} is a family of folded linear codes over the alphabet $\Sigma_{\lambda}= \mathbb{F}_{q(\lambda)}^{m(\lambda)}$ of length $n$ where $n = \Theta(\lambda), |\Sigma|= n^{\Theta(n)}= 2^{\Theta(n \log n)}$ that satisfies the following for sufficiently large $\secp$:
    \begin{enumerate}
        \item \label{property 1} $C_{\lambda
        }$ is $\left(\zeta(\secp), \ell(\secp), L(\secp)\right)$-list recoverable (\Cref{def:list-recovery}), where 
        \[\zeta(\secp) = 1-\frac{1}{5 \sqrt{n(\secp)}}, 
        \qquad \ell(\secp) = \left\lceil2^{\left(\lambda^c\right)}\right\rceil,
        \qquad L(\secp) = 2^{O(\lambda^c \log \lambda)}\;.\]
    \item \label{property 2}
    There is an efficient deterministic decoding algorithm $\mathsf{Decode}_{C_{\lambda}^\perp}$ for $C_\lambda^\perp$ that satisfies the following. 
    Let $\mathcal{D}$ be any distribution over $\Sigma$ that takes $\bf{0}$ with probability at least $1-\frac{1}{40\sqrt{n}}$. 
    Then, it holds that
    \[
    \Pr_{\bf{e}\leftarrow \mathcal{D}^{n}}[\forall \bfx \in C_{\lambda}^{\perp}, \mathsf{Decode}_{C_{\lambda}^{\perp}}(\bf{x}+\bf{e})=x]\geq 1-2^{-\Omega(\sqrt{\lambda})}
    \]

    \end{enumerate}
\end{lemma}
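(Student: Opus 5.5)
The proof follows the template of \cite[Lemma 4.2]{YZ24}, with the parameters re-tuned. First the basic facts: $n=2^{\lfloor\log\lambda\rfloor}-1=\Theta(\lambda)$, $q=(n+1)^2$, $m=n+2$ and $|\Sigma|=q^m=2^{\Theta(n\log n)}$. Folded RS codes are $\bbF_q$-linear subspaces of $\bbF_q^{nm}$, so $C_\lambda$ is a folded linear code. The two properties are then proved separately.

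\textbf{Property 1 (list recovery).} We invoke a list-recovery theorem for folded Reed--Solomon codes, as in \cite{YZ24}: the Guruswami--Rudra / Guruswami--Wang algebraic list recovery. It states that for folding parameter $m$ and any $1\le s\le m$, an FRS code of degree $k$ is list recoverable from lists of size $\ell$ whenever the agreement fraction $\epsilon$ satisfies
\[
\epsilon > \frac{1}{s+1}+\frac{s}{s+1}\cdot\frac{m}{m-s+1}\Bigl(\ell\cdot\tfrac{k}{N}\Bigr)^{s/(s+1)}\cdot(\text{small slack}),
\]
roughly $\epsilon \gtrsim \frac1{s+1} + (\ell R)^{s/(s+1)}$. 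The output list has size $q^{O(s)}$ (or $\ell^{O(s)}$), given as an affine subspace.

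Here $\epsilon=\frac{1}{5\sqrt n}$, $R=\frac{1}{10\sqrt n}$ and $\ell=2^{\lambda^c}$. Take $s=\Theta(\lambda^c)$, larger than $10\sqrt n$ is not needed. The term $\frac1{s+1}$ then fits only if $s\ge 10\sqrt n$, so this crude form fails because $\ell R\gg1$. We therefore need the refined form, in which the list size enters additively or logarithmically rather than multiplicatively. Guruswami--Wang style subspace-evasive arguments require agreement $\epsilon \ge R\cdot\frac{s}{s-?}+\frac{\ell}{s}$-type bounds.

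The parameter matching is the main obstacle, and I would resolve it by counting directly rather than through a black-box theorem. The direct bound goes as follows. For a codeword agreeing on $t=\sqrt n/5$ folded coordinates, each agreement pins $m$ evaluations of a degree-$k$ polynomial, where $k\approx Rnm=m\sqrt n/10$. Any set of $\lceil k/m\rceil+1\approx \sqrt n/10+1$ agreeing positions determines $f$ uniquely.

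A naive union bound would give $\binom{n}{\sqrt n/10}\ell^{\sqrt n/10}$, which is far too large. Instead, I would use a folded-RS list-recovery bound for agreement $t>(1+\delta)kn/N\cdot$(stuff) with output list at most $\ell^{O(1/\delta)}\cdot q^{O(1/\delta)}$ \cite{YZ24}. Agreement is a factor $2$ above rate, so $\delta=1$ and the output list is $q^{O(1)}\ell^{O(1)}$. We then have $\log L = O(\log\ell+\log q)=O(\lambda^c+\log\lambda)$, comfortably within $O(\lambda^c\log\lambda)$. The step to verify is that the relevant theorem tolerates $\ell$ as large as $2^{\lambda^c}$ at agreement $2R$; if needed, I would use the Kopparty--Ron-Zewi--Saraf--Wootters bound, whose list size is $(\ell/\delta)^{O(\log(\ell)/\delta)}$-like, giving $2^{O(\lambda^{2c})}$. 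In that case a $\lambda^c$ should be absorbed by choosing $m$ large relative to $\ell$.

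\textbf{Property 2 (decoding the dual).} The dual of an FRS code (unfolded) is a GRS code of dimension $N-k-1$, with multipliers $\bfv$ determined by $\gamma$ as in \cite{YZ24}. It is therefore uniquely decodable by $\mathsf{GRSUniqueDecode}$ from up to $\lfloor k/2\rfloor\approx m\sqrt n/20$ symbol errors in $\bbF_q$.

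A folded error $\bfe\leftarrow\cD^n$ has nonzero folded coordinates independently with probability at most $\frac1{40\sqrt n}$, so the expected count is at most $\sqrt n/40$. By a Chernoff bound, the count exceeds $\sqrt n/20$ with probability $2^{-\Omega(\sqrt n)}=2^{-\Omega(\sqrt\lambda)}$. Each nonzero folded symbol contributes at most $m$ field errors, so on the complementary event the field-level weight is at most $m\sqrt n/20\le\lfloor(N-(N-k-1)-1)/2\rfloor$ up to rounding. Decoding then succeeds simultaneously for every $\bfx\in C^\perp$, since the event depends only on $\bfe$.
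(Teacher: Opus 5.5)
Your Property~2 argument matches the paper's. Property~1, however, has a genuine gap. You never fix a list-recovery theorem and check its hypotheses at the parameters the construction imposes: $\ell=\lceil 2^{\lambda^c}\rceil$, agreement fraction $1/(5\sqrt n)=2R$, and folding $m=n+2$. You explicitly leave this ``step to verify'' open, and it is the whole content of the property. None of the routes you sketch closes it:
\begin{itemize}
\item A Guruswami--Wang-type bound with an additive $\ell/s$ term would force $s\gtrsim \ell\sqrt n\gg m$. That is impossible, since $s\le m$.
\item Your claim that agreement $(1+\delta)R$ gives list size $q^{O(1/\delta)}\ell^{O(1/\delta)}$, however large $\ell$ is, is not tied to any stated theorem.
\item The Kopparty--Ron-Zewi--Saraf--Wootters fallback gives, by your own estimate, $2^{O(\lambda^{2c})}$. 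This exceeds $L=2^{O(\lambda^c\log\lambda)}$.
\item ``Choosing $m$ large relative to $\ell$'' is not available: $m$ is fixed by the construction, and $\ell$ is superpolynomial in $m$.
\end{itemize}

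The missing idea is that in the Guruswami--Rudra list-recovery bound the input list size enters only through an $(s+1)$-st root. The paper invokes \cite[\S3.6]{Rud07}, \cite[Lemma~4.3]{YZ24}: the folded RS code is $(\zeta,\ell,q^s)$-list recoverable provided
\[
(1-\zeta)\,\frac{N}{m}\;\ge\;\Bigl(1+\frac{s}{r}\Bigr)\frac{(N\ell k^{s})^{1/(s+1)}}{m-s+1}
\qquad\text{and}\qquad
(r+s)\Bigl(\frac{N\ell}{k}\Bigr)^{1/(s+1)}<q .
\]
Normalized, the required agreement fraction is roughly $(1+\frac sr)\frac{m}{m-s+1}\,R\,(\ell/R)^{1/(s+1)}$. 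Take $s=\lfloor 10\lambda^c\rfloor=\Theta(\log\ell)$ and $r=10s$. Then $(N\ell/k)^{1/(s+1)}\le (N\ell)^{1/(s+1)}\le 2^{0.1+o(1)}\le 1.1$, and $m/(m-s+1)\le 1.1$ because $s\ll m$. So the right-hand side is at most $1.1^3\,k\le 2k\le N/(5\sqrt n)$. The root condition holds because $(r+s)\cdot 1.1=O(\lambda^c)\ll q=\Theta(\lambda^2)$. The resulting list size $q^s=2^{O(\lambda^c\log\lambda)}$ is exactly where the $\log\lambda$ in $L$ comes from.

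One minor fix for Property~2: $C^\perp$ has minimum distance $k+2$, so the unique-decoding radius is $\lfloor (k+1)/2\rfloor$, not $\lfloor k/2\rfloor$. With that radius, $RN/2\le (k+1)/2$ holds exactly because $k=\lfloor RN\rfloor$, which removes your ``up to rounding'' caveat.
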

The rest of this subsection is devoted to proving \Cref{lem:suitablecodes}. 


\begin{proof}$ $
\paragraph{First Item.}

We prove Item \ref{property 1} of Lemma \ref{lem:suitablecodes}, relying on a result from \cite{GR08} and \cite{Rud07}.
\begin{theorem}
[\protect{\cite[Section~3.6]{Rud07}}, \cite{GR08}, \protect{\cite[Lemma~4.3]{YZ24}}]\label[theorem]{thm:GRlistrecovery_v2}

    Let $q$ be a prime power, $\gamma \in \bbF_q^*$ be a generator, $N=q-1$, $k < N$ be a positive integer, and $m$ be a positive integer that divides $N$. For positive integers $\ell,r$, and $s$, with $s \leq m$, and a real $0 < \zeta <1$,
    suppose 
    \begin{equation}\label{agreementineq}
        (1-\zeta) \cdot \frac{N}{m} \ge \left(1+\tfrac{s}{r}\right)\cdot \frac{\left(N\ell k^s\right)^{\frac{1}{s+1}}}{m-s+1}
    \end{equation}
    and 
    \begin{equation}\label{rootineq}
    (r+s)\cdot \left(\frac{N\ell}{k}\right)^{\frac{1}{s+1}}<q
    \end{equation}
    Then, $\mathsf{FRS}_{\mathbb{F}_q, \gamma,m,k }$ is $(\zeta, \ell, q^s)$-list recoverable.
\end{theorem}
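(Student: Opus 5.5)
The plan is to follow the Guruswami--Rudra list-decoding argument for folded Reed--Solomon codes, extended to list recovery as in \cite[Section~3.6]{Rud07}. It has three stages: multivariate interpolation with multiplicities, an agreement argument showing that every good codeword satisfies a functional equation, and a root-counting step in an extension field that bounds the number of solutions by $q^s$.

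\emph{Interpolation.} Build one interpolation point per block $j\in\{0,\dots,n-1\}$, per list element $\mathbf y=(y_1,\dots,y_m)\in S_j$, and per window offset $a\in\{0,\dots,m-s\}$. The point is $(\gamma^{jm+a+1},y_{a+1},\dots,y_{a+s})\in\bbF_q^{s+1}$. This gives at most $n\ell(m-s+1)\le N\ell$ points, up to the factor $(m-s+1)/m$. I look for a nonzero $Q(X,Y_1,\dots,Y_s)$ of $(1,k,\dots,k)$-weighted degree at most $D$ that vanishes with multiplicity $r$ at every point. Each point costs $\binom{r+s}{s+1}$ linear constraints. The number of monomials is at least about $D^{s+1}/((s+1)!\,k^s)$. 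A nonzero solution therefore exists once $D^{s+1}>N\ell k^s\,(r+s)^{s+1}/\ldots$, roughly $D\approx (r+s)(N\ell k^s)^{1/(s+1)}$ after the standard binomial estimates.

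\emph{Agreement.} Let $f$ have degree $\le k$ with $x_j\in S_j$ on at least $(1-\zeta)N/m$ blocks. Set $R(X)=Q\bigl(X,f(X),f(\gamma X),\dots,f(\gamma^{s-1}X)\bigr)$, so $\deg R\le D$. On each agreeing block, every one of the $m-s+1$ windows is an interpolation point through which the curve $X\mapsto (X,f(X),\dots,f(\gamma^{s-1}X))$ passes. Hence $R$ vanishes with multiplicity $\ge r$ at $(1-\zeta)\frac{N}{m}(m-s+1)$ distinct points. Inequality \eqref{agreementineq}, combined with the choice of $D$ (this is where the factor $1+s/r$ comes from), gives $r(1-\zeta)\frac{N}{m}(m-s+1)>D$. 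So $R\equiv 0$.

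\emph{Root counting.} Let $E(X)=X^{q-1}-\gamma$, which is irreducible over $\bbF_q$ because $\gamma$ generates $\bbF_q^*$. The identity $f(\gamma X)\equiv f(X)^q \pmod{E}$ holds, so $\bar f=f\bmod E$ is a root in $\mathbb K=\bbF_q[X]/(E)$ of $T(Y)=Q(X,Y,Y^q,\dots,Y^{q^{s-1}})\bmod E$. Dividing $Q$ by the largest power of $E$ that divides it, we may assume $E\nmid Q$. The degree of $Q$ in each $Y_i$ is at most $D/k$, and inequality \eqref{rootineq} ensures $D/k<q$. Under this bound the map $Y_i\mapsto Y^{q^{i-1}}$ sends distinct monomials to distinct powers of $Y$, so $T\neq 0$ and $\deg T<q^s$. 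Therefore at most $q^s$ polynomials $f$ of degree $\le k<q-1$ survive, since $f\mapsto \bar f$ is injective on them. This gives $(\zeta,\ell,q^s)$-list recoverability.

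The main obstacle is the bookkeeping in the interpolation step. The multiplicity count $\binom{r+s}{s+1}$ against the weighted monomial count has to be balanced so that the final agreement condition comes out exactly as \eqref{agreementineq}, with the factor $(1+s/r)$ and denominator $m-s+1$. I would first try taking $D$ as the smallest value for which the monomial count exceeds the constraint count, then check \eqref{agreementineq} directly.
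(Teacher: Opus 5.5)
Your proposal is correct and is the standard Guruswami--Rudra interpolation and root-finding argument from the cited sources \cite{Rud07,GR08}. The paper does not reprove this theorem; it imports it via \cite[Lemma~4.3]{YZ24} and only checks the hypotheses for its parameters. The bookkeeping step you flag closes exactly, with no extra slack. Take $D$ minimal, and use the lattice-point bound $\#\{\text{monomials}\}\ge (D+1)^{s+1}/\bigl((s+1)!\,k^s\bigr)$ together with $r(r+1)\cdots(r+s)<(r+s)^{s+1}$. This gives $D<(r+s)(N\ell k^s)^{1/(s+1)}$, which yields both $R\equiv 0$ from \eqref{agreementineq} and $\deg_{Y_i}Q<q$ from \eqref{rootineq}.
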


We proceed with the proof of Item \ref{property 1} by establishing \Cref{agreementineq} and \Cref{rootineq}.
We begin with \Cref{agreementineq} which by rearranging is equivalent to 
    \begin{equation}\label{agreementineq_rearranged}
        \left(1+\tfrac{s}{r}\right)\cdot \tfrac{m}{m-s+1} \cdot \left(N\ell k^s\right)^{\frac{1}{s+1}}\le (1-\zeta) \cdot N 
    \end{equation}

Let $s = \lfloor{10\lambda^c\rfloor}$ and $r = 10s$.
Note that the right-hand side of \Cref{agreementineq_rearranged}  is \[(1-\zeta)\cdot N = \frac{N}{5\sqrt{n}} \ge 2k\]
So it remains to prove that the left-hand side of \Cref{agreementineq_rearranged} is at most $2k$.
Indeed, the left-hand side (of \Cref{agreementineq_rearranged}) is a product of three factors: 
\begin{itemize} 
\item[(i)] $\left(1+\frac{s}{r}\right) = 1.1$,
\item[(ii)] $\frac{m}{m-s+1}\le 1.1$, where the inequality holds for $\lambda$ large enough since $m\ge \lambda/2$ and $s\le 10\sqrt{\lambda}$,
\item[(iii)] and $(N\ell k^{s})^{\frac{1}{s+1}}$.
\end{itemize}
It remains to show that $(N\ell k^{s})^{\frac{1}{s+1}} \leq 1.1k$ to show that the left-hand side of \Cref{agreementineq_rearranged} is at most $(1.1)^3 k\le 2k$.
We do so by comparing $(N\ell k^{s})^{\frac{1}{s+1}}$ to $k$.
\begin{align*}
    \frac{(N\ell k^{s})^{\frac{1}{s+1}}}{k} = \left(\frac{N\ell}{k}\right)^{\frac{1}{s+1}} \le (N\ell)^{\frac{1}{s+1}} \le   2^{\lambda^c/(10\lambda^c)} \cdot 2^{2\log(\lambda)/(10\lambda^c)}\le 2^{0.1+o_{\lambda}(1)} \le 1.1
\end{align*} for large enough $\lambda$.
Overall, we got \[
\left(1+\tfrac{s}{r}\right)\cdot \tfrac{m}{m-s+1} \cdot \left(N\ell k^s\right)^{\frac{1}{s+1}} \le (1.1)^3 \cdot k \le 2k \le (1-\zeta) \cdot N
\]
for large enough $\lambda$, which completes the proof of \Cref{agreementineq}.

As for \Cref{rootineq},
following a similar calculation,
\begin{align*}(r+s)\cdot \left(\frac{N\ell}{k}\right)^{\frac{1}{s+1}} \le (r+s)\cdot 1.1 \le 130 \sqrt{\lambda}
\end{align*}
for large enough $\lambda$, and $q  \ge \lambda^2/4$ is asymptotically much larger.

In summary, for our choice of parameters, both \Cref{agreementineq} and \Cref{rootineq} are satisfied and therefore \Cref{thm:GRlistrecovery_v2} guarantees that the corresponding $m$-folded Reed--Solomon code is $(\zeta, \ell, L)$-list recoverable for $L=q^s \le (\lambda^2)^{10\lambda^c} = 2^{O(\lambda^c \log \lambda)}$, which completes the proof of Item~\ref{property 1}.

\paragraph{Second Item.}
We prove the dual-decoding property of
Lemma~\ref{lem:suitablecodes}.

Throughout this proof, we identify
\[
\Sigma^n=(\mathbb F_q^m)^n
\quad\text{with}\quad
\mathbb F_q^N
\]
by concatenating the \(n\) folded blocks. This is possible because $N = m \cdot n$. Accordingly, we use the
same notation for a folded word and its corresponding unfolded word.
In particular, when \(\bfe\leftarrow\mathcal D_r^n\), we also regard
$\bfe$ as an element of \(\mathbb F_q^N\), and
$\hw(\bfe)$ denotes its Hamming weight in
$\mathbb F_q^N$.\\
Since $C=\FRS_{\mathbb{F}_q,\gamma,m,k}$ is an \(m\)-folded
Reed--Solomon code, its dual is an \(m\)-folded generalized
Reed--Solomon code
\[
  C^\perp
  =
  \GRS_{\mathbb{F}_q,\gamma,N-k-2,\bfv}^{(m)}
\]
for some $\bfv \in \bbF_q^N$.
 Let $ d:=N-k-2.$

Thus \(d\) is a polynomial degree bound for the unfolded dual code.
Then the dimension of the dual code is
\[
  d+1=N-k-1,
\]
and its minimum distance (from the Singleton bound) is
\[
  d_{\min}(C^\perp)
  =
  N-d
  =
  k+2.
\]

Define
\[
  \tau
  :=
  \left\lfloor\frac{k+1}{2}\right\rfloor.
\]

Since our noise weight is at most $\tau=\lfloor(k+1)/2\rfloor$ with high probability (shown below), unique decoding of the unfolded GRS dual suffices.\footnote{By contrast, \cite{YZ24} uses list decoding of the dual~\cite{GR99} because their (unbiased) noise has weight about $N/2$, which exceeds the unique-decoding radius of their dual.} We define \(\Decode_{C^\perp}\) as follows:
\begin{description}
\item[\(\Decode_{C^\perp}(\bfz)\):]
On input \(\bfz\in\Sigma^n\), run
$\GRSUniqueDecode_{\mathbb{F}_q,\gamma,d,\boldsymbol \bfv}(\bfz)$
with decoding radius $\tau$. If $\GRSUniqueDecode$ outputs a codeword of $C^\perp$, then return this codeword. Otherwise, if $\GRSUniqueDecode$ outputs $\bot$, then return $\mathbf{0}$.
\end{description}

Note that
\[
  2\tau
  \leq k+1
  <k+2
  =
  d_{\min}(C^\perp),
\]
so there can be at most one dual codeword $\bfx$ within distance \(\tau\)
of any received word $\bfz$, and $\tau$ is exactly the unique-decoding radius of the dual. The decoder is therefore deterministic and
runs in time polynomial in \(N\) and \(\log q\), hence polynomial in
\(\lambda\).

Next define the set

\[\GoodErrors
  :=
  \left\{
    \bfe\in\mathbb{F}_q^N:
      \hw(\bfe)\leq\tau
  \right\}.
\]
For any \(\bfx\in C^\perp\) and \(\bfe\in\GoodErrors\), the codeword \(\bfx\)
is the unique dual codeword at distance at most \(\tau\) from
\(\bfx+\bfe\).  Since \(\hw(\bfe)\le\tau\), the unique decoder returns \(\bfx\), and
therefore
\begin{equation}
\label{eq:deterministic-correctness}
  \Decode_{C^\perp}(\bfx+\bfe)=\bfx
  \qquad
  \text{for every }
  \bfx\in C^\perp,\ \bfe\in\GoodErrors.
\end{equation}


It remains to bound
\[
  \Pr_{\bf{e}\leftarrow \cD_r^n}
  [\bfe\notin\GoodErrors].
\]
Parse \(\bfe\in\Sigma^n\) as
\[
  \bfe=(e_1,\ldots,e_n),
  \qquad \bfe_i\in\mathbb{F}_q^m,
\]
and define the number of nonzero folded blocks by
\[
  Y
  :=
  \bigl|\{i\in[n]:\bfe_i\neq\zero\}\bigr|.
\]
Next, by definition of $\mathcal{D}, \forall i\in [n]$, independently,
\[\Pr[\bfe_i = \mathbf{0}] \ge  1-\frac{1}{40\sqrt{n}}\;.
\]
We define $X$ to be a random variable that stochastically dominates $Y$:
\[
  X\sim\operatorname{Bin}\left(n,p\right)
\text{ for }p = \frac{1}{40 \sqrt{n}}\qquad 
  \mu:=\mathbb E[X]=\frac{\sqrt n}{40}.
\]

Every nonzero folded block contributes at most \(m\) nonzero
unfolded field coordinates.  Hence, when \(Y\leq2pn\),
\[
  \hw(\bfe)
  \leq mY
  \leq m \cdot 2pn
  =2pN
  =\frac{RN}{2} \le \frac{k+1}{2}
\]
The weight is an integer, so
\[
  \hw(\bfe)
  \leq
  \left\lfloor\frac{k+1}{2}\right\rfloor
  =
  \tau.
\]
This implies that $\bfe\in\GoodErrors$. In summary,
\[
  Y\leq2pn
  \quad\Longrightarrow\quad
  \bfe\in\GoodErrors.
\]

Next, we will use the multiplicative Chernoff bound,
\[\Pr[X>(1+\delta)\mu]\leq e^{-\delta^2\mu/(2+\delta)}\quad \forall \delta\geq 0\]
Setting $\delta = 1$, the multiplicative Chernoff bound says the following.
\[
  \begin{aligned}
  \Pr_{\bf{e}\leftarrow \cD_r^n}
  [\bfe\notin\GoodErrors]
  &\leq
  \Pr[Y>2pn]
  \leq \Pr[X>2\mu]\\
  &\leq
  \exp(-\mu/3)
  =  \exp({-\Omega(\sqrt{n})})
  =  \exp({-\Omega(\sqrt{\lambda})})
  \end{aligned}
\]

\par\smallskip\noindent
Combining this inequality with
\eqref{eq:deterministic-correctness} proves Item \ref{property 2}.
\end{proof}

\section{Certifiable Min-Entropy Protocol}
In this section, we present constructions of certifiable min-entropy protocols and prove their correctness. We follow the template of \cite{YZ24}'s proof of quantumness, but with the following main differences. First, we use the family of error-correcting codes constructed in \Cref{sec:ECC-construction}, which allow list recovery when only $(1 - \zeta)n = O(\sqrt{n})$ symbols are approximately known. Second, we use a biased random oracle to reduce the noise in the dual basis.

In \Cref{sec:prelim-certifiable-min-entropy}, we define certifiable min-entropy protocols. In \Cref{sec:CR-construction}, we present two constructions of such protocols. \Cref{construction:CR-bounded-min-entropy} achieves $o(\secp^c)$ bits of min-entropy, and \Cref{construction:CR-arbitrary-min-entropy} uses complexity leveraging to provide any $\poly(\secp)$ bits of min-entropy. Finally, in \Cref{sec:CR-correctness}, we prove the correctness of these protocols.

\subsection{Definition}\label[section]{sec:prelim-certifiable-min-entropy}
A certifiable min-entropy protocol is an interactive proof system between a quantum prover and a classical verifier, where any prover that causes the verifier to accept with noticeable probability must be sampling their accepting proofs from a distribution with high min-entropy. In the random oracle model, the prover has quantum query access to the random oracle, and the min-entropy of the proof is computed after conditioning on the choice of oracle. That way, the proof will provide additional randomness beyond the randomness of the oracle. 

\Cref{def:certifiable-min-entropy} below recalls the definition of a certifiable min-entropy protocol from \cite{YZ24} (almost verbatim). We modify the min-entropy property to allow us to specify the maximum number of queries $Q$ and the min-entropy threshold $\minent$.

\begin{definition}[Certifiable Min-Entropy Protocol, adapted from \cite{YZ24} Definition 3.5]\label[definition]{def:certifiable-min-entropy}
    A (keyless, non-interactive, publicly verifiable) \textbf{certifiable min-entropy protocol} relative to a random oracle (not necessarily uniformly random) consists of the algorithms $(\Prove, \Verify)$ with the following syntax.

    \paragraph{Syntax.}
    \begin{itemize}
        \item $\Prove^\RO (1^\secp, 1^\minent) \to \pi$: This is a QPT algorithm that takes the security parameter $1^\secp$ and a min-entropy threshold $1^\minent$ as input. It makes $\poly(\secp, \minent)$ quantum queries to a random oracle $\RO$ (not necessarily uniformly random), and outputs a classical proof $\pi$.
        \item $\Verify^\RO (1^\secp, 1^\minent, \pi) \to x$: This is a deterministic classical polynomial-time algorithm that takes $1^\secp$, $1^\minent$, and a proof $\pi$ as input.
        It makes $\poly(\secp, \minent)$ queries to the random oracle $\RO$, and outputs either a string $x$ (whose length may depend on $\secp, \minent$), or $\bot$ indicating rejection.
    \end{itemize}
    
    We also require a certifiable min-entropy protocol to satisfy the following properties:

    \paragraph{Correctness.} For any $\minent = \minent(\secp)$, we have
    \[\Pr_{\RO}\left[\Verify^\RO (1^\secp, 1^\minent, \pi) = \bot : \pi \gets \Prove^\RO (1^\secp, 1^\minent)\right] \leq \negl(\secp).\]

    \paragraph{$(Q,\minent)$-Certifiable Min-Entropy.} Given a function $Q = Q(\secp)$ and a polynomially-bounded function $\minent = \minent(\secp)$, the protocol satisfies \textbf{$(Q,\minent)$-certifiable min-entropy} if for any unbounded-time adversary $\cA$ that makes at most $Q$ quantum queries to $\RO$, and any inverse-polynomial function $\delta(\cdot)$\footnote{Specifically, we quantify over all $e \in \bbN$, and set $\delta(\secp) = \secp^{-e}$.}, there is a negligible function $\negl(\cdot)$ such that the following holds. Let $\cA^\RO_\top (1^\secp, 1^\minent)$ be the distribution $\Verify^\RO[1^\secp, 1^\minent, \cA^\RO(1^\secp, 1^\minent)]$, conditioned on the output not being $\bot$. Then:
    \begin{align*}
        \Pr_{\RO}\left[\Pr\left[\Verify^\RO [1^\secp, 1^\minent, \cA^\RO(1^\secp, 1^\minent)] \neq \bot\right] \geq \delta(\secp) \land \Minent\left(\cA^\RO_\top (1^\secp, 1^\minent)\right) \leq \minent(\secp)\right] \leq \negl(\secp).
    \end{align*}
    If $(Q, \minent)$ are not specified, we say that the protocol satisfies \textbf{certifiable min-entropy} if for any polynomial functions $(Q, \minent)$\footnote{Specifically, we quantify over all $d \in \bbN$, and set $\minent(\secp) = \secp^{d}$.}, the protocol satisfies $(Q,\minent)$-certifiable min-entropy. 
\end{definition}

\subsection{Construction}\label[section]{sec:CR-construction}
In this section, we give constructions of certifiable min-entropy protocols in the QROM. 

\paragraph{Error-Correcting Code.} Let $\{C_{\lambda}\}_{\lambda \in \mathbb{N}}$ be the family of codes given in \Cref{sec:ECC-construction}. In the following, we omit $\secp$ from the subscript of $C_\secp$ since it is clear from context. We also use the notation defined in Lemma \ref{lem:suitablecodes}. The code has an alphabet $\Sigma= \mathbb{F}_q^m$ and satisfies the requirements of Lemma \ref{lem:suitablecodes} with an arbitrary $0 < c < 1/2$.

\paragraph{Biased Random Oracle.} The biased random oracle $H$ is a random variable over functions mapping $[n] \times \Sigma \to \bit$, and it is parametrized by the unique bias
$\frac{1}{80\sqrt{n}} < p \le \frac{1}{40\sqrt{n}}$ which is a power of $1/2$. Next, $H$ is sampled from the following distribution, known as $\cH$. 
For each \((i, x) \in [n] \times \Sigma\), independently sample $H(i,x) \in \bit$ such that $\Pr[H(i,x) = 1] = p$. Finally, $H$ can be implemented in the QROM as discussed in \Cref{sec:QROM}.

Additionally, for each $i \in [n]$, let $H_i : \Sigma \to \bit$ be the function $H(i, \cdot)$. Finally, for any $\bfx = (x_1, \ldots, x_n) \in \Sigma^n$, let us abuse notation and say that $H(\bfx) = \left[H_1(x_1), \ldots, H_n(x_n)\right]$.


\paragraph{Certifiable Min-Entropy Protocols.} Now we construct certifiable min-entropy protocols using the code $C$ and the biased oracle $H$ defined above. \Cref{construction:CR-bounded-min-entropy} provides $o(\secp^c)$ bits of min-entropy, whereas \Cref{construction:CR-arbitrary-min-entropy} provides any $\poly(\secp)$ bits of min-entropy.

\begin{construction}[Min-entropy = $o(\secp^{c})$]\label[construction]{construction:CR-bounded-min-entropy}
$ $
\paragraph{$\Prove^H(1^{\secp}, 1^{h_{\infty}}):$} 
\begin{enumerate}
    \item For each $i \in [n]$, generate the state 
\[
\ket{\phi_i} \propto \sum_{\bfe_i \in \Sigma : H_i(\bfe_i)=0}\ket{\bfe_i}
\]
This is done as follows. Generate a uniform superposition over all $\bfe_i \in \Sigma$, coherently evaluate $H_i(\bfe_i)$, and measure the output. If the measurement outcome is $0$, then we have succeeded in generating $\ket{\phi_i}$. Repeat the above procedure at most $\secp$ times. If we fail to
generate $\ket{\phi_i}$ within $\secp$ trials, then abort $\Prove$ and output $\pi = \bot$. 

\item Set 
\[
\ket{\phi}= \ket{\phi_1}\otimes \dots \otimes \ket{\phi_n}
\]

\item Generate the state
\[
\ket{\psi} \propto \sum_{\bfx \in C}\ket{\bfx}
\]
\item Apply $\mathsf{QFT}_{\Sigma}^n$ to both $\ket{\psi}$ and $\ket{\phi}$.%
\footnote{We use the definition of $\mathsf{QFT}_{\Sigma}$ given in \cite[Section~2.2]{YZ24}} 
At this point, we have the state, 
\[
\ket{\eta}= \mathsf{QFT}_{\Sigma}^{\otimes n} \ket{\psi}\otimes \mathsf{QFT}_{\Sigma}^{\otimes n}\ket{\phi}
\]
\item Let $U_{\mathsf{add}}$ and $U_{\mathsf{decode}}$ be the unitaries defined as follows for any $\bfx, \bfe \in \Sigma^n$:
\[
\bf{\ket{x}}\bf{\ket{e}}\xrightarrow {U_{\mathsf{add}}}\bf{\ket{x}}\bf{\ket{x+e}}\xrightarrow {U_{\mathsf{decode}}}\bf{\ket{x- \mathsf{Decode}_{C^{\perp}}(x+e)}}\bf{\ket{x+e}}
\]
where $\mathsf{Decode}_{C^{\perp}}$ is the decoder for $C^{\perp}$ as required in Item \ref{property 2} of Lemma \ref{lem:suitablecodes}. Then apply the following unitaries to $\ket{\eta}$ (from right to left): 
$$\left[I \otimes \left(\mathsf{QFT}_{\Sigma}^{-1}\right)^{\otimes n}\right] \cdot U_{\mathsf{decode}}\cdot U_{\mathsf{add}}$$
\item Measure the second register to obtain $\mathbf{x}\in \Sigma^n$, and output $\pi = \bfx$.
\end{enumerate}

\paragraph{$\mathsf{Verify}^H(1^{\secp},1^{h_{\infty}},\pi)$:} 
\begin{enumerate}
    \item If $\pi = \bot$, then output $\bot$ and halt. Otherwise, parse $\pi = \mathbf{x}= (x_1, \dots, x_n)$.
    \item If $\mathbf{x}\in C$, and $H_i(x_i)= 0$ for all $i \in [n]$, then output $\bf{x}$. Otherwise, output $\bot$.
\end{enumerate}
\end{construction}
\begin{theorem}\label[theorem]{thm:CR-bounded-min-entropy}
    \Cref{construction:CR-bounded-min-entropy} is a certifiable min-entropy protocol (\Cref{def:certifiable-min-entropy}) that satisfies syntax, correctness, and $(Q, h_\infty)$-certifiable min-entropy for any $Q = 2^{o(\secp^c)}$ and any $\minent(\secp) = o(\secp^{c})$, where $c<1/2$ is a constant parameter of the list-recoverable code (\Cref{sec:ECC-construction}).
\end{theorem}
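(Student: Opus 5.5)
We verify syntax, correctness and $(Q,\minent)$-certifiable min-entropy in turn. Syntax is immediate: $\Prove$ makes at most $n\secp$ queries while preparing the states $\ket{\phi_i}$, and every other step ($\mathsf{QFT}$, $U_{\mathsf{add}}$, and $U_{\mathsf{decode}}$ using the efficient $\mathsf{Decode}_{C^\perp}$ from \Cref{lem:suitablecodes}) is efficient. $\Verify$ checks membership in the linear code $C$ and makes $n$ classical queries.

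\textbf{Correctness.} The plan is to follow the YZ analysis. First, each trial yields $H_i(\bfe_i)=0$ with probability close to $1-p$, except on a negligible set of oracles where $H_i$ has few zeros (Chernoff), so the abort probability is negligible. Second, $\mathsf{QFT}_\Sigma\ket{\phi_i}$ places amplitude on $\mathbf 0$ equal to $\sqrt{|H_i^{-1}(0)|/|\Sigma|}$, which is about $\sqrt{1-p}$. Hence the squared amplitudes define a distribution $\mathcal D_i$ with $\Pr[\mathbf 0]\ge 1-p-o(p)\ge 1-\frac{1}{40\sqrt n}$, up to adjusting the constant using $p\le \frac1{40\sqrt n}$ and the concentration of $|H_i^{-1}(0)|$.

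Next, $\mathsf{QFT}^{\otimes n}\ket\psi$ is uniform over $C^\perp$. By Item~\ref{property 2} of \Cref{lem:suitablecodes}, $U_{\mathsf{decode}}\cdot U_{\mathsf{add}}$ maps the state to $\ket{\mathbf 0}\otimes\sum\alpha\beta\ket{\bfx+\bfe}$, except on a component of squared norm $2^{-\Omega(\sqrt\secp)}$. Because the first register becomes disentangled, the convolution theorem applies: the inverse QFT of the second register is proportional to $\sum_{\bfx\in C,\,H(\bfx)=0}\ket{\bfx}$. This state is nonzero for all but a negligible fraction of oracles, since the expected number of solutions is $|C|(1-p)^n\gg 1$. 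Therefore the measured output passes $\Verify$ except with negligible probability. I will adapt the bookkeeping from \cite{YZ24}, noting that the error distribution is now a product over coordinates that depends on $H$, rather than a single fixed $\mathcal D$.

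\textbf{Min-entropy.} Fix $\cA$ with $Q=2^{o(\secp^c)}$ queries and $\delta=\secp^{-e}$. Set $\alpha=\delta 2^{-\minent}$, which satisfies $1/\alpha=2^{o(\secp^c)}$, and set $\beta=\alpha/(4n)$. Call $H$ bad if acceptance has probability at least $\delta$ and $\Minent\le\minent$. For a bad $H$, some correct $\bfx$ is output with probability at least $\alpha$. We split into two cases.

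\emph{Case (i): $\bfx$ is not $(0.9n,\beta)$-heavily queried.} Then there is a set $I$ of $0.1n$ light coordinates. Flipping any subset $S$ of $\{(i,x_i)\}_{i\in I}$ changes the output probability by at most $\alpha/2$, by \Cref{thm:swapping-lemma} and \Cref{thm:trace-dist-euclidian-dist}. The pair $(h^S,\bfx)$ determines $h$ by resetting the coordinates of $\bfx$ to zero. The weighted count with \eqref{eq:mass} then gives probability at most $\frac2\alpha(1-p)^{0.1n}=2^{o(\secp^c)}e^{-\Omega(\sqrt\secp)}$, which is negligible since $c<1/2$.

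\emph{Case (ii): $\bfx$ is heavily queried.} Apply \Cref{lemma:subcube with square root heavy coordinates} with $M=\{(i,x_i)\}$ over the heavy coordinates, $|M|\ge 0.9n$, and $k=\lceil\sqrt n/5\rceil$. This removes $|A_k|\le n/50$ protected inputs and leaves $T=M\setminus A_k$ with $|T|\ge 0.8n$. Every $g\in\F_h(T)$ puts query norm at least $\tau=\beta/(8Qn)^2$ on $\ge\sqrt n/5$ coordinates of $\bfx$. Since $\bfx$ is correct, all of $T$ is zero-valued under $h$, so each $h$ charges mass $(1-p)^{-|T|}\Pr[H=h]$ onto its family. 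For the right-degree, fix $g$ and let $S_i=\{x:\widetilde w_{i,x}(g)\ge\tau\}$. By \Cref{lemma:number_of_heavily_queried_coordinates}, $|S_i|\le Q^2/\tau^2=2^{o(\secp^c)}\le\ell$. List recovery with $\zeta=1-\frac1{5\sqrt n}$ then shows at most $L$ codewords can be the $\bfx_h$ for an $h$ charging $g$, and $h$ is recovered from $(g,\bfx_h)$. This bounds the probability of case (ii) by $L(1-p)^{0.8n}=2^{O(\secp^c\log\secp)}e^{-\Omega(\sqrt\secp)}$, which is negligible.

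The main obstacle I expect is this last weighted double counting. Each $h$ must be associated with a canonical $\bfx_h$, and the recovery map $(g,\bfx)\mapsto h$ must be injective. This fails as stated, because $h$ may have been $1$ on protected coordinates. I would fix it by resetting only $T$, using that the coordinates of a correct $\bfx$ are all $0$ under $h$. Resetting all coordinates of $\bfx$ to $0$ then recovers $h$ exactly.
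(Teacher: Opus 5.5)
Your syntax and min-entropy arguments follow the paper. Case (i) is \Cref{thm:must-heavy-query}, and case (ii) is \Cref{thm:2-query-security}, obtained via \Cref{lemma:subcube with square root heavy coordinates} with $k=\lceil\sqrt n/5\rceil$ plus list recovery; the bounds are the same. The worry in your last paragraph is moot rather than something to repair. A correct $\bfx_h$ has $h(i,x_i)=0$ at every coordinate, protected ones included, so resetting all of $\bfx_h$ to zero in $g$ recovers $h$ directly, exactly as in the paper.

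\textbf{The gap is in correctness.} Item~2 of \Cref{lem:suitablecodes} bounds only $\epsilon=\sum_{(\bfx,\bfe)\in\Bad}|\widehat V(\bfx)\widehat W^H(\bfe)|^2$, which is condition \eqref{YZeq27}.
\begin{itemize}
\item By unitarity, this shows the post-decoding state is $\sqrt\epsilon$-close to $\ket{\mathbf 0}\otimes\sum_{(\bfx,\bfe)\in\Good}\widehat V(\bfx)\widehat W^H(\bfe)\ket{\bfx+\bfe}$. That is a sum over good pairs only, and the convolution theorem does not apply to it.
\item To replace it by the full convolution, you must also bound $\sum_{\bfz}\bigl|\sum_{(\bfx,\bfe)\in\Bad,\,\bfx+\bfe=\bfz}\widehat V(\bfx)\widehat W^H(\bfe)\bigr|^2$, which is condition \eqref{YZeq28}. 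Many bad pairs share the same $\bfz$ and can add coherently, so this quantity is not controlled by $\epsilon$.
\item For a fixed $H$, \Cref{cla:convolution-identity} and Parseval give only the bound $\frac{|\Sigma|^n}{|C|}\epsilon$, which is useless. So your pointwise-in-$H$ approach, with an $H$-dependent product $\prod_i\cD_i$, cannot close this step. It is not mere bookkeeping inherited from \cite{YZ24}.
\end{itemize}

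The missing idea is to average over $H$, conditioned on each $H_i$ having a zero, and to exploit symmetry.
\begin{itemize}
\item The oracle distribution is invariant under translating each coordinate of $\Sigma$ (\Cref{cla:symmetry}). Hence $\bbE_H|(B*W^H)(\bfz)|^2$ does not depend on $\bfz$.
\item It therefore equals its average over all of $\Sigma^n$, which by Parseval is $\bbE_H[\epsilon]\le 2^{-\Omega(\sqrt\secp)}$ (\Cref{cla:uniform-convolution}).
\item Markov's inequality then gives both conditions for all but a negligible fraction of $H$.
\item Averaging also produces a single distribution $\cD$ with $\cD(\mathbf 0)=1-p^\star\ge 1-p$ exactly, so no constant needs adjusting.
\end{itemize}

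Finally, nonvanishing of the target state is the wrong criterion. What you need, and what \Cref{lem:techlemmayz} supplies once both conditions hold, is that the actual state is $(\sqrt\epsilon+\sqrt\delta)$-close to $|\Sigma|^{n/2}\sum_{\bfz}V(\bfz)W^H(\bfz)\ket{\mathbf 0}\ket{\bfz}$. The measured $\bfz$ is then correct with probability at least $1-(\sqrt\epsilon+\sqrt\delta)^2$.
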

\begin{proof}
    It is clear by inspection that the syntax property is satisfied. Next, \Cref{thm:CR-correctness} proves correctness, and \Cref{thm:CR-for-depth-bounded-adversaries} proves $(Q, h_\infty)$-certifiable min-entropy for any $Q = 2^{o(\secp^c)}$ and any $\minent(\secp) = o(\secp^{c})$.
\end{proof}

Next, \Cref{construction:CR-arbitrary-min-entropy} uses complexity leveraging on \Cref{construction:CR-bounded-min-entropy} to provide $h_\infty'$ bits of min-entropy for any $h_\infty' = \poly(\secp')$, where $\secp'$ denotes the security parameter. Essentially, we run \Cref{construction:CR-bounded-min-entropy} with a security parameter $\secp$ that is at least $h_\infty'^{2/c}$. Then the min-entropy guarantee of \Cref{construction:CR-bounded-min-entropy} implies that this construction provides at least $\left(h_\infty'^{2/c}\right)^{c/2} = h_\infty'$ bits of min-entropy.

\begin{construction}[Min-entropy = $\poly(\secp')$]\label[construction]{construction:CR-arbitrary-min-entropy}
    Let $(\Prove, \Verify)$ be the functions defined in \Cref{construction:CR-bounded-min-entropy}. Additionally, let 
    \begin{align*}
        \secp(\secp', h_\infty') &= \max\left\{\secp', h_{\infty}'^{2/c}\right\}\\
        h_\infty(\secp', h_\infty') &= \secp(\secp', h_\infty')^{c/2}
    \end{align*}
    where $c$ is the parameter of the code $C$. Next, we construct $(\Prove', \Verify')$ below:

    \paragraph{$\Prove'^H(1^{\secp'}, 1^{h_{\infty}'}):$} Compute $\secp = \secp(\secp', h_\infty')$ and $h_\infty = h_\infty(\secp', h_\infty')$, compute $\pi \gets \Prove^H(1^\secp, 1^{h_\infty})$, and output $\pi$.

    \paragraph{$\mathsf{Verify}'^H(1^{\secp'},1^{h_{\infty}'},\pi)$:} Compute $\secp = \secp(\secp', h_\infty')$ and $h_\infty = h_\infty(\secp', h_\infty')$, compute $\bfx = \Verify^H(1^\secp, 1^{h_\infty}, \pi)$, and output $\bfx$.
\end{construction}

\begin{theorem}\label[theorem]{thm:CR-arbitrary-min-entropy}
    \Cref{construction:CR-arbitrary-min-entropy} is a certifiable min-entropy protocol (\Cref{def:certifiable-min-entropy}) that satisfies syntax, correctness, and certifiable min-entropy (i.e. it satisfies $(Q', h_\infty')$-certifiable min-entropy for any $Q' = \poly(\secp')$ and any $h_\infty'({\secp'}) = {\secp'}^d$, where $d \in \bbN$).
\end{theorem}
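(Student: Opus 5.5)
The plan is to reduce \Cref{thm:CR-arbitrary-min-entropy} to \Cref{thm:CR-bounded-min-entropy} by complexity leveraging. Syntax is checked by inspection. $\secp(\secp',h_\infty')=\max\{\secp', h_\infty'^{2/c}\}$ is polynomial in $(\secp',h_\infty')$ whenever $h_\infty'$ is polynomially bounded, and so is $h_\infty=\secp^{c/2}$. Since $\Prove$ and $\Verify$ run in time and queries $\poly(\secp,h_\infty)$, the wrappers $\Prove'$ and $\Verify'$ run in time and queries $\poly(\secp',h_\infty')$, and $\Verify'$ remains deterministic and classical. Correctness follows from correctness of \Cref{construction:CR-bounded-min-entropy} at security parameter $\secp\ge\secp'$: the failure probability is $\negl(\secp)\le\negl(\secp')$, because a negligible function evaluated at a polynomially larger argument $\secp\ge\secp'$ is still negligible in $\secp'$. (Without loss of generality the negligible bound is taken to be monotone non-increasing.)

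For certifiable min-entropy, fix $d\in\bbN$, set $h_\infty'(\secp')={\secp'}^d$, and fix any $Q'=\poly(\secp')$, any adversary $\cA'$, and any $\delta'(\secp')={\secp'}^{-e}$. Then $\secp=\max\{\secp',{\secp'}^{2d/c}\}={\secp'}^{2d/c}$ for large $\secp'$ (since $2d/c>1$). So $\secp'=\secp^{c/(2d)}$, and $\secp$ ranges over an infinite polynomially related sequence. I would view $\cA'$ as an adversary $\cA$ against \Cref{construction:CR-bounded-min-entropy} at security parameter $\secp$. On inputs $(1^\secp,1^{h_\infty})$ it recovers $\secp'$ and runs $\cA'(1^{\secp'},1^{h_\infty'})$. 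Its query count is $Q'=\poly(\secp')=\poly(\secp)=2^{o(\secp^c)}$. The required threshold $h_\infty'={\secp'}^d=\secp^{c/2}$ satisfies $\secp^{c/2}=o(\secp^c)$. The acceptance threshold becomes $\delta'(\secp')=\secp^{-ec/(2d)}$, which is inverse polynomial in $\secp$. The accepted-output distributions coincide, since $\Verify'$ is literally $\Verify$ at parameter $\secp$. Hence the bad event for $\cA'$ at parameter $\secp'$, namely acceptance probability at least $\delta'$ together with $\Minent\le h_\infty'$, is exactly the bad event for $\cA$ with min-entropy threshold $\secp^{c/2}$. By \Cref{thm:CR-bounded-min-entropy}, its probability is $\negl(\secp)$, which is at most $\negl(\secp')$ since $\secp\ge\secp'$.

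The main obstacle is bookkeeping rather than mathematics. \Cref{def:certifiable-min-entropy} quantifies over functions of a single security parameter, so I must check that the reparametrized $\cA$, $\delta$ and $h_\infty$ are legitimate functions of $\secp$. The subtlety is that $\secp$ takes only the values ${\secp'}^{2d/c}$, which are generally not integers. I would handle this by rounding, redefining $\secp$ as $\lceil\max\{\secp',h_\infty'^{2/c}\}\rceil$. On values of $\secp$ outside the image, I would let $\cA$ behave trivially, for example output $\bot$. I would also check that the proof of \Cref{thm:CR-bounded-min-entropy} only needs the min-entropy threshold to be $o(\secp^c)$, not a fixed function; the threshold $h_\infty=\secp^{c/2}$ is such a function.
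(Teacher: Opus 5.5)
Your proposal is correct and follows essentially the same complexity-leveraging argument as the paper. You wrap $\mathcal{A}'$ as an adversary $\mathcal{A}$ at parameter $\lambda = \lambda'^{2d/c}$, check that its query count is $2^{o(\lambda^c)}$ and that the threshold $\lambda'^d = \lambda^{c/2}$ is $o(\lambda^c)$, invoke \Cref{thm:CR-bounded-min-entropy}, and use $\lambda \ge \lambda'$ to transfer negligibility.

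The differences are only bookkeeping. You match the acceptance threshold exactly as $\lambda^{-ec/(2d)}$; to fit the definition's footnote literally, replace it by $\lambda^{-\lceil ec/(2d)\rceil}$, which only gives a containment of events. The paper instead keeps $\delta(\lambda)=\lambda^{-e}$ and uses $\delta(\lambda')\ge\delta(\lambda)$. Your rounding of $\lambda$ also handles an integrality point that the paper glosses over.
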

\begin{proof}
    It is straightforward to verify that the syntax property is satisfied. We note that $\Prove'$ and $\Verify'$ have runtimes and query complexities that are $\poly(\secp', h_\infty')$. These algorithms run $\Prove^H(1^\secp, 1^{h_\infty})$ or $\Verify^H(1^\secp, 1^{h_\infty}, \pi)$, whose runtimes and query complexities are $\poly(\secp, h_\infty)$. Furthermore, $\secp$ and $h_\infty$ are $\poly(\secp', h_\infty')$, so the runtimes and query complexities of $\Prove'$ and $\Verify'$ are $\poly(\secp', h_\infty')$.

    Next, we claim that the correctness property is satisfied. For any $h_\infty' = h_\infty'(\secp')$, let $\secp = \secp(\secp', h_\infty')$ and $h_\infty = h_\infty(\secp', h_\infty') = \secp^{c/2}$. Then there exists a negligible function $\negl$ such that
    \begin{align*}
        &\Pr_{\RO}\left[\Verify'^\RO \left(1^{\secp'}, 1^{h_\infty'}, \pi\right) = \bot : \pi \gets \Prove'^\RO \left(1^{\secp'}, 1^{h_\infty'}\right)\right]\\
        &\quad\quad= \Pr_{\RO}\left[\Verify^\RO \left(1^{\secp}, 1^{h_\infty}, \pi\right) = \bot : \pi \gets \Prove^\RO \left(1^{\secp}, 1^{h_\infty}\right)\right]\\
        &\quad\quad\leq \negl(\secp)
    \end{align*}
    This follows from the description of \Cref{construction:CR-arbitrary-min-entropy} and the correctness of \Cref{construction:CR-bounded-min-entropy} (\Cref{thm:CR-correctness}).
    
    It just remains to argue that $\negl\left(\secp\left[{\secp'}, h_\infty'(\secp')\right]\right)$ is negligible in ${\secp'}$. Since $\negl$ is negligible, we have that for any positive integer $d$, there exists an integer $\secp_d$ such that for all $\secp > \secp_d$, $|\negl(\secp)| < \secp^{-d}$. Also note that $\secp\left[{\secp'}, h_\infty'(\secp')\right] \geq {\secp'}$ for all $\secp' \in \bbN$. Then for all ${\secp'} > \secp_d$, $\secp\left[{\secp'}, h_\infty'(\secp')\right] \geq \secp' > \secp_d$, and
    \[\left|\negl\left(\secp\left[{\secp'}, h_\infty'(\secp')\right]\right)\right| < \secp\left[{\secp'}, h_\infty'(\secp')\right]^{-d} \leq {\secp'}^{-d}\]
    Therefore, the function $\negl\left(\secp\left[{\secp'}, h_\infty'(\secp')\right]\right)$ is negligible in ${\secp'}$. This shows that \Cref{construction:CR-arbitrary-min-entropy} satisfies correctness.

    Finally, we claim that \Cref{construction:CR-arbitrary-min-entropy} satisfies $(Q', h_\infty')$-certifiable min-entropy for any $Q' = \poly({\secp'})$ and $h_\infty'({\secp'}) = {\secp'}^d$, where $d \in \bbN$. First, let us set the parameters. Next, for any ${\secp'}$, let 
    \[\secp = \secp({\secp'}, h_\infty') = \max\left\{\secp', \secp'^{2d/c}\right\}, \quad Q = 2^{\secp^{c/2}}, \quad h_\infty = \secp^{c/2}.\] 
    Note that $(\secp', h_\infty')$ can be computed efficiently from $(\secp, c, d)$ as follows. If $\frac{2d}{c} > 1$, then $\secp = \secp'^{\frac{2d}{c}}$, so $\secp' = \secp^{\frac{c}{2d}}$. Otherwise, if $\frac{2d}{c} \leq 1$, then $\secp = \secp'$. Next, we can compute $h_\infty' = \secp'^d$.
    
    Second, let $\cA'$ be an arbitrary adversary for $(\Prove', \Verify')$ that takes inputs $\left(1^{\secp'}, 1^{h_\infty'}\right)$ and makes at most $Q'({\secp'})$ queries to $H$. Let us use $\cA'$ and $(c,d)$ to construct an adversary $\cA$ for $(\Prove, \Verify)$. $\cA$ takes inputs $\left(1^\secp, 1^{h_\infty}\right)$, then computes $(\secp', h_\infty')$ from $(\secp, c, d)$ as described above, and finally runs $\cA'\left(1^{\secp'}, 1^{h_\infty'}\right)$ and outputs its result. Additionally, let $\cA'^\RO_\top \left(1^{\secp'}, 1^{h_\infty'}\right)$ be the distribution of $\Verify'^\RO [1^{\secp'}, 1^{h_\infty'}, \cA'^\RO(1^{\secp'}, 1^{h_\infty'})]$ conditioned on the output not being $\bot$. Note that $\cA'^\RO_\top \left(1^{\secp'}, 1^{h_\infty'}\right)$ is the same as $\cA^\RO_\top \left(1^\secp, 1^{h_\infty}\right)$, which is the distribution of $\Verify^\RO [1^\secp, 1^{h_\infty}, \cA^\RO(1^\secp, 1^{h_\infty})]$ conditioned on the output not being $\bot$.

    Third, let us argue that the certifiable min-entropy property of \Cref{construction:CR-bounded-min-entropy} applies to our choices of $Q, h_\infty, \cA$. Note that
    \begin{align*}
        Q(\secp) &= 2^{\secp^{c/2}} = 2^{o\left(\secp^c\right)}\\
        h_\infty(\secp) &= \secp^{c/2} = o(\secp^{c})
    \end{align*}
    Furthermore, the number of queries to $H$ made by $\cA(1^\secp, 1^{h_\infty})$ is at most
    \[Q'({\secp'}) = \poly({\secp'}) \leq 2^{{\secp'}^{c/2}} \leq 2^{\secp^{c/2}} = Q(\secp)\]
    for sufficiently large ${\secp'}$.
    
    Fourth, the certifiable min-entropy property of \Cref{construction:CR-bounded-min-entropy} (\Cref{thm:CR-for-depth-bounded-adversaries}) implies that for any inverse-polynomial function $\delta(\cdot)$, there is a negligible function $\negl(\cdot)$ such that
    \begin{align*}
        \negl(\secp) &\geq \Pr_{\RO}\left[\begin{array}{cc}
         & \Pr\left[\Verify^\RO [1^{\secp}, 1^\minent, \cA^\RO(1^{\secp}, 1^\minent)] \neq \bot\right] \geq \delta(\secp) \\
         & \land \Minent\left(\cA^\RO_\top (1^{\secp}, 1^\minent)\right) \leq h_\infty(\secp)
        \end{array}\right]\\
        &= \Pr_{\RO}\left[\begin{array}{cc}
         & \Pr\left[\Verify'^\RO [1^{{\secp'}}, 1^{h_\infty'}, \cA'^\RO(1^{{\secp'}}, 1^{h_\infty'})] \neq \bot\right] \geq \delta(\secp) \\
         & \land \Minent\left(\cA'^\RO_\top (1^{{\secp'}}, 1^{h_\infty'})\right) \leq h_\infty(\secp)
        \end{array}\right]
    \end{align*}
    
    Fifth, let us relate $\delta(\secp)$ to $\delta({\secp'})$ and $h_\infty(\secp)$ to $h_\infty'(\secp')$. $\delta(\secp) = \secp^{-e}$ for some $e \in \bbN$, so for sufficiently large ${\secp'}$, 
    \[\delta({\secp'}) \geq \delta(\secp).\]
    Additionally,
    \[h_\infty'(\secp') = \secp'^d \leq \max\left\{\secp'^{c/2}, \secp'^{d}\right\} = \secp^{c/2} = h_\infty(\secp)\]
    Then
    \begin{align*}
        \negl(\secp) &\geq \Pr_{\RO}\left[\begin{array}{cc}
         & \Pr\left[\Verify'^\RO [1^{{\secp'}}, 1^{h_\infty'}, \cA'^\RO(1^{{\secp'}}, 1^{h_\infty'})] \neq \bot\right] \geq \delta(\secp) \\
         & \land \Minent\left(\cA'^\RO_\top (1^{{\secp'}}, 1^{h_\infty'})\right) \leq h_\infty(\secp)
        \end{array}\right]\\
        &\geq \Pr_{\RO}\left[\begin{array}{cc}
         & \Pr\left[\Verify'^\RO [1^{{\secp'}}, 1^{h_\infty'}, \cA'^\RO(1^{{\secp'}}, 1^{h_\infty'})] \neq \bot\right] \geq \delta({\secp'}) \\
         & \land \Minent\left(\cA'^\RO_\top (1^{{\secp'}}, 1^{h_\infty'})\right) \leq h_\infty'({\secp'})
        \end{array}\right]
    \end{align*}
    Finally,
    $\negl(\secp) = \negl(\secp[{\secp'}, h_\infty'({\secp'})])$ is negligible in ${\secp'}$ because $\secp[{\secp'}, h_\infty'({\secp'})] \geq {\secp'}$. This completes the proof of $(Q', h_\infty')$-certifiable min-entropy.
\end{proof}


\subsection{Proof of Correctness}\label[section]{sec:CR-correctness}
\begin{theorem}[Correctness]\label[theorem]{thm:CR-correctness}
    \Cref{construction:CR-bounded-min-entropy} satisfies correctness (\Cref{def:certifiable-min-entropy}).
\end{theorem}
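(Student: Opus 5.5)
The plan follows the YZ completeness analysis, adapted to the biased oracle and to the unique dual decoder of \Cref{lem:suitablecodes}. Correctness can fail in two ways: the state preparation in Step 1 aborts, or the final measurement outputs an $\bfx$ that is rejected by $\Verify$. We bound each failure probability, averaged over $H\gets\cH$, by $\negl(\secp)$.

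\textbf{Step 1: the abort probability.} For each $i$ and each trial, measuring $H_i$ on a uniform superposition over $\Sigma$ returns $0$ with probability $|H_i^{-1}(0)|/|\Sigma|$. Since $|\Sigma|=2^{\Theta(n\log n)}$ and $p\le 1/(40\sqrt n)$, a Chernoff bound shows this fraction is at least $1/2$ for every $i$, except with probability $2^{-\Omega(|\Sigma|)}$ over $H$. Then the $\secp$ trials for coordinate $i$ all fail with probability at most $2^{-\secp}$. A union bound over $n$ coordinates gives a negligible abort probability.

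\textbf{Step 2: Fourier-domain structure.} Condition on $H$ and on no abort. Following \cite{YZ24}, $\mathsf{QFT}_\Sigma^{\otimes n}\ket{\psi}$ is the uniform superposition over $C^\perp$. Also, $\mathsf{QFT}_\Sigma\ket{\phi_i}=\sum_{e}\hat f_i(e)\ket{e}$, where $f_i$ is the normalized indicator of $H_i^{-1}(0)$. The weight on $e=\mathbf 0$ is $|\hat f_i(\mathbf 0)|^2=|H_i^{-1}(0)|/|\Sigma|=1-p_i$, where $p_i$ is the empirical fraction of ones of $H_i$. Hence the measured error distribution $\cD_i$ places mass at least $1-p_i$ on $\mathbf 0$.

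I would then modify the state as in YZ: project each Fourier error onto $\mathbf 0$ with probability $\ge 1-p_i$. Concentration gives $p_i\le p(1+o(1))$, so $\Pr_{\cD_i}[\mathbf 0]\ge 1-\frac{1}{40\sqrt n}(1+o(1))$.

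This is the main subtlety. Item~\ref{property 2} of \Cref{lem:suitablecodes} requires mass at least $1-\frac{1}{40\sqrt n}$ exactly, while $p$ may equal $\frac{1}{40\sqrt n}$. I would handle it by rerunning the Chernoff argument in that proof with the slightly larger parameter $p(1+o(1))$. This still gives $\hw(\bfe)\le\tau$ except with probability $e^{-\Omega(\sqrt n)}$, because there was slack: $2pN=RN/2<k+1$ is off only by lower-order terms, and Chernoff with $\delta$ slightly less than $1$ still works. The distributions $\cD_i$ also differ across $i$, but the lemma's proof only uses independent per-coordinate bounds, so this is fine.

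\textbf{Step 3: the output lies in the intersection.} As in YZ, let $\mathbf{e}$ be the Fourier error. On the event $\mathbf e\in\GoodErrors$, $U_{\mathsf{decode}}$ returns the first register to $\ket{\mathbf 0}$, leaving $\sum_{\bfz\in C^\perp,\bfe}\alpha\beta_{\bfe}\ket{\bfz+\bfe}$. By the convolution theorem, its inverse QFT is proportional to $\sum_{\bfx}\mathbf 1_C(\bfx)\prod_i f_i(x_i)\ket{\bfx}$. That state is supported on $\bfx\in C$ with $H_i(x_i)=0$ for all $i$, so every measurement outcome is accepted.

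The bad component, where $\bfe\notin\GoodErrors$, has squared norm at most $2^{-\Omega(\sqrt\secp)}$ by Step 2. By \Cref{thm:trace-dist-euclidian-dist}, the real state's output distribution differs from the ideal one by a negligible amount. There is one more requirement: the ideal state must be nonzero, meaning some correct $\bfx$ exists. That follows automatically from normalization, since the good part has norm close to $1$.

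Summing the failure probabilities of Steps 1–3 gives $\negl(\secp)$. I expect the main obstacle to be the threshold bookkeeping in Step 2.
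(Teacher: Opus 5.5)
There is a genuine gap in Step 3. After $U_{\mathsf{decode}}$, the good branch is $\ket{\mathbf 0}\otimes\sum_{\bfx\in C^\perp,\,\bfe\in\GoodErrors}\widehat V(\bfx)\widehat W^H(\bfe)\ket{\bfx+\bfe}$, with $\bfe$ restricted to good errors. You replace it by the full convolution over all $\bfe$ and charge the discrepancy to the norm of the bad branch. But the discrepancy is the vector $\ket{\mathbf 0}\otimes\sum_{(\bfx,\bfe)\in\Bad}\widehat V(\bfx)\widehat W^H(\bfe)\ket{\bfx+\bfe}$. Because $(\bfx,\bfe)\mapsto\bfx+\bfe$ is many-to-one on bad pairs, these amplitudes can interfere constructively. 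The squared norm of this vector is the quantity $\delta$ in \eqref{YZeq28}. It is not controlled by the quantity $\epsilon$ in \eqref{YZeq27}: Cauchy--Schwarz only gives $\delta\le|C^\perp|\,\epsilon$. The real state differs from the ideal state $|\Sigma|^{n/2}\sum_{\bfz}V(\bfz)W^H(\bfz)\ket{\mathbf 0}\ket{\bfz}$ by this vector together with the bad branch, so the relevant error is $\sqrt\epsilon+\sqrt\delta$. Without a bound on $\delta$, you cannot conclude that the output lies in $C\cap T^H$. Nor can you conclude, as in your normalization remark, that this intersection is nonempty: the good branch having norm near $1$ says nothing about the norm of $V\cdot W^H$.

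Your Step 1, and your treatment of $\epsilon$ in Step 2 (including the Chernoff slack), are fine. The paper avoids that threshold bookkeeping by averaging over $H$ first, which gives the product law $\cD$ with $\cD(\mathbf 0)=1-p^\star\ge 1-p$.

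The missing argument is the paper's \Cref{cla:convolution-identity,cla:uniform-convolution}. Writing $\widehat B=\mathbf 1_{\BadErrors}$, one has $\delta=\sum_{\bfz}|V(\bfz)(B*W^H)(\bfz)|^2$. The distribution of $H$ is invariant under translating each $H_i$ (\Cref{cla:symmetry}), so $\mathbb{E}_H|(B*W^H)(\bfz)|^2$ does not depend on $\bfz$. Hence it equals its average over $\bfz\in\Sigma^n$, which by Parseval is $\mathbb{E}_H\sum_{\bfe\in\BadErrors}|\widehat W^H(\bfe)|^2\le 2^{-\Omega(\sqrt\secp)}$. Averaging over $\bfz\in C$ with weight $1/|C|$ then bounds $\mathbb{E}_H[\delta]$, and Markov finishes. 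This step uses the randomness of $H$ essentially, through translation invariance of its distribution. So at this point you must abandon the ``fix $H$ and condition'' framework: bound $\mathbb{E}_H[\delta]$ and then apply Markov over $H$.
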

The rest of \Cref{sec:CR-correctness} is devoted to proving \Cref{thm:CR-correctness}.\\

 First, we show that Prove aborts with negligible probability. Let $M:= |\Sigma|$, and for all $H$, define 
\[
  T_i^{H_i}
  =
  \{\bfe_i\in\Sigma:H_i(\bfe_i)=0\},
\]
Observe that  $|T_i^{H_i}
|\sim \operatorname{Bin}(M,1-p)$, and $\mu_i=\mathbb{E}[|T_i^{H_i}|]= (1-p)\cdot M\geq\frac{3M}{4}$. For all $i \in [n]$, let 
\[
G_i :=\{H: |T_i^{H_i}|\geq \frac{M}{2}\} 
\]
and let $G = \bigwedge_{i} G_i$. 
Because $\frac{M}{2}\leq \frac{2}{3}\mu_i$, the lower tail multiplicative Chernoff bound implies that 
\begin{align*}
    &\Pr_{H}[\neg G_i]\\
    &= \Pr_H[|T_i^{H_i}|<\frac{M}{2}]\\
    &\leq \Pr_H[|T_i^{H_i}|<(1-\frac{1}{3})\mu_i]\\
    &\leq \exp{-\Omega(M)}
\end{align*}
A union bound implies that $\Pr_H[\neg G]\leq n \cdot \exp{-\Omega(M)}$.
Fix any $H \in G$. Then for every $i \in [n]$, the probability that all $\lambda$ attempts to prepare $\ket{\phi_i}$ fail is at most $2^{-\lambda}$. From a union bound, $\Pr[\Prove \text{ aborts }| H]\leq n \cdot 2^{-\lambda}.$
It follows that 
\begin{align*}
    \Pr_{H}[\Prove^H \text{ aborts }]\\
    &\leq \Pr_{H}[\neg G]+ \Pr_{H}[\Prove^H \text{ aborts} \wedge H \in G]\\
    &\leq n \cdot \exp{-\Omega(M)}+  n \cdot 2^{-\lambda}\\
    &\leq \negl(\lambda)
\end{align*}

\noindent Now we will use a similar analysis to that of \cite{YZ24}. The main technical lemma in \cite{YZ24} is the following.
\begin{lemma}[\protect{\cite[Lemma 5.1]{YZ24}}]\label[lemma]{lem:techlemmayz}
    Let $\ket{\psi}$ and $\ket{\phi}$ be quantum states on a quantum system over an alphabet $\Sigma= \mathbb{F}_q^m$ written as 
    \[
    \ket{\psi}=\sum_{\mathbf{x} \in \Sigma^n}V(\mathbf{x})\mathbf{\ket{x}}
    \]
    \[
    \ket{\phi}= \sum_{\mathbf{e} \in \Sigma^n}W(\mathbf{e})\mathbf{\ket{e}}
    \]
    Let $F: \Sigma^n \rightarrow \Sigma^n$ be a function. Let $\mathsf{Good}\subseteq \Sigma^n \times \Sigma^n$ be a subset such that for any $\mathbf{(x,e)}\in \mathsf{Good}$, we have $F\mathbf{(x+e)}= \mathbf{x}$. Let $\mathsf{Bad}$ be the complement of $\mathsf{Good}$, i.e. $\mathsf{Bad}= (\Sigma^n \times \Sigma^n )\setminus \mathsf{Good}$. Suppose that we have,
    \begin{equation}\label{YZeq27}
\sum_{(\mathbf{x,e})\in\mathsf{Bad} }\Big|\widehat{V}(\mathbf{x})\widehat{W}(\mathbf{e})\Big|^2 \leq \epsilon
    \end{equation}
    
    and 
    \begin{equation}\label{YZeq28}
    \sum_{\mathbf{z}\in \Sigma^n}\left|\sum_{\substack{(\mathbf{x,e})\in\mathsf{Bad}\\: \mathbf{x}+\mathbf{e}=\mathbf{z} }}\widehat{V}(\mathbf{x})\widehat{W}(\mathbf{e})\right|^2 \leq \delta
    \end{equation}
    Let $U_{\mathsf{add}}$ and $U_F$ be unitaries defined as follows:
    \[
\ket{\bfx}\ket{\bfe}\xrightarrow {U_{\mathsf{add}}}\ket{\bfx}\ket{\bfx+\bfe}\xrightarrow {U_{F}}\ket{\bfx- F(\bfx+\bfe)}\ket{\bfx+\bfe}\,.
    \]
    Then we have
    \[
    \Big(I \otimes (\mathsf{QFT}_{\Sigma}^{-1})^{\otimes n}\Big) \cdot U_{F} \cdot U_{\mathsf{add}} \cdot \Big(\mathsf{QFT}_{\Sigma}^{\otimes n} \otimes \mathsf{QFT}_{\Sigma}^{\otimes n}\Big) \cdot \Big(\ket{\psi}\otimes \ket{\phi}\Big) \approx_{\sqrt{\epsilon}+ \sqrt{\delta}}|\Sigma|^{n/2} \cdot \sum_{\mathbf{z} \in \Sigma^n}V(\bfz) \cdot W(\mathbf{z}) \cdot \Big(\ket{\mathbf{0}} \otimes \ket{\mathbf{z}}\Big)\,.
    \]
\end{lemma}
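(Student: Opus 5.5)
We prove this by direct Fourier computation, following the convolution-theorem template that the overview sketches. First we expand the input state in the Fourier basis. Writing $\widehat{V}$ and $\widehat{W}$ for the Fourier coefficients, so that $\mathsf{QFT}_\Sigma^{\otimes n}\ket{\psi}=\sum_{\bfx}\widehat V(\bfx)\ket{\bfx}$ and similarly for $\ket\phi$, the state after the two QFTs is $\sum_{\bfx,\bfe}\widehat V(\bfx)\widehat W(\bfe)\ket{\bfx}\ket{\bfe}$. Applying $U_{\mathsf{add}}$ and then $U_F$ gives
\[
\sum_{\bfx,\bfe}\widehat V(\bfx)\widehat W(\bfe)\ket{\bfx-F(\bfx+\bfe)}\ket{\bfx+\bfe}.
\]
We split this sum into a $\mathsf{Good}$ part and a $\mathsf{Bad}$ part. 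On $\mathsf{Good}$ we have $F(\bfx+\bfe)=\bfx$, so the first register is $\ket{\mathbf 0}$ and the good part equals $\ket{\mathbf 0}\otimes\sum_{\bfz}\bigl(\sum_{(\bfx,\bfe)\in\mathsf{Good},\,\bfx+\bfe=\bfz}\widehat V(\bfx)\widehat W(\bfe)\bigr)\ket{\bfz}$.

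The key step is to compare this good part with the ``ideal'' vector $\ket{\mathbf 0}\otimes\sum_{\bfz}(\widehat V*\widehat W)(\bfz)\ket{\bfz}$, where the convolution runs over all pairs. These two vectors differ exactly by $\ket{\mathbf 0}\otimes\sum_{\bfz}\bigl(\sum_{\mathsf{Bad},\,\bfx+\bfe=\bfz}\widehat V(\bfx)\widehat W(\bfe)\bigr)\ket{\bfz}$, whose squared norm is at most $\delta$ by \eqref{YZeq28}.

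The bad part of the actual state, $\sum_{\mathsf{Bad}}\widehat V(\bfx)\widehat W(\bfe)\ket{\bfx-F(\bfx+\bfe)}\ket{\bfx+\bfe}$, needs a separate argument. The map $(\bfx,\bfe)\mapsto(\bfx-F(\bfx+\bfe),\bfx+\bfe)$ is injective, being the composition of the two permutations $U_{\mathsf{add}}$ and $U_F$. The bad-part terms therefore land on distinct orthogonal basis vectors, and the squared norm is exactly $\sum_{\mathsf{Bad}}|\widehat V\widehat W|^2\le\epsilon$ by \eqref{YZeq27}. By the triangle inequality, the state before the final inverse QFT is within $\sqrt\epsilon+\sqrt\delta$ in Euclidean norm of the ideal vector.

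It remains to apply $I\otimes(\mathsf{QFT}_\Sigma^{-1})^{\otimes n}$. This is unitary, so it preserves the distance. By the convolution theorem over $\Sigma^n=\mathbb F_q^{mn}$, the inverse QFT of $\widehat V*\widehat W$ is $|\Sigma|^{n/2}\,V\cdot W$ pointwise. The ideal vector therefore maps to $|\Sigma|^{n/2}\sum_{\bfz}V(\bfz)W(\bfz)\ket{\mathbf 0}\ket{\bfz}$, which is the target.

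The main point of care is the normalization and sign conventions of $\mathsf{QFT}_\Sigma$ (the trace-form character used in \cite{YZ24}). We need to confirm that the convolution identity produces exactly the factor $|\Sigma|^{n/2}$, and that the sign convention in $\bfx-F(\bfx+\bfe)$ is compatible with it. Everything else is bookkeeping.
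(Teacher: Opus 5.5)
Your proof is correct and matches the standard argument of \cite[Lemma 5.1]{YZ24}, which the paper cites without reproving: separate the $\mathsf{Bad}$ terms (norm at most $\sqrt{\epsilon}$ by unitarity of $U_F U_{\mathsf{add}}$), compare the $\mathsf{Good}$ part with the full convolution (gap at most $\sqrt{\delta}$), and finish with the convolution theorem. The normalization you flag does check out. With the unitary $\mathsf{QFT}_\Sigma$, character orthogonality gives $\widehat V*\widehat W=|\Sigma|^{n/2}\,\widehat{V\cdot W}$ under either sign convention, and $\bfx-F(\bfx+\bfe)$ raises no sign issue because it vanishes on $\mathsf{Good}$.
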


In this notation, for any vectors $\bfv, \bfw$ and any $d > 0$, $\bfv \approx_{d} \bfw$ means $\|\bfv - \bfw\|_2 \leq d$.

Now, let us instantiate the variables of \Cref{lem:techlemmayz}. For a fixed \(H\), define
\[
  T_i^{H_i}
  =
  \{\bfe_i\in\Sigma:H_i(\bfe_i)=0\},
  \qquad
  T^H
  =
  T_1^{H_1}\times\cdots\times T_n^{H_n}.
\]
Let us define several amplitude functions:
\begin{align*}
    V(\bfx) &= \begin{cases}
        \frac{1}{\sqrt{|C|}} &\text{ if } \bfx \in C\\
         0 &\text{ otherwise }
    \end{cases} &\forall \bfx \in \Sigma^n\\
    W_i^{H_i}(\bfe_i) &= \begin{cases}
        \frac{1}{\sqrt{\left|T_i^{H_i}\right|}} &\text{ if } \bfe_i \in T_i^{H_i}\\
        0 &\text{ otherwise }
    \end{cases} &\forall i \in [n], \bfe_i \in \Sigma\\
    W^H(\bfe) &= \begin{cases}
        \frac{1}{\sqrt{\left|T^H\right|}} &\text{ if } \bfe \in T^H\\
        0 &\text{ otherwise }
    \end{cases} &\forall \bfe \in \Sigma^n\\
    &= \prod_{i \in [n]} W_i^{H_i}(\bfe_i)\,.
\end{align*}
    Additionally, let $F= \mathsf{Decode}_{C^{\perp}}$, and assume for now that with overwhelming probability over the choice of $H$, there exist subsets $\Good, \Bad$ such that inequalities \eqref{YZeq27} and \eqref{YZeq28} are true for negligible functions $\epsilon$ and $\delta$ respectively. This assumption will be justified in \Cref{cla:conditions}.
    Then, \Cref{lem:techlemmayz} directly corresponds to the honest $\Prove$ algorithm and implies that an honestly generated proof $\pi$ will satisfy $V(\pi) \cdot W^H(\pi) > 0$, except with negligible probability. If $V(\pi) \cdot W^H(\pi) > 0$, then $\pi \in C$ and $H_i(\pi_i) = 0$ for all $i \in [n]$. This value of $\pi$ will be accepted by $\Verify$, so $\Verify$ will not output $\bot$. This proves correctness.
    
    It remains to prove that \Cref{YZeq27,YZeq28} are true for negligible functions $\epsilon$ and $\delta$.\\
 Let \(M:=|\Sigma|\). Let \(\mu_p\) denote the distribution on functions
\(h:\Sigma\to\{0,1\}\) defined by
\[
  \mu_p(h)
  :=
  (1-p)^{|h^{-1}(0)|}p^{|h^{-1}(1)|}.
\]
Thus, the original biased oracle
\(H=(H_1,\ldots,H_n)\) is obtained by sampling $H_1,\ldots,H_n $ independently from $\mu_p$.

Define the following events:
\[
  \forall i \in [n] \quad \cE_i:=\left\{T_i^{H_i}\neq\emptyset\right\},
  \qquad
  \cE:=\bigwedge_{i=1}^n \cE_i.
\]
$\cE_i$ is the event that at least one input to $H_i$ hashes to $0$, and $\lnot \cE_i$ is the event that $H_i(\bfe_i)=1$ for every
$\bfe_i\in\Sigma$. Then
\[
  \Pr[\lnot \cE]
  \leq
  \sum_{i=1}^n\Pr[\lnot \cE_i]
  =np^M
  =\operatorname{negl}(\lambda),
\]

Define the conditioned one-coordinate distribution \(\widetilde{\mu}_p\) as the distribution of $H_i$ conditioned on $\cE_i$. Formally, for any $h : \Sigma \to \bit$,
\[
  \widetilde{\mu}_p(h)
  :=
  \begin{cases}
    \displaystyle
    \frac{(1-p)^{|h^{-1}(0)|}p^{|h^{-1}(1)|}}{1-p^M},
      & h^{-1}(0)\neq\emptyset,\\[3mm]
    0, & h^{-1}(0)=\emptyset.
  \end{cases}
\]
Sampling $H$ conditioned on $\cE$ is equivalent to sampling
$H_1,\ldots,H_n $ independently from $ \widetilde{\mu_p}.$
Throughout the remainder of the proof, whenever \(H\) or \(H_i\) appears
in an expectation or probability, it is sampled from $\widetilde{\mu_p}^n$ or $\widetilde{\mu_p}$, respectively, unless
stated otherwise. Note that $\Pr[\cE] = 1 - \negl(\secp)$, so $\mu_p^n$ and $\widetilde{\mu_p}^n$ are statistically close.

Finally, define
\[
  p^\star
  :=
  \frac{p-p^M}{1-p^M}.
\]
Then
\[
  0\leq p^\star\leq p,
  \qquad
  1-p^\star=\frac{1-p}{1-p^M}.
\]
\begin{claim}[Biased analogue of \protect{\cite[Claim 6.4]{YZ24}}]
\label[claim]{cla:conditions}

With probability $1-\negl(\lambda)$ over
$H= (H_1, \dots, H_n)\sim\widetilde{\mu_p}^{n}$, there are subsets
$\Good\subseteq\Sigma^n\times\Sigma^n$ and $\Bad = (\Sigma^n \times \Sigma^n) \backslash \Good$ such that
\[
  \mathsf{Decode}_{C^{\perp}}(\bfx+\bfe)=\bfx
  \qquad
  \text{for all }(\bfx,\bfe)\in\Good,
\]
and we have 
\[
    \sum_{(\bfx,\bfe)\in\Bad}
    \left|\widehat V(\bfx)\widehat W^H(\bfe)\right|^2
  \leq \negl(\lambda)
\]
and
\[
    \sum_{\bfz \in \Sigma^n} \left|\sum_{\substack{(\bfx,\bfe)\in\Bad\\:\bfx + \bfe = \bfz}}
    \widehat V(\bfx)\widehat W^H(\bfe)\right|^2
  \leq  \negl(\lambda)\,.
\]
\end{claim}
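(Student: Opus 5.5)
Following \cite[Claim 6.4]{YZ24}, I will take
\[
\Good := \{(\bfx,\bfe) : \bfx\in C^\perp,\ \bfe\in\GoodErrors\}
\]
with $\GoodErrors$ from the proof of \Cref{lem:suitablecodes}. Then \eqref{eq:deterministic-correctness} gives $\Decode_{C^\perp}(\bfx+\bfe)=\bfx$ on $\Good$. Since $\widehat V$ is supported on $C^\perp$ with $|\widehat V(\bfx)|^2 = 1/|C^\perp|$ there, only pairs with $\bfx\in C^\perp$ and $\bfe\notin\GoodErrors$ contribute to $\Bad$. Writing $\widehat W^H=\bigotimes_i \widehat W_i^{H_i}$, the first sum is
\[
\sum_{\bfe\notin\GoodErrors}|\widehat W^H(\bfe)|^2 .
\]
This is the probability that $\bfe$ fails to be good when $\bfe$ is drawn from the product distribution $\prod_i \mathcal D_i^{H_i}$, where $\mathcal D_i^{H_i}(e)=|\widehat W_i^{H_i}(e)|^2$. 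I will bound its expectation over $H$ and finish with Markov's inequality.

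For the per-coordinate Fourier mass at zero, note that $\widehat W_i^{H_i}(\mathbf 0) = \sqrt{|T_i|/M}$, so $\mathcal D_i^{H_i}(\mathbf 0)=|T_i^{H_i}|/M$. Under $\widetilde\mu_p$, the coordinates have independent, not identical, distributions $\mathcal D_i^{H_i}$. The plan is to rerun the Chernoff argument of \Cref{lem:suitablecodes} directly in the averaged setting. Averaging over $H_i$, the expected probability that block $i$ of $\bfe$ is nonzero is $1-\mathbb E[|T_i|]/M \approx p^\star \le p \le \frac{1}{40\sqrt n}$. The blocks are independent across $i$ because $H$ is a product, so the averaged error distribution over $(H,\bfe)$ is exactly $\mathcal D^n$ with $\mathcal D(\mathbf 0)\ge 1-\frac{1}{40\sqrt n}$. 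Item~\ref{property 2} of \Cref{lem:suitablecodes} (really, its Chernoff step) then gives
\[
\mathbb E_H\Bigl[\sum_{\bfe\notin\GoodErrors}|\widehat W^H(\bfe)|^2\Bigr]\le 2^{-\Omega(\sqrt\lambda)} .
\]
Markov's inequality yields $\epsilon = 2^{-\Omega(\sqrt\lambda)/2}$ except with probability $2^{-\Omega(\sqrt\lambda)/2}$ over $H$. If the conditioning makes $1-\mathbb E|T_i|/M$ slightly exceed $p$, I absorb the difference, since $p^M$ is negligible.

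The second inequality \eqref{YZeq28} is the main obstacle, because of cross terms. I will mimic YZ. Expand
\[
\sum_{\bfz}\Bigl|\sum_{\bfx+\bfe=\bfz,\ \Bad}\widehat V(\bfx)\widehat W^H(\bfe)\Bigr|^2
\]
into a sum over pairs $(\bfx,\bfe),(\bfx',\bfe')\in\Bad$ with $\bfx+\bfe=\bfx'+\bfe'$, and take the expectation over $H$. The key fact is that $\mathbb E_H[\widehat W^H(\bfe)\overline{\widehat W^H(\bfe')}]$ factorizes over coordinates. For each $i$ with $e_i\ne e'_i$, YZ show this expectation is very small; with the biased/conditioned distribution, I would recompute $\mathbb E[\widehat W_i(a)\overline{\widehat W_i(b)}]$ for $a\neq b$, both nonzero or one zero. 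Symmetry under translating the support of a random $H_i$ (the distribution $\widetilde\mu_p$ is invariant under shifts of $\Sigma$) shows the cross expectation vanishes for $a\ne b$, up to the normalization $1/\sqrt{|T_i|}$. I will handle that by replacing $1/\sqrt{|T_i|}$ with its concentrated value $1/\sqrt{(1-p)M}$ on the event $G$ from the abort analysis, at negligible cost. Once the off-diagonal terms are essentially killed, the expected second sum is bounded by the diagonal $|\C^\perp|\cdot\mathbb E\sum_{\bfx}|\widehat V(\bfx)|^2\sum_{\bfe\notin\GoodErrors}|\widehat W^H(\bfe)|^2$. Up to the $|C^\perp|$ multiplicity from fixing $\bfz$ and summing $\bfx$, this is the same quantity bounded in the first step. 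I expect this to need the same $q$-ary character orthogonality bookkeeping as \cite{YZ24}, adapted to $1-p$ weights. Markov's inequality and a union bound with $\Pr[\lnot G]$ then finish the claim.
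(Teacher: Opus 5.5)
Your treatment of the first sum matches the paper: $\Good=C^\perp\times\GoodErrors$ (your weight-based $\GoodErrors$ is a subset of the paper's, which changes nothing), $\mathbb{E}_H|\widehat W^H(\bfe)|^2=\cD^n(\bfe)$ with $\cD(\mathbf 0)=1-p^\star$, then Chernoff and Markov. Note that $p^\star\le p$ holds exactly, so nothing needs absorbing.

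For the second sum, your route of expanding the square and killing off-diagonal terms in expectation is viable. However, two steps as written would fail.

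\textbf{The $|C^\perp|$ factor.} Your final bound carries a factor $|C^\perp|=q^{N-k-1}=2^{\Theta(n^2\log n)}$, and $|C^\perp|\cdot 2^{-\Omega(\sqrt\lambda)}$ is not negligible. There is no such multiplicity. Each pair $(\bfx,\bfe)$ occurs for exactly one $\bfz=\bfx+\bfe$, so the diagonal part is $\sum_{\bfx\in C^\perp}|\widehat V(\bfx)|^2\sum_{\bfe\in\BadErrors}|\widehat W^H(\bfe)|^2=\sum_{\bfe\in\BadErrors}|\widehat W^H(\bfe)|^2$. This is exactly the first quantity.

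\textbf{Approximate versus exact cancellation.} Off-diagonal terms that are only ``essentially'' killed would not suffice, since there are astronomically many of them. You need exact cancellation, and you have it without touching the normalization. Translating $H$ coordinatewise by $\mathbf s\in\Sigma^n$ preserves $\widetilde\mu_p^{\,n}$ (\Cref{cla:symmetry}) and every $|T_i^{H_i}|$. It multiplies $\widehat W^H(\bfe)\overline{\widehat W^H(\bfe')}$ by the value at $\mathbf s$ of a character that is nontrivial whenever $\bfe\neq\bfe'$. Hence $\mathbb{E}_H[\widehat W^H(\bfe)\overline{\widehat W^H(\bfe')}]=0$ exactly, with the random $1/\sqrt{|T_i^{H_i}|}$ left in place.

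Your replacement of the normalization on $G$ is therefore unnecessary. It is also not negligible-cost as stated. $G$ only gives $|T_i^{H_i}|\ge M/2$, and the squared sum scales by $\prod_i |T_i^{H_i}|/((1-p)M)$. So on $G$ the two versions of the sum can differ by a factor of roughly $2^{n}$, which swamps $2^{-\Omega(\sqrt n)}$.

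With these fixes you get $\mathbb{E}_H[\text{second sum}]=\mathbb{E}_H[\text{first sum}]\le 2^{-\Omega(\sqrt\lambda)}$, and Markov finishes.

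The paper proves the same identity in the dual domain. By the convolution theorem and Parseval, the second sum equals $\frac{1}{|C|}\sum_{\bfz\in C}|(B*W^H)(\bfz)|^2$. Translation invariance makes $\mathbb{E}_H|(B*W^H)(\bfz)|^2$ independent of $\bfz$, so it equals its average over all of $\Sigma^n$. By Parseval that average is $\mathbb{E}_H\sum_{\bfe\in\BadErrors}|\widehat W^H(\bfe)|^2$.

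Stationarity of $B*W^H$ in $\bfz$ and uncorrelatedness of the coefficients $\widehat W^H(\bfe)$ across distinct $\bfe$ are the same fact. Your expansion, once corrected, avoids the convolution-theorem normalizations. The paper's version never has to reason about cross terms, which is exactly where your write-up went astray.
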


\begin{proof}
 
By the definition of $V$,
\begin{equation}
\label{eq:hat-v}
  \widehat V(\bfx)
  =
  \begin{cases}
    \frac{1}{\sqrt{|C^\perp|}},&\bfx\in C^\perp,\\
    0,&\bfx\notin C^\perp.
  \end{cases}
\end{equation}
Define
\begin{align*}
  \GoodErrors
  &:=
  \left\{
    \bfe\in\Sigma^n:
    \forall \bfx\in C^\perp, \mathsf{Decode}_{C^{\perp}}(\bfx+\bfe)=\bfx
  \right\},\\
  \BadErrors
  &:=
  \Sigma^n\setminus\GoodErrors.
\end{align*}
 Since $p^\star\leq p \leq \frac{1}{40 \sqrt{n}}$, \Cref{lem:suitablecodes} implies that,
\begin{equation}
\label{eq:prob-bad-error}
  \Pr_{\bfe\leftarrow\cD_{p^\star}^n}
  [\bfe\in\BadErrors]
  \leq2^{-\Omega(\sqrt{\lambda})}.
\end{equation}
for any distribution $\cD_{p^\star}$ that samples $0$ with probability $1-p^{\star}$.

Set
\[
  \Good
  :=
  C^\perp\times\GoodErrors,
  \qquad
  \Bad
  :=
  (\Sigma^n\times\Sigma^n)\setminus\Good.
\]
By definition,
\[
  \mathsf{Decode}_{C^{\perp}}(\bfx+\bfe)=\bfx
  \qquad
  \text{for every }(\bfx,\bfe)\in\Good.
\]

Because $\widehat V(\bfx)=0$ for $\bfx\notin C^\perp$,
\begin{align}
  \sum_{(\bfx,\bfe)\in\Bad}
  \left|\widehat V(\bfx)\widehat W^H(\bfe)\right|^2
  &=
  \sum_{\bfe\in\BadErrors}
  \left|\widehat W^H(\bfe)\right|^2,
  \label{eq:first-reduction}\\
  \sum_{\bfz\in\Sigma^n}
  \left|
    \sum_{\substack{(\bfx,\bfe)\in\Bad\\:\bfx+\bfe=\bfz}}
    \widehat V(\bfx)\widehat W^H(\bfe)
  \right|^2
  &=
  \sum_{\bfz\in\Sigma^n}
  \left|
    \sum_{\substack{\bfx\in C^\perp,\\ \bfe\in\BadErrors\\:\bfx+\bfe=\bfz}}
    \widehat V(\bfx)\widehat W^H(\bfe)
  \right|^2.
  \label{eq:second-reduction}
\end{align}
By a standard averaging argument, it is enough to establish
\begin{align}
  \mathbb E_{H}
  \left[
    \sum_{\bfe\in\BadErrors}
    \left|\widehat W^H(\bfe)\right|^2
  \right]
  &\leq2^{-\Omega(\sqrt{\lambda})},
  \label{eq:first-expectation-goal}\\
  \text{and}\qquad \mathbb E_{H}
  \left[
    \sum_{\bfz\in\Sigma^n}
    \Bigg|
      \sum_{\substack{\bfx\in C^\perp,\\ \bfe\in\BadErrors\\:\bfx+\bfe=\bfz}}
      \widehat V(\bfx)\widehat W^H(\bfe)
    \Bigg|^2
  \right]
  &\leq2^{-\Omega(\sqrt{\lambda})}.
  \label{eq:second-expectation-goal}
\end{align}
Then Markov's inequality will complete the proof of \Cref{cla:conditions}.

\begin{claim}
\label[claim]{cla:symmetry}
Let $\pi:\Sigma\to\Sigma$ be a permutation.  Then the distributions of $H_i$ and $H_i \circ \pi$ (resp. $H$ and $H \circ \pi^{\otimes n}$) are identical.
\end{claim}

\begin{proofofclaim}
Composition with $\pi$ preserves the cardinality of the set of preimages of $\zero$. More concretely, for all $i \in [n]$
\begin{equation*}
  |\{\bfe_i \in \Sigma:H_i(\bfe_i)=\zero\}|
  =
  |\{\bfe_i \in \Sigma:(H_i\circ\pi)(\bfe_i)=\zero\}|.\qedhere
\end{equation*}
\end{proofofclaim}

\paragraph{Proof of \Cref{eq:first-expectation-goal}.}
We first prove the following claim.

\begin{claim}
\label[claim]{cla:local-fourier}
For every $i\in[n]$,
\begin{equation}
\label{eq:hat-w-zero}
  \mathbb E_{H_i}
  \left[
    \left|\widehat W_i^{H_i}(\zero)\right|^2
 \right]
  =
  1-p^{\star}.
\end{equation}

\end{claim}

\begin{proofofclaim}
\Cref{eq:hat-w-zero} is proven as follows:
\begin{align*}
  \mathbb E_{H_i}
  \left[
    \left|\widehat W_i^{H_i}(\zero)\right|^2
  \right]
  &=
  \mathbb E_{H_i}
  \left[
    \left|
      \frac1{\sqrt M}
      \sum_{\bfa\in\Sigma}W_i^{H_i}(\bfa)
    \right|^2
  \right]\\
  &=
  \frac{
    \mathbb E_{H_i}\left[\left|T_i^{H_i}\right|\right]
  }{
   M
  }\\
  &=
  \frac{1-p}{1-p^M}.\\
  &=1-p^{\star}\,.\tag*{\qedhere}
\end{align*}
\end{proofofclaim}

Note that the distribution $\cD^{(i)}$ on $\Sigma$ defined by
\begin{equation}\label{eq:dist-and-expectation}\cD^{(i)}(e) = \mathbb E_{H_i}
  \left[
    \left|\widehat W_i^{H_i}(e)\right|^2\right]
\end{equation}
is the same for all $i\in [n]$. Thus, we can denote it by $\cD$. Observe further that $\cD$ satisfies the condition of \Cref{lem:suitablecodes}, namely $\cD(0) = 1-p^\star$ for $p^{\star} \le \frac{1}{40\sqrt{n}}$.


For every $\bfe=(\bfe_1,\ldots,\bfe_n)\in\Sigma^n$,

\begin{equation}
\label{eq:hat-w-product}
  \widehat W^H(\bfe)
  =
  \prod_{i=1}^n\widehat W_i^{H_i}(\bfe_i).
\end{equation}
Combining \eqref{eq:dist-and-expectation},
\eqref{eq:hat-w-product}, and conditional independence yields
\begin{equation}
\label{eq:product-law}
  \mathbb E_{H}
  \left[
    \left|\widehat W^H(\bfe)\right|^2
  \right]
  =
  \cD^n(\bfe).
\end{equation}
Consequently, by \eqref{eq:prob-bad-error},
\[
\begin{aligned}
  \mathbb E_{H}
  \left[
    \sum_{\bfe\in\BadErrors}\left|\widehat W^H(\bfe)\right|^2
  \right]
  &=
  \sum_{\bfe\in\BadErrors}\cD^n(\bfe)\\
  &=
  \Pr_{\bfe\leftarrow\cD^n}
  [\bfe\in\BadErrors]\\
  &\leq2^{-\Omega(\sqrt{\lambda})}.
\end{aligned}
\]
This proves \eqref{eq:first-expectation-goal}.

\paragraph{Proof of Equation \eqref{eq:second-expectation-goal}.}
Define $B:\Sigma^n\to \mathbb{C}$ so that $\widehat{B}$ satisfies the following:
\[
  \widehat B(\bfe)
  =
  \begin{cases}
    1,&\bfe\in\BadErrors\\
    0,&\bfe\notin\BadErrors
  \end{cases}\,.
\]

\begin{claim}
\label[claim]{cla:convolution-identity}

\[
  \sum_{\bfz\in\Sigma^n}
  \left|
    \sum_{\substack{\bfx\in C^\perp,\\ \bfe\in\BadErrors\\:\bfx+\bfe=\bfz}}
    \widehat V(\bfx)\widehat W^H(\bfe)
  \right|^2
  =
  \sum_{\bfz\in\Sigma^n}
  \left|V(\bfz)\left(B*W^H\right)(\bfz)\right|^2.
\]
\end{claim}

\begin{proofofclaim}
For every $\bfz\in\Sigma^n$,
\[
\begin{aligned}
  \sum_{\substack{\bfx\in C^\perp,\\ \bfe\in\BadErrors\\:\bfx+\bfe=\bfz}}
  \widehat V(\bfx)\widehat W^H(\bfe)&=
  \sum_{\substack{\bfx,\bfe\in\Sigma^n\\:\bfx+\bfe=\bfz}}
  \widehat V(\bfx)
  \bigl(\widehat B(\bfe)\widehat W^H(\bfe)\bigr)\\
  &=
  \left[
    \widehat V*
    \left(\widehat B\cdot\widehat W^H\right)
  \right](\bfz)\\
  &=
  \reallywidehat{\,\left[V\cdot\left(B*W^H\right)\right]\,}(\bfz)\,.
\end{aligned}
\]
The first equality uses
$\widehat V(\bfx)=0$ for all $\bfx \notin C^\perp$; the last equality is the
convolution theorem.
Parseval's theorem completes the proof.
\end{proofofclaim}

\begin{claim}
\label[claim]{cla:uniform-convolution}
For every \(\bfz\in\Sigma^n\),
\[
  \mathbb E_{H}
  \left[
    \left|\left(B*W^H\right)(\bfz)\right|^2
  \right]
  \leq2^{-\Omega(\sqrt{\lambda})}.
\]
\end{claim}

\begin{proofofclaim}
First observe that for any $\bfz_0,\bfz_1\in\Sigma^n$,
\begin{equation}
\label{eq:translation-constant}
  \mathbb E_{H}
  \left[
    \left|\left(B*W^H\right)(\bfz_0)\right|^2
  \right]
  =
  \mathbb E_{H}
  \left[
    \left|\left(B*W^H\right)(\bfz_1)\right|^2
  \right].
\end{equation}
Indeed, let
\[
  \tau_i(\bfa):=\bfa+(\bfz_0)_i-(\bfz_1)_i
  \qquad\forall i\in[n], \bfa \in \Sigma
\]
and $\boldsymbol\tau=(\tau_1,\ldots,\tau_n)$. 
Therefore,
\[
\begin{aligned}
  \mathbb E_{H}
  \left[
    \left|\left(B*W^H\right)(\bfz_0)\right|^2
  \right]&=
  \mathbb E_{H}
  \left[
    \left|
      \sum_{\bfx\in\Sigma^n}
      B(\bfx)W^H(\bfz_0-\bfx)
    \right|^2
  \right]\\
  &=
  \mathbb E_{H}
  \left[
    \left|
      \sum_{\bfx\in\Sigma^n}
      B(\bfx)W^{H\circ\boldsymbol\tau}(\bfz_1-\bfx)
    \right|^2
  \right]\\
  &=
  \mathbb E_{H}
  \left[
    \left|
      \sum_{\bfx\in\Sigma^n}
      B(\bfx)W^H(\bfz_1-\bfx)
    \right|^2
  \right]\\
  &=
  \mathbb E_{H}
  \left[
    \left|\left(B*W^H\right)(\bfz_1)\right|^2
  \right],
\end{aligned}
\]
where the third equality uses
\Cref{cla:symmetry}.

Then, for any $\bfz\in \Sigma^n$,
\[
\begin{aligned}
  \mathbb E_{H}
  \left[
    \left|\left(B*W^H\right)(\bfz)\right|^2
  \right]
  &=
  \frac1{M^n}
  \sum_{\bfz'\in\Sigma^n}
  \mathbb E_{H}
  \left[
    \left|\left(B*W^H\right)(\bfz')\right|^2
  \right]\\
  &=
  \frac1{M^n}
  \mathbb E_{H}
  \left[
    \sum_{\bfz'\in\Sigma^n}\left|\left(B*W^H\right)(\bfz')\right|^2
  \right]\\
  &=
  \frac1{M^n}
  \mathbb E_{H}
  \left[
    \sum_{\bfe\in\Sigma^n}
    \left|
      M^{n/2}\widehat B(\bfe)\widehat W^H(\bfe)
    \right|^2
  \right]\\
  &=
  \mathbb E_{H}
  \left[
    \sum_{\bfe\in\BadErrors}
    \left|\widehat W^H(\bfe)\right|^2
  \right]\\
  &\leq e^{-\Omega(\sqrt{\lambda})}.
\end{aligned}
\]
The third equality uses Parseval's theorem and the convolution theorem; the
last inequality is from \eqref{eq:first-expectation-goal}.
\end{proofofclaim}

Finally, \Cref{cla:convolution-identity,cla:uniform-convolution}, together with the definition of
$V$, give
\[
\begin{aligned}
  \mathbb E_{H}
  \Bigg[
    \sum_{\bfz\in\Sigma^n}
    \Bigg|
      \sum_{\substack{\bfx\in C^\perp,\\ \bfe\in\BadErrors\\:\bfx+\bfe=\bfz}}
      \widehat V(\bfx)\widehat W^H(\bfe)
    \Bigg|^2
  \Bigg]&=
  \mathbb E_{H}
  \left[
    \sum_{\bfz\in\Sigma^n}
    \left|V(\bfz)\left(B*W^H\right)(\bfz)\right|^2
  \right]\\
  &=
  \frac1{|C|}
  \sum_{\bfz\in C}
  \mathbb E_{H}
  \left[
    \left|\left(B*W^H\right)(\bfz)\right|^2
  \right]\\
  &\leq e^{-\Omega(\sqrt{\lambda})}.
\end{aligned}
\]
This proves \Cref{eq:second-expectation-goal} and completes the proof of \Cref{cla:conditions}.
\end{proof}

\section{Proof of the Min-Entropy Property}\label[section]{sec:main}
Our main result (\Cref{thm:CR-for-depth-bounded-adversaries}) says that the construction in \Cref{sec:CR-construction} provides $\Omega(\secp^{c})$ bits of min-entropy, for some constant $c<1/2$, as long as the adversary is limited to $2^{o(\secp^c)}$-many queries.

\begin{theorem}[Certifiable Min-Entropy]\label[theorem]{thm:CR-for-depth-bounded-adversaries}
    \Cref{construction:CR-bounded-min-entropy} (which is relative to a biased random oracle) satisfies $(Q,\minent)$-certifiable min-entropy (\Cref{def:certifiable-min-entropy}) for any $Q = 2^{o(\secp^c)}$ and any $\minent(\secp) = o(\secp^{c})$, where $c<1/2$ is a constant parameter of the list-recoverable code (\Cref{sec:ECC-construction}).
\end{theorem}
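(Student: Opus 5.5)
The plan is to combine the two reprogramming claims from the overview (\Cref{clm:claim1-intro,clm:claim2-intro}) into a statement about a single ``bad'' event, with the counting done under the biased distribution $\cH$. Fix an adversary $\cA$ making $Q=2^{o(\secp^c)}$ queries, $\delta=\secp^{-e}$, and $\minent=o(\secp^c)$. Set $\alpha:=\delta\cdot 2^{-\minent}$, so that $\alpha^{-1}=2^{o(\secp^c)}$, and set $\beta:=\alpha/(4n)$. If $h$ lies in the event of the definition (acceptance probability $\ge\delta$ and conditional min-entropy $\le\minent$), then some correct $\bfx$ satisfies $\Pr[\cA^h=\bfx]\ge\alpha$. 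Hence that event is contained in $\mathsf{Bad}_1\cup\mathsf{Bad}_2$, where
\begin{itemize}
\item $\mathsf{Bad}_1$ is the set of $h$ admitting a correct $\bfx$ with $\Pr[\cA^h=\bfx]\ge\alpha$ that is not $(0.9n,\beta)$-heavily queried, and
\item $\mathsf{Bad}_2$ is the set of $h$ admitting a correct $\bfx$ that is $(0.9n,\beta)$-heavily queried.
\end{itemize}
It then suffices to show both events have negligible probability.

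\emph{Bounding $\mathsf{Bad}_1$.} For $h\in\mathsf{Bad}_1$, fix such an $\bfx_h$ and a set $I$ of $0.1n$ coordinates with $\widetilde w_{i,(x_h)_i}(h)<\beta$. For every $S\subseteq\{(i,(x_h)_i):i\in I\}$, all entries in $S$ are $0$ under $h$, since $\bfx_h$ is correct. Flipping them gives $h\oplus S$. By \Cref{thm:swapping-lemma} and \Cref{thm:trace-dist-euclidian-dist}, $\Pr[\cA^{h\oplus S}=\bfx_h]\ge\alpha/2$.

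The pairs $(h\oplus S,\bfx_h)$ are distinct across all $h\in\mathsf{Bad}_1$ and all $S$: from $(g,\bfx)$ one recovers $h$ by zeroing $g$ on $\{(i,x_i)\}$. So
\[
1\;\ge\;\sum_h\sum_S\Pr[H=h\oplus S]\cdot\tfrac{\alpha}{2}\;=\;\tfrac{\alpha}{2}\sum_h\Pr[H=h](1-p)^{-0.1n},
\]
where the equality uses \eqref{eq:mass}. This gives $\Pr[\mathsf{Bad}_1]\le 2\alpha^{-1}(1-p)^{0.1n}=2^{o(\secp^c)}e^{-\Omega(\sqrt n)}$, which is negligible because $c<1/2$.

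\emph{Bounding $\mathsf{Bad}_2$.} For $h\in\mathsf{Bad}_2$, fix a heavily queried correct $\bfx_h$. Let $M_h$ be a set of exactly $0.9n$ pairs $(i,(x_h)_i)$ with total query norm $\ge\beta$. Apply \Cref{lemma:subcube with square root heavy coordinates} with $k=\lceil\sqrt n/5\rceil$; then $1+\binom k2\le 0.9n$, and the protected set satisfies $|A_k|\le n/50$. Let $T_h:=M_h\setminus A_k$, so $|T_h|\ge 0.8n$, and every entry of $T_h$ is $0$ under $h$.

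Every $g=h\oplus S$ with $S\subseteq T_h$ has at least $\sqrt n/5$ coordinates $i$ with $\widetilde w_{i,(x_h)_i}(g)\ge\tau:=\beta/(8Q\cdot 0.9n)^2$. Also, $\bfx_h\in C$ and $\bfx_h$ is determined by $h$.

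Now bound the right degree. Fix $g$ and let $S_i:=\{x:\widetilde w_{i,x}(g)\ge\tau\}$. By \Cref{lemma:number_of_heavily_queried_coordinates}, $|S_i|\le Q^2/\tau^2=\poly(n)\,Q^6\alpha^{-2}=2^{o(\secp^c)}\le\ell$ for large $\secp$. By \Cref{lem:suitablecodes} item~\ref{property 1} (agreement $(1-\zeta)n=\sqrt n/5$), at most $L=2^{O(\secp^c\log\secp)}$ codewords $\bfx$ qualify. Each such $(g,\bfx)$ arises from at most one $h$, recovered by zeroing $g$ at $\bfx$'s symbols. So each $g$ has at most $L$ preimages $h$, and weighted double counting with \eqref{eq:mass} yields
\[
\Pr[\mathsf{Bad}_2](1-p)^{-0.8n}\le L\cdot 1,\qquad\text{i.e.}\qquad \Pr[\mathsf{Bad}_2]\le L(1-p)^{0.8n}=2^{O(\secp^c\log\secp)}e^{-\Omega(\sqrt n)}=\negl(\secp).
\]

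The main obstacle is this second count, which needs several things to hold simultaneously:
\begin{itemize}
\item the list sizes $|S_i|$ must stay below $\ell=2^{\secp^c}$, which needs $Q$ and $\alpha^{-1}$ to be $2^{o(\secp^c)}$, and this is exactly where the hypotheses on $Q$ and $\minent$ enter;
\item the agreement $\sqrt n/5$ must match $k$;
\item the injectivity of $g\mapsto h$ given $\bfx$ must hold, which relies on flips being only $0\to1$ at symbols of $\bfx_h$.
\end{itemize}
A union bound over $\mathsf{Bad}_1$ and $\mathsf{Bad}_2$ finishes the proof. Since the event is stated for $H$ restricted to $\cH$, I will apply the argument directly with $\Pr_{H\sim\cH}$.
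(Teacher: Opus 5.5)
Your proposal is correct and follows essentially the same route as the paper. Your $\mathsf{Bad}_1$ bound is the light-coordinate reprogramming count of \Cref{thm:must-heavy-query}, and your $\mathsf{Bad}_2$ bound is \Cref{thm:2-query-security} (\Cref{lemma:subcube with square root heavy coordinates} with $k=\lceil\sqrt n/5\rceil$, list recovery for the right degree, and weighted double counting); the only difference is packaging, since you bound $\mathsf{Bad}_1$ directly as an event on $h$ and finish with a union bound, whereas the paper bounds the joint $(h,\bfx)$ event by $2(1-p)^s$ and derives a contradiction via $\Pr[\text{Event}_3]\ge\alpha\cdot\Pr[\text{Event}_1\land\text{Event}_2]$, which gives the same $2\alpha^{-1}(1-p)^{s}$ bound.
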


As mentioned earlier, the biased random oracle used in \Cref{construction:CR-bounded-min-entropy} can be implemented in the standard QROM as described at the start of \Cref{sec:QROM}. Moreover, though perhaps not immediately obvious, the resulting protocol in the QROM retains exactly the same certifiable min-entropy guarantee. So, we have the following corollary.
\begin{corollary}\label[corollary]{cor:CR-for-depth-bounded-adversaries-qrom}
    \Cref{construction:CR-bounded-min-entropy}, instantiated in the QROM, satisfies $(Q,\minent)$-certifiable min-entropy (\Cref{def:certifiable-min-entropy}) for any $Q = 2^{o(\secp^c)}$ and any $\minent(\secp) = o(\secp^{c})$, where $c<1/2$ is a constant parameter of the list-recoverable code (\Cref{sec:ECC-construction}).
\end{corollary}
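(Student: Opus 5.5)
The plan is to reduce any QROM adversary to a family of biased-oracle adversaries indexed by a fixed ``residual'' oracle. We then apply \Cref{thm:CR-for-depth-bounded-adversaries} to each member of the family, with a bound that is uniform over the family, and finally average.

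\emph{Decomposition of the uniform oracle.} Write $p=2^{-d}$. Let $F$ be the uniform random oracle, truncated to $d$ output bits on inputs encoding $[n]\times\Sigma$, and arbitrary on all other inputs. We sample $F$ in two stages. First sample the biased oracle $H$ from $\cH$. Independently, sample $G$: a uniform function into $\{0,1\}^d\setminus\{1^d\}$ on $[n]\times\Sigma$, uniform on the remaining output bits, and uniform on all other inputs. Set $F(u)=1^d$ if $H(u)=1$ and $F(u)=G(u)$ otherwise. A direct check shows this $F$ is exactly uniform: each $u$ independently gets $1^d$ with probability $2^{-d}$, and otherwise a uniform element of the other $2^d-1$ strings. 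Moreover, $\Verify$ depends on $F$ only through $H$. Therefore, for every fixed $g$, both the acceptance probability and the accepted-output distribution of $\cA^{F}$ are functions of $(H,g)$ alone.

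\emph{Simulation.} Fix $g$. We build $\cA_g$ with oracle access to $H$ that perfectly simulates $\cA^{F_{H,g}}$. Each query to $F$ is answered as follows.
\begin{enumerate}
\item Compute $H(u)$ into an ancilla. If only a phase oracle is available, this costs one query, using the standard phase-to-bit conversion with a controlled query conjugated by Hadamards.
\item Controlled on the ancilla, XOR either $1^d$ or $g(u)$ into the answer register. This is a fixed unitary, since $g$ is hardwired.
\item Uncompute $H(u)$ with a second query.
\end{enumerate}
Thus $\cA_g$ makes at most $2Q$ (or $O(Q)$) queries to $H$, which is still $2^{o(\secp^c)}$. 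Its final state, and hence its output distribution, equals that of $\cA^{F_{H,g}}$ for every $H$.

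\emph{Applying the biased-oracle theorem uniformly.} By \Cref{thm:CR-for-depth-bounded-adversaries}, for each $g$ we get
\[
\Pr_H[\mathsf{Bad}_g]\le \negl_g(\secp),
\]
where $\mathsf{Bad}_g$ is the event that the acceptance probability is at least $\delta$ and the accepting min-entropy is at most $\minent$. To conclude
\[
\Pr_F[\mathsf{Bad}]=\mathbb{E}_g\Pr_H[\mathsf{Bad}_g]\le \negl(\secp),
\]
we need the negligible function to be independent of $g$. This uniformity is the main obstacle, because the theorem as stated quantifies the negligible function per adversary.

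I would resolve it by inspecting the proof of \Cref{thm:CR-for-depth-bounded-adversaries}. The exceptional probability there is bounded by explicit quantities, namely $2\alpha^{-1}(1-p)^{0.1n}$ and $L(1-p)^{4n/5}$ with $\alpha=\delta 2^{-\minent}$. These depend only on $(\secp,Q,\delta,\minent)$, not on the adversary's internal unitaries, since the reprogramming arguments hold pointwise for any $Q$-query algorithm. Hence a single negligible bound holds simultaneously for all $\cA_g$.

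Simply folding $g$ into the adversary's randomness would not work. That would average the output distribution over $g$, whereas the definition requires min-entropy conditioned on the full oracle $F$. This is precisely why the argument fixes $g$ before averaging.
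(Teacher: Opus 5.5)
Your proposal is correct and follows essentially the same route as the paper's proof in Appendix~\ref{sec:app}. The paper also splits the uniform oracle into the biased oracle $H$ plus independent residual randomness $R$, and for each fixed $r$ simulates $B^{F_{h,r}}$ by an $H$-oracle algorithm $A_r$ using two queries per $F$-query. It then applies \Cref{thm:CR-for-depth-bounded-adversaries} with a single negligible bound valid for every $2Q$-query adversary, and averages over $r$. Your justification of that uniformity, via the explicit adversary-independent bounds $2\alpha^{-1}(1-p)^{s}$ and $L(1-p)^{4n/5}$, is more explicit than the paper's, which simply asserts that its proof yields the same bound for every $Q'$-query adversary.
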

We include a proof of Corollary~\ref{cor:CR-for-depth-bounded-adversaries-qrom} in Appendix~\ref{sec:app}. The rest of \Cref{sec:main} is devoted to proving \Cref{thm:CR-for-depth-bounded-adversaries}.

\medskip
Let $(\RO, \bfX)$ be the random variables referring to the random oracle and $\cA$'s output respectively, and let $(\ro, \bfx)$ be generic values that they take. $(\RO, \bfX)$ are jointly sampled from the following distribution. First $\RO \gets \cH$, and then $\bfX \gets \cA^\RO(1^\secp, 1^{\minent})$.

\subsection{Low-entropy adversaries heavily query correct answers.}
First, we prove that if the adversary is able to output a correct answer $\bfx$ with high probability, then they must have given non-negligible query norm to most symbols of $\bfx$. Intuitively, this is because the adversary must check that most symbols of $\bfx$ actually hash to $\mathbf{0}$; otherwise, they would output $\bfx$ even when those symbols hash to $1$.
\begin{lemma}\label[lemma]{thm:must-heavy-query}
    Let $\alpha\in (0,1]$, $p \in (0,1)$, and $s \in [n]$. Let $Q$ be the query complexity of $\cA$.

        Recall that $\RO$ is a $\bias$-biased oracle. Then, over the randomness of $(\RO, \bfX)$, the probability that we sample a pair $(\ro, \bfx)$ that satisfies all of the following conditions is at most $2\cdot (1-\bias)^{s}$:
    \begin{enumerate}
        \item There exists a set $I \subseteq [n]$, $|I|=s$, such that for all $i \in I$,
        \[\widetilde{w}_{i, x_i}(\ro) < \frac{\alpha}{4 n}\,.\]

        \item $\bfx$ has high probability given $\ro$: $\Pr[\bfX = \bfx |\RO = \ro] \geq \alpha$.
        \item $\bfx$ is correct: $\bfx \in C$ and $\ro(\bfx) = \mathbf{0}$.
    \end{enumerate}
\end{lemma}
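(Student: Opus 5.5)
We follow the weighted-counting reprogramming argument sketched in \Cref{sec:claim1-intro}. Call a pair $(\ro,\bfx)$ \emph{bad} if it satisfies conditions 1--3. We want to bound $\sum_{(\ro,\bfx)\text{ bad}}\Pr[\RO=\ro]\Pr[\bfX=\bfx\mid\RO=\ro]$. For each bad pair, fix a canonical witness set $I=I(\ro,\bfx)$ of size $s$ as in condition 1, and let $T=T(\ro,\bfx):=\{(i,x_i):i\in I\}$. By correctness, $\ro$ is $0$ on every element of $T$. For each $S\subseteq T$, we consider the reprogrammed oracle $\ro^S:=\ro\oplus S$, which flips these zeros to ones.

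The first step is stability of the output probability. Since $\ro$ and $\ro^S$ differ only on $S\subseteq T$, the Swapping Lemma (\Cref{thm:swapping-lemma}) together with \Cref{thm:trace-dist-euclidian-dist} bounds the total-variation distance between the output distributions of $\cA^{\ro}$ and $\cA^{\ro^S}$:
\[
\Bigl\|\ket{\psi^{\ro}_Q}-\ket{\psi^{\ro^S}_Q}\Bigr\|_2\le 2\sum_{(i,x_i)\in S}\widetilde{w}_{i,x_i}(\ro)< 2s\cdot\frac{\alpha}{4n}\le\frac{\alpha}{2}.
\]
Consequently $\Pr[\bfX=\bfx\mid\RO=\ro^S]\ge\alpha/2\ge\tfrac12\Pr[\bfX=\bfx\mid\RO=\ro]\cdot\alpha$. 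It is cleaner to use the simpler form $\Pr[\bfX=\bfx\mid \RO=\ro^S]\ge \tfrac12\Pr[\bfX=\bfx\mid\RO=\ro]$, which follows from $\Pr[\bfX=\bfx\mid\ro^S]\ge \Pr[\bfX=\bfx\mid\ro]-\alpha/2\ge \tfrac12\Pr[\bfX=\bfx\mid\ro]$, using condition 2.

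The second step is injectivity. The map $(\ro,\bfx,S)\mapsto(\ro^S,\bfx)$ restricted to bad pairs is injective: given $(g,\bfx)$ with $g=\ro^S$, we recover $\ro$ by resetting $g$ to $0$ on all $(i,x_i)$, $i\in[n]$, since $\ro$ is $0$ there by correctness and the reprogramming only touched such coordinates. This recovers $\ro$ and hence $S$. In particular, for distinct bad pairs and distinct $S$, the images $(\ro^S,\bfx)$ are distinct.

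The third step is the weighted count. Since $\Pr[\RO=\ro^S]=\Pr[\RO=\ro]\,(p/(1-p))^{|S|}$, summing over $S\subseteq T$ gives a factor $(1-p)^{-s}$ by \eqref{eq:mass}. Thus
\[
1\ge\sum_{(\ro,\bfx)\text{ bad}}\sum_{S\subseteq T}\Pr[\RO=\ro^S]\Pr[\bfX=\bfx\mid\ro^S]\ge\frac{(1-p)^{-s}}{2}\sum_{(\ro,\bfx)\text{ bad}}\Pr[\RO=\ro]\Pr[\bfX=\bfx\mid\ro].
\]
The right-hand side is $\tfrac12(1-p)^{-s}\Pr[(\RO,\bfX)\text{ bad}]$, so $\Pr[(\RO,\bfX)\text{ bad}]\le2(1-p)^s$, matching the claimed bound without any $1/\alpha$ factor.

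The main obstacle is the injectivity and disjointness step. We must ensure that $I$ is fixed canonically per pair, and that recovering $\ro$ does not require knowing $I$. This works because resetting all $n$ symbol positions of $\bfx$ to $0$ undoes any $S$, and because correctness guarantees those positions were $0$ in $\ro$.
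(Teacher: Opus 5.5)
Your proof is correct and follows the paper's argument essentially step for step. Both use the swapping lemma plus \Cref{thm:trace-dist-euclidian-dist} to get $\Pr[\bfX=\bfx\mid\RO=\ro^S]\ge\tfrac12\Pr[\bfX=\bfx\mid\RO=\ro]$, disjointness of the reprogrammed families by resetting the oracle to $\mathbf{0}$ on the symbols of $\bfx$, and the weighted count $\sum_{S\subseteq T}(p/(1-p))^{|S|}=(1-p)^{-s}$ to conclude $\Pr[\text{bad}]\le 2(1-p)^s$. The intermediate inequality carrying the extra factor of $\alpha$ is harmless but unnecessary, as you note yourself.
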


\begin{proof}
Let $S$ be the set of all pairs $(\ro, \bfx)$ that satisfy the conditions of \Cref{thm:must-heavy-query}. $\Pr[S]$ is the probability that all the conditions of \Cref{thm:must-heavy-query} are satisfied, over the randomness of sampling $\ro \gets \cH$ and $\bfx \gets \cA^\ro$. We will show that $\Pr[S] \leq 2\cdot (1-\bias)^s$.
    
    Next, for each $(\ro,\bfx) \in S$, let us construct a set $T_{\ro, \bfx}$ of pairs $(\ro', \bfx)$. 
    
    Fix a set $I \subseteq [n]$ of size $s$ such that for all $i \in I$, $\widetilde{w}_{i, x_i}(\ro) < \frac{\alpha}{4n}$.
    To construct $T_{\ro, \bfx}$, choose any (potentially empty) subset of $\{(i, x_i)\}_{i \in I}$ and reprogram $\ro$ on these inputs, flipping the output from $0$ to $1$. Let us call the new oracle $\ro'$. Let $T_{\ro, \bfx}$ be the set of all pairs $(\ro', \bfx)$ that can be constructed in this way. 
    
    \paragraph{Properties:} 
    \begin{itemize}
        \item The sets $T_{\ro, \bfx}$ are disjoint. That is to say, for any two distinct pairs $(h_1, \bfx_1)$ and $(h_2, \bfx_2)$ in $S$, 
    \[T_{h_1, \bfx_1} \cap T_{h_2, \bfx_2} = \emptyset\;.\]
    Indeed, for any $(\ro', \bfx') \in T_{\ro, \bfx}$, $\bfx = \bfx'$, and $\ro$ can be computed from $(\ro', \bfx')$ by reprogramming $\ro'$ to map $\bfx'$ to $\mathbf{0}$. This procedure computes $(\ro, \bfx)$ from any $(\ro', \bfx') \in T_{\ro, \bfx}$, so the sets $T_{\ro, \bfx}$ must be disjoint.

    \item 
        For any $(\ro', \bfx) \in T_{\ro, \bfx}$,
        \begin{equation}\label{eq:X is likely under h'}
        \Pr[\bfX = \bfx|\RO = \ro']\ge \frac{1}{2} \cdot \Pr[\bfX = \bfx|\RO = \ro] \,.
        \end{equation}
    This follows from the swapping lemma (\Cref{thm:swapping-lemma}) and \Cref{thm:trace-dist-euclidian-dist}. The adversary samples their output $\bfx$ by applying a measurement $M$ to their final state (either $\ket{\psi_Q^\ro}$ or $\ket{\psi_Q^{\ro'}}$).
%

    \begin{align*}
        \Pr[\bfX = \bfx|\RO = \ro] - \Pr[\bfX = \bfx|\RO = \ro'] &\leq \mathsf{TraceDist}\left[\ket{\psi_Q^{\ro'}}, \ket{\psi_Q^{\ro}}\right]\\
        &\leq \left\|\ket{\psi_Q^{\ro'}} -  \ket{\psi_Q^{\ro}}\right\|_2 \tag*{(\Cref{thm:trace-dist-euclidian-dist})}\\
        &\leq 2\sum_{j \in [Q]}\sum_{\substack{(i, x_i) : \\\ro(i, x_i) \neq \ro'(i, x_i)}} \widetilde{w}_{i, x_i}^j(\ro) \tag*{(\Cref{thm:swapping-lemma})}\\
        &\leq 2\cdot \frac{\alpha}{4n} \cdot |I|
        \\&\leq \frac{\alpha}{2}\\
        &\leq  \frac{1}{2} \cdot \Pr[\bfX = \bfx|\RO = \ro],
    \end{align*}
    which yields \Cref{eq:X is likely under h'} upon rearrangement.
    \end{itemize}

    \paragraph{Finishing the proof:} Now, we will show that $\Pr[S]\leq 2\cdot (1-\bias)^{s}$.
    
    Let $\Pr[T_{\ro, \bfx}]$ be over the randomness of sampling $\ro' \gets \cH$ and $\bfx \gets \cA^{\ro'}$ given $\ro'$. Since the sets $\left(T_{\ro, \bfx}\right)_{(\ro, \bfx) \in S}$ are mutually disjoint,
    \begin{align*}
        1 &\geq \sum_{(\ro, \bfx) \in S} \Pr[T_{\ro, \bfx}]\\
        &= \sum_{(\ro, \bfx) \in S} \sum_{(\ro', \bfx) \in T_{\ro, \bfx}} \Pr[\RO=\ro'] \cdot \Pr[\bfX = \bfx|\RO = \ro']\\
        &\geq \sum_{(\ro, \bfx) \in S} \sum_{(\ro', \bfx) \in T_{\ro, \bfx}} \Pr[\RO=\ro'] \cdot \frac{1}{2} \cdot \Pr[\bfX = \bfx|\RO = \ro]\tag{By \Cref{eq:X is likely under h'}}\\
        &= \sum_{(\ro, \bfx) \in S}  \Pr[\bfX = \bfx|\RO = \ro]\cdot \frac{1}{2} \cdot \sum_{(\ro', \bfx) \in T_{\ro, \bfx}} \Pr[\RO=\ro']\,.
        \end{align*}
        We compare the internal sum to $\Pr[H=h]$.
        Since $H$ is $\bias$-biased, $\Pr[H(i,x) = 1] = p$ for any $(i,x) \in [n] \times \Sigma$. If $h'$ is obtained from $h$ by reprogramming  a subset of coordinates $J \subseteq I$ from $0$ to $1$, then 
        \[\Pr[\RO=\ro'] = \Pr[\RO=\ro] \cdot \left(\tfrac{\bias}{1-\bias}\right)^{|J|}.\]
        Summing over all subsets $J \subseteq I$, we get 
        \begin{align*}
            \sum_{(\ro',\bfx) \in T_{\ro, \bfx}}\Pr[\RO=\ro'] &= \Pr[\RO=\ro] \cdot \sum_{J\subseteq I} \left(\tfrac{\bias}{1-\bias}\right)^{|J|}\\
            &= \Pr[\RO=\ro] \cdot \sum_{k = 0}^{s}\sum_{\substack{J\subseteq I\\:|J| = k}} \left(\tfrac{\bias}{1-\bias}\right)^k\\
            &= \Pr[\RO=\ro] \cdot \sum_{k = 0}^{s} {s \choose k} \cdot \left(\tfrac{\bias}{1-\bias}\right)^k \cdot \left(\tfrac{1-p}{1-p}\right)^{s-k}\\
            &= \Pr[\RO=\ro] \cdot \left(\tfrac{p}{1-\bias} + \tfrac{1-p}{1-p}\right)^{s}\\
            &= \Pr[\RO=\ro] \cdot \left(\tfrac{1}{1-\bias}\right)^{s}\,.
        \end{align*}
        Plugging this back into the above inequality gives
        \begin{align*}
            1 &\geq \sum_{(\ro, \bfx) \in S} \Pr[T_{\ro, \bfx}]\\
            &\geq \sum_{(\ro, \bfx) \in S}  \Pr[\bfX = \bfx|\RO = \ro]\cdot \frac{1}{2} \cdot \sum_{(\ro', \bfx) \in T_{\ro, \bfx}} \Pr[\RO=\ro'] \\
            &= 
            \sum_{(\ro, \bfx) \in S}  \Pr[\bfX = \bfx|\RO = \ro] \cdot \frac{1}{2} \cdot \Pr[\RO=\ro]  \cdot \left(\tfrac{1}{1-\bias}\right)^{s}\\
            &= \Pr[S] \cdot \frac{1}{2} \cdot \left(\tfrac{1}{1-\bias}\right)^{s}\,.
        \end{align*}
    Rearranging this last display shows that the conditions of \Cref{thm:must-heavy-query} are satisfied with probability $\Pr[S] \leq 2\cdot (1-\bias)^{s}$.
\end{proof}

\subsection{Query-bounded adversaries cannot heavily query a correct answer.}

Here, we prove that if the adversary's query complexity is $2^{o(\secp^c)}$, then with overwhelming probability over $\RO$, $\cA^\RO$ does not \textit{heavily query} any correct answer $\bfx$. Heavy querying means that many symbols of $\bfx$ each receive large query norm from $\cA^\RO$.

\begin{lemma}\label[lemma]{thm:2-query-security}
    Let $\alpha = \alpha(\secp)$ be any function satisfying $\alpha(\secp) \ge 2^{-o(\secp^c)}$. Let $Q = Q(\secp)$ be any function satisfying $Q(\secp) = 2^{o(\secp^c)}$.  
    Let $\cA$ be any quantum  query algorithm making at most $Q$ queries.
    Then, with probability at least $1-L\cdot (1-\bias)^{4n/5}$ over the randomness of $\RO$, $\RO$ satisfies the following condition: 
    \begin{quote}
        For all $\bfx$ such that $\bfx$ is correct ($\bfx \in C$ and $\RO(\bfx) = \mathbf{0}$), there exists a set $I \subseteq [n]$, $|I|= \lceil{n/10\rceil}$, such that for all $i \in I$,
        \[\widetilde{w}_{i, x_i}(\RO) < \frac{\alpha}{4 n}\,.\]
    \end{quote}
\end{lemma}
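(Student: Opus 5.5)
We will bound the probability mass of ``bad'' oracles $\ro$: those for which some correct $\bfx$ has, for every $I$ of size $\lceil n/10\rceil$, some $i\in I$ with $\widetilde{w}_{i,x_i}(\ro)\ge \beta:=\alpha/(4n)$. Equivalently, the set
\[M_{\ro,\bfx}:=\{(i,x_i): \widetilde{w}_{i,x_i}(\ro)\ge\beta\}\]
has size $|M_{\ro,\bfx}|> n-\lceil n/10\rceil$, so $|M_{\ro,\bfx}|\ge 4n/5$ for large $n$. For each bad $\ro$ we fix one such $\bfx_\ro$. The argument is a weighted double-counting over a bipartite graph: bad oracles on the left, all oracles on the right.

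\textbf{Edges.} Apply \Cref{lemma:subcube with square root heavy coordinates} to $M=M_{\ro,\bfx_\ro}$ with $k=\lceil\sqrt n/5\rceil$. This is valid since $1+\binom{k}{2}\le n/50+1\le |M|$. The lemma gives a protected set $A_k$ with $|A_k|\le n/50$, and we let $T\subseteq M\setminus A_k$ have $|T|\ge 4n/5 - n/50$. Because $\bfx_\ro$ is correct, $\ro$ is $0$ on all of $T$, so every $\ro'\in\F_\ro(T)$ is obtained by flipping zeros to ones. For each such $\ro'$, at least $k\ge \sqrt n/5$ coordinates $i$ satisfy $\widetilde{w}_{i,x_i}(\ro')\ge \tau:=\beta/(8Qn)^2$. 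We draw an edge $(\ro,\ro')$ for each $\ro'\in\F_\ro(T)$. By \eqref{eq:mass}, the left-weight satisfies
\[\sum_{\ro'}\Pr[\RO=\ro']=\Pr[\RO=\ro]\,(1-\bias)^{-|T|}.\]
Taking $T$ large enough, which is slightly tightened by shrinking the $A_k$ loss or absorbing it, yields the stated exponent $4n/5$.

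\textbf{Right-degree.} Fix $\ro'$. Each left neighbour $\ro$ is recovered from the pair $(\ro',\bfx_\ro)$ by resetting $\ro'$ to $0$ on $\{(i,(\bfx_\ro)_i)\}$. Hence the number of left neighbours is at most the number of codewords $\bfx\in C$ that have at least $\sqrt n/5$ coordinates with $\widetilde{w}_{i,x_i}(\ro')\ge\tau$. Set $S_i=\{x:\widetilde{w}_{i,x}(\ro')\ge\tau\}$. By \Cref{lemma:number_of_heavily_queried_coordinates}, $|S_i|\le Q^2/\tau^2=\poly(Q,n,1/\alpha)=2^{o(\secp^c)}\le\ell$. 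List recovery (\Cref{lem:suitablecodes}, with $(1-\zeta)n=\sqrt n/5$) then gives at most $L$ such codewords.

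\textbf{Conclusion.} Summing $\Pr[\RO=\ro]$ over edges in two ways:
\[\Pr[\text{bad}]\,(1-\bias)^{-4n/5}\le \sum_{\text{edges }(\ro,\ro')}\Pr[\RO=\ro]\,(1-\bias)^{-|T|}\cdot\frac{\Pr[\RO=\ro']}{\Pr[\RO=\ro](1-\bias)^{-|T|}}\le L\sum_{\ro'}\Pr[\RO=\ro']=L.\]
This yields the bound. The main obstacle is the parameter bookkeeping: we must check that $k\ge\sqrt n/5$ fits within $|M|$, that the protected set and $|T|$ still reach the exponent $4n/5$ (the stated count may need rounding of constants), and that $Q^2/\tau^2\le\ell=\lceil 2^{\secp^c}\rceil$, which uses $Q,1/\alpha=2^{o(\secp^c)}$.
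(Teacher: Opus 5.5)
Your proposal is correct and follows the paper's proof essentially step for step: the same application of \Cref{lemma:subcube with square root heavy coordinates} with $k=\lceil\sqrt n/5\rceil$, the same recovery of $\ro$ from $(\ro',\bfx_\ro)$ combined with list recovery to bound each oracle's multiplicity by $L$, and the same weighted count via \eqref{eq:mass}. The only loose end is your bookkeeping for $|T|$, and it disappears if you keep $|M|\ge n-\lceil n/10\rceil+1\ge 9n/10$ instead of weakening it to $4n/5$: then $|M\setminus A_k|\ge 0.9n-\binom{k}{2}\ge 0.88n-O(\sqrt n)\ge 4n/5$ for large $n$, with nothing left to shrink or absorb.
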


\begin{proof}Put $s = \lceil{n/10\rceil}$.
    Let $S$ be the set of oracles that violate the condition of \Cref{thm:2-query-security}. Our goal is to show that $\Pr[S] \leq L\cdot (1-\bias)^{4n/5}$. This implies that with probability $\geq 1 - L\cdot (1-\bias)^{4n/5}$, over the randomness of $H$, the condition of \Cref{thm:2-query-security} is satisfied.

    \paragraph{Constructing the bad set:} For any $\ro \in S$, we will construct a set of oracles $T_{\ro}$ that includes $\ro$ as well as exponentially-many ``bad'' oracles $\ro'$ where $\cA^{\ro'}$ wastes query norm on an incorrect answer $\bfx_\ro$. 

    First, given an $\ro\in S$, pick a value $\bfx_\ro = (x_1, \ldots, x_n)$ and a set $I_\ro \subseteq [n]$ such that: 
    \begin{enumerate}
        \item $\bfx_\ro$ is correct ($\bfx_\ro \in C$ and $\ro(\bfx_\ro) = \mathbf{0}$), and
        \item $|I_\ro| \geq n - s + 1$, and for all $i \in I_\ro$,
        \[\widetilde{w}_{i, x_i}(\ro) \geq \frac{\alpha}{4n}\,.\]
    \end{enumerate}
    Such values of $(\bfx_\ro, I_\ro)$ must exist because $h$ violates the condition of \Cref{thm:2-query-security}. 

    Second, we apply \Cref{lemma:subcube with square root heavy coordinates} with $M = \{(i,x_i): i\in I_\ro\}$, $\beta = \tfrac{\alpha}{4n}$, $k := \lceil{\sqrt{n}/5\rceil}$, and sufficiently large $n$. 
   
    Then there exists a set $A_k \subseteq M$ such that 
    $$|M\setminus A_k|\ge n-s+1-\binom{k}{2} \ge \frac{4n}{5}$$
    and for each $\ro'\in \F_\ro(M\setminus A_k)$ there are at least $\sqrt{n}/5$ values of $i\in I_\ro$ for which 
    $$\widetilde{w}_{i,x_i}(h')\ge \frac{\beta}{(8Q|M|)^2} \ge  \frac{\beta}{(8Qn)^2} = \frac{\alpha}{256Q^2n^3}.$$
    
    Third, we take 
    $$T_{\ro} = \F_\ro(M\setminus A_k).$$

    Finally, let us say that \textbf{$\cA^\ro$ heavily queries $\bfx$} if there exists a set $I_1 \subseteq [n]$ of size $|I_1| \geq \sqrt{n}/5$ such that for all $i \in I_1$,
    \[ \widetilde{w}_{i, x_i}(\ro) \geq \frac{\alpha}{256Q^2n^3}\,.\]
    The previous discussion shows that for each $\ro' \in T_\ro$, $\cA^{\ro'}$ heavily queries $\bfx_\ro$.

    \begin{claim}
    For sufficiently large $\secp$, the number of codewords that $\cA^{\ro'}$ heavily queries is at most $L$.
    \end{claim}
    \begin{proofofclaim}
    By \Cref{lemma:number_of_heavily_queried_coordinates}, for any $\eta>0$, there are at most $(Q^2/\eta^2)$ inputs $(i,x)\in [n]\times \Sigma$ with query norm $\widetilde{w}_{i, x}(\ro') \geq \eta$. Picking $\eta = \frac{\alpha}{256Q^2n^3}$, gives, for sufficiently large $\secp$, at most \[
    Q^2 \cdot \left(\frac{256 Q^2 n^3}{\alpha}\right)^2 \le 2^{o(\lambda^c)} \le \ell\;
    \]
    coordinates $(i,x)$ such that $\widetilde{w}_{i,x}(\ro) \ge \frac{\alpha}{256 Q^2 n^3}$.
   
         If $\cA^{\ro'}$ heavily queries $\bfx$ then at least $\sqrt{n}/5$ coordinates of $\bfx$ belong to the above list of inputs that have query norm $\geq \frac{\alpha}{256 Q^2 n^3}$.
        Then, the list recoverability property of $C$ (\Cref{def:list-recovery,lem:suitablecodes}) says that the number of codewords $\bfx$ that satisfy this property is $\leq L$.
    \end{proofofclaim}

    \begin{claim}
        Any $\ro'$ belongs to $T_{\ro}$ for at most $L$ values of $\ro \in S$.
    \end{claim}
    \begin{proofofclaim}
        Assume toward contradiction that there are $L+1$ distinct values of $\ro \in S$ such that $\ro' \in T_{\ro}$. Call them $\{h_j: j \in [L+1]\}$. Let $\bfx_j$, also known as $\bfx_{\ro_j}$, be the string used to construct $T_{h_j}$. We know that $\cA^{\ro'}$ heavily queries each $\bfx_j$, but there are at most $L$ codewords that $\cA^{\ro'}$ heavily queries. By the pigeonhole principle, there are two different values $j, j' \in [L+1]$ such that $\bfx_j = \bfx_{j'}$.
        
        Furthermore, $h_{j}$ can be constructed from $\ro'$ and $\bfx_{j}$ by reprogramming $\ro'$ to map $\bfx_{j}$ to $\mathbf{0}$. By the same procedure, $h_{j'}$ can be constructed from $\ro'$ and $\bfx_{j'}$. Since $\bfx_j = \bfx_{j'}$, this implies that $h_j = h_{j'}$ because they are constructed from $(\ro', \bfx_j)$ by the same procedure. In summary $(h_j, \bfx_j) = (h_{j'}, \bfx_{j'})$---a contradiction. Therefore the initial assumption must be false, so in fact there are at most $L$ values of $\ro\in S$ for which $\ro' \in T_{\ro}$.
    \end{proofofclaim}

 Now we can finish the proof. Since each $\ro' \in \cH$ belongs to at most $L$ different sets $T_{\ro}$,
    \begin{align*}
        L &\geq \sum_{\ro \in S} \Pr[\RO\in T_{\ro}]\\
        &= \sum_{\ro\in S}\Pr[\RO=\ro] \cdot \sum_{\ro'\in T_{\ro}} \frac{\Pr[\RO=\ro']}{\Pr[\RO=\ro]}\,.
    \end{align*}
    We claim that for any $\ro\in S$,
    \[\sum_{\ro'\in T_{\ro}} \frac{\Pr[\RO=\ro']}{\Pr[\RO=\ro]} \ge \left(\tfrac{1}{1-\bias}\right)^{n-s+1-\binom{k}{2}}\,.
    \]
This holds since $T_{\ro}$ is obtained by considering the reprogrammings of all subsets of $M' = M\setminus A_k$. For each $J\subseteq M'$ we have 
$$\frac{\Pr[\RO=\ro\oplus J]}{\Pr[\RO=\ro]} = \left(\frac{\bias}{1-\bias}\right)^{|J|}.$$
Overall,
\[\sum_{\ro'\in T_{\ro}} \frac{\Pr[\RO=\ro']}{\Pr[\RO=\ro]}
= \sum_{J\subseteq M'}\frac{\Pr[\RO=\ro\oplus J]}{\Pr[\RO=\ro]}
= \sum_{J\subseteq M'}\left(\tfrac{\bias}{1-\bias}\right)^{|J|} = \left(\tfrac{1}{1-\bias}\right)^{|M'|}\,.\]
Plugging this into the above inequality gives
\begin{align*}L &\geq \sum_{\ro\in S} \Pr[\RO\in T_{\ro}]\\
        &= \sum_{\ro\in S}\Pr[\RO=\ro] \cdot \sum_{\ro'\in T_{\ro}} \frac{\Pr[\RO=\ro']}{\Pr[\RO=\ro]}
        \\
        &\ge \Pr[S]\cdot \left(\tfrac{1}{1-\bias}\right)^{|M'|}, 
        \end{align*}
        which after rearranging yields
        \[\Pr[S]\le L \cdot (1-\bias)^{|M'|} \le L \cdot (1-\bias)^{4n/5}.\qedhere\]
\end{proof}

\subsection{Finishing the proof}

Let us assume toward contradiction that \Cref{thm:CR-for-depth-bounded-adversaries} is false. 
Then there is some $Q = 2^{o(\secp^c)}$, some $\minent = o(\secp^{c})$, some adversary $\cA$ making $Q$ queries, and some inverse polynomial function $\delta(\secp)$ such that 

%
\begin{align*}
    \Pr_{\RO}\left[\left(\Pr\left[\Verify^\RO [1^\secp, 1^\minent, \cA^\RO(1^\secp, 1^\minent)] \neq \bot\right] \geq \delta(\secp)\right) \land \left(\Minent\left(\cA^\RO_\top (1^\secp, 1^\minent)\right) \leq \minent(\secp)\right)\right] = \beta(\secp)
\end{align*}
for some non-negligible $\beta(\secp)$.

Let us define
$\alpha(\secp) = \delta(\secp) \cdot 2^{-\minent(\secp)}$ and recall that $s = \lceil{n/10\rceil}$.

Let us also define two events based on $\RO$:
\begin{itemize}
    \item Let $\text{Event}_1$ be the event that $\RO$ satisfies:
    \[\Pr\left[\Verify^\RO [1^\secp, 1^\minent, \cA^\RO(1^\secp, 1^\minent)] \neq \bot\right] \geq \delta(\secp) \land \Minent\left(\cA^\RO_\top (1^\secp, 1^\minent)\right) \leq \minent(\secp).\]
    (Recall that we assumed this event happens with some non-negligible probability $\beta(\secp)$.)
    \item Let $\text{Event}_2$ be the event that $\RO$ satisfies the following condition:

    For all $\bfx$ such that $\bfx$ is correct ($\bfx \in C$ and $\RO(\bfx) = \mathbf{0}$), there exists a set $I \subseteq [n]$ of size $s$ such that for all $i \in I$,
        \[\widetilde{w}_{i, x_i}(\RO) < \frac{\alpha}{4 n}\,.\]
\end{itemize}

Finally, let us define an event based on $(\RO, \bfX)$ as the event  that the conditions of \Cref{thm:must-heavy-query} (repeated below for convenience) are satisfied.
\begin{itemize}
    \item Let $\text{Event}_3$ be the event that $(\RO, \bfX)$ have values $(\ro, \bfx)$ that satisfy:
    \begin{enumerate}
        \item $\cA^\ro$ gives a small total query norm to at least $s$ positions of $\bfx$. I.e., there exists a set $I \subseteq [n]$ of size $s$ such that for all $i \in I$,
        \[\widetilde{w}_{i, x_i}(\RO) < \frac{\alpha}{4 n}\,.\]
        \item $\bfx$ has high probability given $\ro$: $\Pr[\bfX = \bfx |\RO = \ro] \geq \alpha$.
        \item $\bfx$ is correct: $\bfx \in C$ and $\ro(\bfx) = \mathbf{0}$.
    \end{enumerate}
\end{itemize}

As $\alpha(\secp) = \delta(\lambda) \cdot 2^{-\minent(\secp)} = 2^{-o(\secp^c)}$ satisfies the conditions of \Cref{thm:2-query-security}
we get that     $\Pr_\RO[\text{Event}_2]$ is overwhelming, namely \[\Pr_\RO[\text{Event}_2] \ge 1-L\cdot (1-\bias)^{4n/5}\,.\]



With that, we derive a lower bound on $\Pr_{\RO}[\text{Event}_1 \land \text{Event}_2]$.
    \begin{align*}
        \Pr_\RO[\text{Event}_1 \land \text{Event}_2] &\geq \Pr_\RO[\text{Event}_1] - \Pr_\RO[\neg \text{Event}_2]\\
        &\geq \beta(\secp) - L\cdot (1-\bias)^{4n/5}\\ &\geq \beta(\secp)/2\,.
    \end{align*}
    
    where in the last inequality we used the fact that $\beta(\secp)$ is non-negligible while $ L\cdot (1-\bias)^{4n/5}$ is negligible.

Toward contradiction, our goal will be to derive a lower bound on $\Pr_{\RO, \bfX}[\text{Event}_3]$ showing it is at least $\Pr_\RO[\text{Event}_1 \land \text{Event}_2]\cdot \alpha(\secp)$.
Toward this end, we prove the following lemma.
\begin{lemma}\label[lemma]{lemma:E1 and E2 implies E3 for some xh}
    Let $\ro$ be any value of $\RO$ that satisfies $\text{Event}_1$ and $\text{Event}_2$. Then there exists an $\bfx^\ro$ such that $(\ro, \bfx^\ro)$ satisfy $\text{Event}_3$.
\end{lemma}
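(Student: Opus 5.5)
We take $\bfx^\ro$ to be a mode of the accepted-output distribution $\cA^\ro_\top$ and check the three conditions of $\text{Event}_3$ one by one.

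\textbf{Choosing $\bfx^\ro$.} Since $\ro$ satisfies $\text{Event}_1$, the acceptance probability $\Pr[\Verify^\ro[1^\secp,1^\minent,\cA^\ro(1^\secp,1^\minent)]\neq\bot]$ is at least $\delta(\secp)$, and $\Minent(\cA^\ro_\top)\le\minent(\secp)$. By the definition of min-entropy, there is a string $\bfx^\ro$ whose probability under $\cA^\ro_\top$ is at least $2^{-\minent}$. This $\bfx^\ro$ is in the support of the accepted outputs, and $\Verify$ outputs its input $\bfx$ exactly when $\bfx$ is correct. Hence $\bfx^\ro\in C$ and $\ro(\bfx^\ro)=\mathbf 0$, which is condition 3.

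\textbf{Condition 2.} Because $\Verify$ is deterministic and returns its input on acceptance, the event that the verifier outputs $\bfx^\ro$ coincides with the event that $\cA^\ro$ outputs $\bfx^\ro$. Unconditioning on acceptance gives
\[
\Pr[\bfX=\bfx^\ro\mid\RO=\ro]=\Pr[\text{accept}]\cdot\Pr_{\cA^\ro_\top}[\bfx^\ro]\ge\delta(\secp)\,2^{-\minent(\secp)}=\alpha(\secp).
\]

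\textbf{Condition 1.} Since $\ro$ satisfies $\text{Event}_2$ and $\bfx^\ro$ is correct for $\ro$, the definition of $\text{Event}_2$ supplies a set $I$ of size $s$ with $\widetilde w_{i,x_i}(\ro)<\alpha/(4n)$ for all $i\in I$. This is exactly condition 1 of $\text{Event}_3$, with the same $\alpha$.

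No step here is a real obstacle. The one point needing care is the bookkeeping in condition 2: the conditional probability must be turned back into an unconditional one by multiplying by the acceptance probability $\delta(\secp)$. One should also note that $\bfx^\ro$ is a string in $\Sigma^n$ and not $\bot$, since $\bot$ never appears in the accepted distribution.
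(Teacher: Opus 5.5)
Your proposal is correct and follows essentially the same route as the paper. Both take a most-likely accepted output, note that it must be correct because it lies in the support of the accepted distribution, and turn its conditional probability $2^{-\minent}$ into the unconditional bound $\delta\cdot 2^{-\minent}=\alpha$ by multiplying by the acceptance probability; $\text{Event}_2$ then supplies the lightly-queried set $I$.
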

\begin{proof}
The first condition of $\text{Event}_1$ is $\Pr\left[\Verify^\RO [1^\secp, 1^\minent, \cA^\RO(1^\secp, 1^\minent)] \neq \bot\right] \geq \delta(\secp)$, which is equivalent to:
    \[\Pr_\bfX[\bfX \in C \land \RO(\bfX) = \mathbf{0} | \RO = \ro] \geq \delta(\secp).\]
Next, 
\begin{align*}
    \Minent\left(\cA^\RO_\top (1^\secp, 1^\minent)\right) &= - \log\left(\max_{\bfx} \Pr_\bfX\left[\bfX = \bfx | \RO = \ro, (\bfX \in C \land \RO(\bfX) = \mathbf{0})\right]\right)\,.
\end{align*}
Furthermore, for any $\bfx$ that is correct with respect to $\ro$ ($\bfx \in C \land \ro(\bfx) = \mathbf{0}$),
\begin{align*}
    \Pr_\bfX\left[\bfX = \bfx | \RO = \ro, (\bfX \in C \land \RO(\bfX) = \mathbf{0})\right] &= \frac{\Pr_\bfX\left[\bfX = \bfx, (\bfX \in C \land \RO(\bfX) = \mathbf{0}) | \RO = \ro\right]}{\Pr_\bfX\left[\bfX \in C \land \RO(\bfX) = \mathbf{0}|\RO = \ro\right]}\\
    &= \frac{\Pr_\bfX\left[\bfX = \bfx | \RO = \ro\right]}{\Pr_\bfX\left[\bfX \in C \land \RO(\bfX) = \mathbf{0}|\RO = \ro\right]}\\
    &\leq \frac{\Pr_\bfX\left[\bfX = \bfx | \RO = \ro\right]}{\delta(\secp)}\,.
\end{align*}

The second condition of $\text{Event}_1$ is $\Minent\left(\cA^\RO_\top (1^\secp, 1^\minent)\right) \leq \minent(\secp)$, which implies that:
\begin{align*}
    -\log\left(\max_{\bfx} \Pr_\bfX[\bfX = \bfx | \RO = \ro, (\bfX \in C \land \RO(\bfX) = \mathbf{0})]\right) &\leq \minent(\secp),\\
    \text{or equivalently,}\qquad \max_{\bfx} \Pr_{\bfX}[\bfX = \bfx | \RO = \ro, (\bfX \in C \land \RO(\bfX) = \mathbf{0})] &\geq 2^{-\minent(\secp)}\,.
\end{align*}
Then there exists an $\bfx^\ro$ such that
\begin{align*}
    \Pr_\bfX[\bfX = \bfx^\ro | \RO = \ro, (\bfX \in C \land \RO(\bfX) = \mathbf{0})] &\geq 2^{-\minent(\secp)} > 0\,.
\end{align*}
Furthermore, this $\bfx^\ro$ is correct ($\bfx^\ro \in C \land \ro(\bfx^\ro) = \mathbf{0}$) because otherwise, $\Pr_\bfX[\bfX = \bfx^\ro | \RO = \ro, (\bfX \in C \land \RO(\bfX) = \mathbf{0})] = 0$. Then
\begin{align*}
    \Pr_\bfX\left[\bfX = \bfx^\ro | \RO = \ro\right] &\geq \delta(\lambda) \cdot \Pr_\bfX[\bfX = \bfx^\ro | \RO = \ro, (\bfX \in C \land \RO(\bfX) = \mathbf{0})]\\
    &\geq \delta(\lambda) \cdot 2^{-\minent(\secp)}
    \\
    &= \alpha(\secp)\,.
\end{align*}

In summary, if $\text{Event}_1$ occurs, then there exists an $\bfx^\ro$ such that:
\begin{itemize}
    \item $\bfx^\ro$ is correct: $\bfx^\ro \in C \land \ro(\bfx^\ro) = \mathbf{0}$, and 
    \item $\Pr_\bfX\left[\bfX = \bfx^\ro | \RO = \ro\right] \geq \alpha(\secp)$.
\end{itemize}

Furthermore, since this $\bfx^\ro$ is correct, $\text{Event}_2$ implies that this $\bfx^\ro$-value also satisfies the following condition:
\begin{quote}
    There exists a set $I \subseteq [n]$ of size $s$ such that for all $i \in I$,
        \[\widetilde{w}_{i, \bfx^{\ro}_i}(\RO) < \frac{\alpha}{4 n}\,.\]
\end{quote}
This shows that $(h, \bfx^h)$ satisfy all the conditions of $\text{Event}_3$.
\end{proof}

    By \Cref{lemma:E1 and E2 implies E3 for some xh}, if $\text{Event}_1 \land \text{Event}_2$ occur for some $\ro$, then there exists some special $\bfx^\ro$ such that $(\ro, \bfx^\ro)$ satisfy the conditions of $\text{Event}_3$. Furthermore, $\text{Event}_3$ guarantees that:
    \[\Pr_\bfX\left[\bfX = \bfx^\ro | \RO = \ro\right] \geq \alpha(\secp)\,.\]
    
    One way to satisfy the conditions of $\text{Event}_3$ is to sample an $\ro$ that satisfies the conditions of $\text{Event}_1 \land \text{Event}_2$ and then to sample the particular $\bfx^\ro$ value that is guaranteed to satisfy $\text{Event}_3$. The probability of sampling values $(\bfh, \bfx^\ro)$ of this form is
    \[\geq \Pr_\RO[\text{Event}_1 \land \text{Event}_2] \cdot \alpha(\secp) \geq \frac{\beta(\secp)\cdot \alpha(\secp)}{2} \]
    and thus 
    \[\Pr_{\RO, \bfX}[\text{Event}_3] \geq \frac{\beta(\secp)\cdot \alpha(\secp)}{2}.\]
    
On the other hand, by \Cref{thm:must-heavy-query}, 
\[\Pr_{\RO, \bfX}[\text{Event}_3] \le 2\cdot (1-\bias)^{s}\,.\]
Combining the two inequalities, we get that 
\[\beta(\secp) \le \tfrac{4}{\alpha(\secp)} \cdot (1-\bias)^{s} \le O(e^{-\bias n/10}\cdot 2^{o(\secp^c)}) = \exp(-\Omega(\sqrt n)) = \negl(\secp),\]
which is a contradiction to the fact that $\beta(\secp)$ is non-negligible.

\section{AI Disclosure}

ChatGPT 5.6 was used to refine the parameter choices of the family of FRS codes. All other techniques in
this work were human-generated. The authors also brainstormed ideas in conversation with AI, but none of those techniques were used in the paper.
Brainstorming with AI brought a lot of interesting approaches and a counterexample to stronger variants of \Cref{q:heavy-reprogramming}, but so far the question remains open.

\section{Acknowledgments}\label[section]{sec:acknowledgements}
A.C. thanks Boyang Chen for helpful discussions and is grateful for support from the Google Research Scholar program.
D.K. was supported in part by AFOSR, NSF
CNS-2247727, and a Google Research Scholar award.
S.M. thanks Henry Yuen for valuable discussions.
A.T. was supported by NSF CAREER award CCF-2145474.
This material is based upon work
supported by the Air Force Office of Scientific Research under award number FA9550-23-1-0543.

\bibliographystyle{alpha}
\bibliography{references.bib}

\appendix
\section{Proof of Corollary~\ref{cor:CR-for-depth-bounded-adversaries-qrom}}\label[section]{sec:app}
Here, we show that the min-entropy guarantee of~\Cref{thm:CR-for-depth-bounded-adversaries} also holds, with identical asymptotic parameters, relative to a standard random oracle that is used to instantiate the biased random oracle of~\Cref{construction:CR-bounded-min-entropy}.

The instantiation from a standard random oracle $F$ can be thought of as an encoding
\[
\mathsf{Enc}:\mathcal F\longrightarrow \mathcal H,
\qquad
H_F:=\mathsf{Enc}(F),
\]
where \(F\) is uniform over regular random oracles and \(H_F\) has the desired \(p\)-biased random-oracle distribution. This encoding is the one described at the start of \Cref{sec:QROM}: we assume $p = 2^{-d}$ for some integer $d$; then $H_F(x)$ is defined by representing $x$ as a bitstring, truncating $F(x)$ to $d$ bits, and finally setting $H_F(x) = 1$ if and only if $F(x) = 1^d$. This way, $\Pr[H(x) = 1] = \Pr[F(x) = 1^d] = 2^{-d} = p$.

The encoding map is not invertible, but we can define a map that goes the other way, by including precisely the extra randomness in $F$ that is not determined by $H_F$:
\[
\mathsf{Dec}:\mathcal H\times\mathcal R\longrightarrow \mathcal F,
\qquad
F_{h,r}:=\mathsf{Dec}(h,r),
\]
with the following properties. First,
\[
\mathsf{Enc}(F_{h,r})=h
\qquad
\text{for every }h,r.
\]
Second, for \(H\) drawn from the biased distribution and \(R\) drawn independently,
\begin{equation}
\label{eq:distr-identity}
(H,F_{H,R})
\stackrel{d}{=}
(\mathsf{Enc}(F),F),
\qquad F\leftarrow\text{uniform}.
\end{equation}
Third, for every \emph{fixed} \(r\), a query to \(F_{h,r}\) can be simulated with \emph{two} queries to \(h\): compute $h(x)$, use it to answer the $F_{h,r}$ query, then uncompute $h$.

\medskip
Now, our goal is to establish a min-entropy guarantee on any quantum algorithm $B$ that makes queries to $F$ and solves the problem of \Cref{construction:CR-bounded-min-entropy}, where $F$ is used to instantiate the biased oracle $H$. 

Let us fix any functions $Q(\secp) = 2^{o(\secp^c)}$ and $h_\infty(\secp) = o(\secp^c)$, any $F$-oracle algorithm $B$ that makes at most $Q$ queries to $F$, and any inverse polynomial function $\delta$. Then for any \(r\), define an \(H\)-oracle algorithm $A_r$ as follows:
\begin{equation}
\label{eq:3}
A_r^h:=B^{F_{h,r}} \,.
\end{equation}
$A_r$ simulates each query of \(B\) to $F_{h,r}$ by making two queries to $h$, as described above. Note that $A_r$ makes at most $Q' := 2Q = 2^{o(\secp^c)}$ queries to $H$. 

Additionally, the output of $B^{F_{h,r}}$ has an identical distribution to the output of $A_r^h$. In particular, letting $B_{\top}^{F_{h,r}}$ and $A_{\top,r}^h$ denote the respective distributions of outputs \emph{conditioned} on the output being correct, we have
\[
\max_z \Pr[B_{\top}^{F_{h,r}}=z]= \max_z\Pr[A_{\top,r}^h=z] \,.
\]
Similarly, if validity of an answer is described by the relation \(V(h,z)\), then
\[
\Pr[V(\mathsf{Enc}(F_{h,r}),B^{F_{h,r}}) \neq \bot] = \Pr[V(h,A_r^h) \neq \bot]\,.
\]

\medskip
Then the proof of \Cref{thm:CR-for-depth-bounded-adversaries} implies that there exists a negligible function $\beta(\secp)$ such that for any $r$ and $\secp$,
\begin{equation}
\Pr_{H}\left[\Pr[V(H,A_r^H)\neq \bot] \geq \delta(\secp) \,\,\land \,\,
\max_z\Pr[A_{\top,r}^H=z]\geq2^{-h_\infty(\secp)}
\right] \leq \beta(\secp) \,, \label{eq:22}
\end{equation}
Note that we can use the same negligible function $\beta$ for any adversary $A_r$ because the proof of \Cref{thm:CR-for-depth-bounded-adversaries} derives the same upper-bound $\beta$ for every $Q'$-query adversary.

Then, averaging over $r \gets R$, and using the distribution identity \eqref{eq:distr-identity}, we get
\[
\begin{aligned}
\Pr_F&\left[\Pr[V(\mathsf{Enc}(F),B^F) \neq \bot] \geq \delta(\secp) \,\, \land \,\, \max_z \Pr[B_{\top}^F=z]\geq2^{-h_\infty(\secp)}\right] \\
&=
\Pr_{H,R}\left[\Pr[V(\mathsf{Enc}(F_{H,R}),B^{F_{H,R}}) \neq \bot] \geq \delta(\secp) \,\, \land \,\, \max_z \Pr[B_{\top}^{F_{H,R}}=z]\geq2^{-h_\infty(\secp)}\right]
\\
&=
\mathbb E_R
\Pr_H\left[\Pr[V(H,A_R^H) \neq \bot] \geq \delta(\secp) \,\, \land \,\, \max_z\Pr[A_{\top, R}^H=z]\geq2^{-h_\infty(\secp)}\right]
\\
&\leq \bbE_R[\beta(\secp)] = \beta(\secp).
\end{aligned}
\]
where the last inequality holds by the guarantee of Equation~\eqref{eq:22}.

This shows that the min-entropy guarantee of Equation~\eqref{eq:22} transfers with the same parameters to any $F$-oracle algorithm $B$, up to a factor of $\frac12$ in the number of queries, which does not affect the asymptotic statement.

\end{document}